\theoremstyle{plain}
\newtheorem{theorem}{Theorem}[section]
\newtheorem{lemma}[theorem]{Lemma}
\newtheorem{claim}[theorem]{Claim}
\newtheorem{definition}[theorem]{Definition}
\newtheorem{protocol}{Protocol}
\crefname{step}{Step}{Steps}
\theoremstyle{remark}
\newtheorem{remark}[theorem]{Remark}
\newcommand{\ketbra}[2]{\ket{#1}\!\bra{#2}}
\newcommand{\proj}[1]{\ket{#1}\!\!\bra{#1}}
\newcommand{\R}{\mathbb{R}}
\newcommand{\C}{\mathbb{C}}
\newcommand{\N}{\mathbb{N}}
\newcommand{\Tr}{\mathrm{Tr}}
\newcommand{\reg}[1]{\mathsf{#1}}
\newcommand{\id}{\mathds{1}}
\newcommand{\eps}{\epsilon}
\newcommand{\fidelity}{\mathrm{F}}
\newcommand{\ot}{\otimes}
\newcommand{\sgn}{\mathrm{sgn}}
\newcommand{\Image}{\mathrm{Image}}
\newcommand{\Dom}{\mathrm{Dom}}
\newcommand{\class}[1]{\mathsf{#1}}
\newcommand{\poly}{\mathrm{poly}}
\newcommand{\td}{\mathrm{td}}
\newcommand*{\interact}{\mathord{\leftrightarrows}}
\newcommand{\avgUhlmann}{\textsc{DistUhlmann}}
\title{Local transformations of bipartite entanglement are rigid}
\author[1]{John Bostanci}
\author[2]{Tony Metger}
\author[1]{Henry Yuen}
\affil[1]{Columbia University}
\affil[2]{ETH Zurich}
\begin{document}

\date{}

\maketitle

\begin{abstract}
    Uhlmann's theorem is a fundamental result in quantum information theory that quantifies the optimal overlap between two bipartite pure states after applying local unitary operations (called \emph{Uhlmann transformations}). We 
    show that optimal Uhlmann transformations are \emph{rigid} -- in other words, they must be unique up to some well-characterized degrees of freedom. This rigidity is also \emph{robust}: Uhlmann transformations achieving near-optimal overlaps must be close to the unique optimal transformation (again, up to well-characterized degrees of freedom).  We describe two applications of our robust rigidity theorem: (a) we obtain better interactive proofs for synthesizing Uhlmann transformations and (b) we obtain a simple, alternative proof of the Gowers-Hatami theorem on the stability of approximate representations of finite groups.
\end{abstract}

\section{Introduction}
\label{sec:intro}

Let $\ket{C},\ket{D} \in \C^d \otimes \C^d$ denote bipartite pure states; let $\reg{A}$ denote the first subsystem and $\reg{B}$ denote the second subsystem. What is the closest that one can get to $\ket{D}$ by performing a unitary on subsystem $\reg{B}$ of the state $\ket{C}$? Uhlmann's theorem~\cite{uhlmann1976transition}  quantifies the optimal overlap achievable:
\begin{equation}
    \label{eq:intro-1}
        \fidelity(\rho,\sigma) = \max_U \, | \bra{D} \id \otimes U \ket{C}| \,,
\end{equation}
where $\rho$ and $\sigma$ denote the reduced density matrices on subsystem $\reg{A}$ of $\ket{C}$ and $\ket{D}$ respectively, the function $\fidelity(\rho,\sigma) = \Tr(\sqrt{ \rho^{1/2} \sigma \rho^{1/2}})$ denotes the fidelity between the two states, and the maximization is over all unitary transformations acting on subsystem $\reg{B}$. We call a unitary $U$ achieving equality in \Cref{eq:intro-1} an \emph{Uhlmann transformation}.

Given the ubiquity of Uhlmann's theorem throughout quantum information science, it seems worthwhile to study the mathematical and computational properties of Uhlmann transforms. Many natural questions arise: how unique are Uhlmann transformations? How robust are they to perturbations of the underlying states $\ket{C},\ket{D}$? What is the complexity of performing Uhlmann transformations on a quantum computer? Can difficult Uhlmann transformations be used for cryptography? The latter two questions were recently studied by Metger and Yuen~\cite{metger2023stateqip} and Bostanci, et al.~\cite{bostanci2023unitary}, who investigate a theory of state and unitary complexity, respectively. 

This paper studies the first two questions concerning the uniqueness and robustness of Uhlmann transformations. At first glance, Uhlmann transformations are not generally unique. For example, suppose that the reduced density matrix of $\ket{C}$ on subsystem $\reg{B}$ does not have full support. Then any Uhlmann transformation can behave arbitrarily on the orthogonal complement of the support, while remaining optimal. What if we disregard these trivial degrees of freedom, however -- could Uhlmann transformations be unique in some other meaningful way?

We provide an answer via \emph{canonical Uhlmann transformations}, first defined by Metger and Yuen~\cite{metger2023stateqip}. For every pair of bipartite states $(\ket{C},\ket{D}$), this is the operator
\begin{equation}
    \label{eq:canonical}
    W := \sgn( \Tr_{\reg{A}} (\ket{D}\!\!\bra{C}))
\end{equation}
where $\Tr_{\reg{A}}$ denotes tracing out register $\reg{A}$ and $\sgn(\cdot)$ denotes the following function: for a matrix $X$ with singular value decomposition $X = U \Sigma V^*$, we define $\sgn(X) \coloneqq U \sgn(\Sigma) V^*$ where $U,V$ are unitary operators and $\sgn(\Sigma)$ denotes the projection onto the eigenvectors of $\Sigma$ with positive eigenvalues (i.e., the support of $\Sigma$). The canonical transformation $W$ is a partial isometry\footnote{A partial isometry can be thought of as the restriction of a unitary to a subspace. More formally, an operator $W$ is a partial isometry if $W^* W$ is a projection.}; it is unitary if and only if both reduced states $\rho,\sigma$ (of $\ket{C}, \ket{D}$, respectively) are invertible.

The following was proven by Bostanci, et al.~\cite[Proposition 6.3]{bostanci2023unitary} in their investigation of the computational complexity of implementing Uhlmann transformations:
\begin{lemma}
    The canonical Uhlmann transformation $W$ satisfies $|\bra{D} \id \otimes W \ket{C}| = \fidelity(\rho,\sigma)$, and furthermore for all partial isometries $R$ such that $|\bra{D} \id \otimes R \ket{C}| = \fidelity(\rho,\sigma)$, we have that $W^* W \leq R^* R$ in the positive semidefinite ordering. 
\end{lemma}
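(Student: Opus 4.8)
The plan is to reduce both assertions to statements about the single operator $X := \Tr_{\reg{A}}(\ketbra{D}{C})$ on register $\reg{B}$, for which $W = \sgn(X)$ by definition. The starting point is the identity
\[
  \bra{D}\,(\id \otimes U)\,\ket{C} \;=\; \Tr(X^* U),
\]
valid for every operator $U$ on $\reg{B}$; this follows from a direct computation with indices (with the usual care about conjugates and transposes). It turns the lemma into two statements about the linear functional $U \mapsto \Tr(X^* U)$.

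For the first assertion I would use the polar decomposition $X = W\,|X|$ with $|X| = (X^*X)^{1/2}$: the partial isometry occurring here is exactly $\sgn(X)$ (matching the definition of $\sgn$ via the singular value decomposition), and $W^*W$ is the orthogonal projection onto $\Image(X^*) = (\ker X)^\perp$. Then $\Tr(X^*W) = \Tr(|X|\,W^*W) = \Tr|X| = \|X\|_1$, while the matrix Hölder inequality gives $|\Tr(X^* U)| \le \|X\|_1\,\|U\|_\infty \le \|X\|_1$ for every contraction $U$, in particular for every unitary and every partial isometry. Since Uhlmann's theorem states $\fidelity(\rho,\sigma) = \max_U |\bra{D}(\id\otimes U)\ket{C}|$, this yields $|\bra{D}(\id\otimes W)\ket{C}| = \|X\|_1 = \fidelity(\rho,\sigma)$.

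The rigidity part is the crux. Let $R$ be a partial isometry with $|\bra{D}(\id\otimes R)\ket{C}| = \fidelity(\rho,\sigma)$; multiplying $R$ by a global phase (which affects neither $R^*R$ nor the hypothesis) I may assume $\Tr(X^*R) = \|X\|_1$. Passing to the singular value basis $X = U_X \Sigma_X V_X^*$ and setting $R' := U_X^* R V_X$, still a contraction, the hypothesis becomes $\sum_i \sigma_i\,(R')_{ii} = \sum_i \sigma_i$, where the $\sigma_i \ge 0$ are the singular values of $X$. The left side is real (it equals $\|X\|_1$), and for a contraction $\mathrm{Re}\,(R')_{ii} \le |(R')_{ii}| \le 1$, so this equality forces $(R')_{ii} = 1$ for every index $i$ in the support $S := \{i : \sigma_i > 0\}$; then the equality case of Cauchy–Schwarz applied to $\bra{i}R'\ket{i} = 1$ with $\|R'\ket{i}\| \le 1$ gives $R'\ket{i} = \ket{i}$ for all $i \in S$. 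Since the columns of $V_X$ indexed by $S$ span $\Image(W^*W)$, translating this back through $U_X, V_X$ is exactly the rigid identity
\[
  R\,(W^*W) \;=\; W .
\]

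To finish, observe that $W^*W = (R\,W^*W)^*(R\,W^*W) = (W^*W)\,R^*R\,(W^*W)$, using $R(W^*W) = W$ for the first equality. With $P := W^*W$ and $Q := R^*R$ (both orthogonal projections, as $W$ and $R$ are partial isometries) this reads $PQP = P$, i.e.\ $P(\id - Q)P = 0$; since $\id - Q$ is a projection this forces $(\id - Q)P = 0$, hence $\Image(P) \subseteq \Image(Q)$, which for projections is exactly $P \le Q$, i.e.\ $W^*W \le R^*R$. I expect the main obstacle to be precisely this rigidity half: recognizing that \emph{exact} optimality of the partial isometry $R$ pins down its action on all of $\Image(W^*W)$ via $R(W^*W) = W$, and then deducing the positive-semidefinite inequality from that relation using only that $R^*R$ is a projection, since $R$ (being merely a partial isometry) cannot be inverted. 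The opening identity and the change of basis are routine bookkeeping by comparison.
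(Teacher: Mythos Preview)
Your proof is correct. Note, however, that the paper does not supply its own proof of this lemma: it is quoted from \cite[Proposition~6.3]{bostanci2023unitary} and left unproved here, so there is no ``paper's proof'' to compare against.

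That said, the first assertion is effectively reproved later in the paper, inside \Cref{clm:completeness}, via the direct computation $\bra{D}(\id\otimes W)\ket{C} = \Tr(\rho^{1/2}\sigma^{1/2}\,\sgn(\sigma^{1/2}\rho^{1/2})) = \Tr((\rho^{1/2}\sigma\rho^{1/2})^{1/2}) = \fidelity(\rho,\sigma)$. This is the same calculation as your $\Tr(X^*W)=\|X\|_1$, specialized to the explicit form $X=\sqrt{\sigma}\sqrt{\rho}$ that the paper establishes in \Cref{clm:equiv-w}; your version is slightly more abstract (it never unpacks $X$) but the content is identical. The second assertion---minimality of $W^*W$ among the domain projections of optimal partial isometries---is not argued anywhere in this paper, so your rigidity step (forcing $R(W^*W)=W$ from the equality case of Cauchy--Schwarz in the singular value basis, then deducing $P\le Q$ from $PQP=P$ via $(\id-Q)^{1/2}P=0$) stands on its own and is sound.
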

In other words, the canonical Uhlmann transformation defined in~\Cref{eq:canonical} achieves the optimal overlap between $\ket{C}$ and $\ket{D}$, and furthermore any other partial isometry achieving the optimal overlap must be supported on the domain of $W$. This is a rather weak statement, however: when $R$ is unitary, then $W^* W \leq R^* R = \id$ is satisfied for \emph{all} partial isometries $W$. 

A stronger statement is the following:
\begin{claim}
\label{clm:exact-rigidity}
    For all partial isometries $R$ such that $\bra{D} \id \otimes R \ket{C} = \fidelity(\rho,\sigma)$, we have that
    \[
        \id \otimes W \ket{C} = \id \otimes R W^* W \ket{C}~.
    \]
\end{claim}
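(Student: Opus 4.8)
Write $M := \Tr_{\reg{A}}(\ketbra{D}{C})$, an operator on register $\reg{B}$, so that $W = \sgn(M)$ by definition. Two elementary facts drive the argument. First, tracing out $\reg{A}$ gives $\bra{D} \id \otimes X \ket{C} = \Tr(X M^*)$ for every operator $X$ on $\reg{B}$; specializing, the usual form of Uhlmann's theorem reads $\fidelity(\rho,\sigma) = \|M\|_1 = \Tr(W M^*)$. Second, fix a singular value decomposition $M = U_0 \Sigma V_0^*$ in which $\Sigma$ collects only the nonzero singular values (so $\Sigma$ is diagonal, strictly positive, and invertible) and $U_0, V_0$ are the corresponding isometries; then $W = U_0 V_0^*$ and $W^* W = V_0 V_0^*$. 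In this notation the hypothesis $\bra{D} \id \otimes R \ket{C} = \fidelity(\rho,\sigma)$ becomes $\Tr(\Sigma\, U_0^* R V_0) = \Tr(\Sigma)$.

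I would in fact prove the stronger \emph{operator} identity $R W^* W = W$, which immediately gives the claim. Put $T := U_0^* R V_0$. Because $R$ is a partial isometry (hence $\|R\|_\infty \le 1$) and $U_0, V_0$ are isometries, $T$ is a contraction, and the hypothesis says $\sum_k \sigma_k T_{kk} = \sum_k \sigma_k$ with every $\sigma_k > 0$. Taking real parts and using $\mathrm{Re}\, T_{kk} \le |T_{kk}| \le \|T\|_\infty \le 1$ forces $T_{kk} = 1$ for all $k$; and a contraction with all diagonal entries equal to $1$ must be the identity (apply Cauchy--Schwarz to $T e_k$, using $\langle e_k, T e_k\rangle = 1$ and $\|T e_k\| \le 1$), so $T = \id$. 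Running the same norm argument column by column on $R V_0$ — which has operator norm at most $1$ and satisfies $U_0^*(R V_0) = T = \id$ — yields $R V_0 = U_0$. Therefore $R W^* W = R V_0 V_0^* = U_0 V_0^* = W$, and hence $\id \otimes R W^* W \ket{C} = \id \otimes W \ket{C}$. The degenerate case $M = 0$, equivalently $\fidelity(\rho,\sigma) = 0$, is trivial since then $W = 0$.

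All the real content sits in the equality analysis ``$\Tr(\Sigma T) = \Tr(\Sigma)$ with $\Sigma$ diagonal and strictly positive and $\|T\|_\infty \le 1$ implies $T = \id$'', together with its upgrade to $R V_0 = U_0$; everything else is bookkeeping with the singular value decomposition of $M$. I expect this to be the (mild) crux, and it is precisely where rigidity is born: strict positivity of $\Sigma$ on its support leaves no freedom in $T$ once the phase of the overlap has been pinned down. Moreover the analysis is quantitative by design — if $\bra{D} \id \otimes R \ket{C}$ is merely within $\eps$ of $\fidelity(\rho,\sigma)$, then $\sum_k \sigma_k(1 - \mathrm{Re}\, T_{kk}) \le \eps$, and carefully tracking the resulting slack in each inequality should yield the robust version of the claim.
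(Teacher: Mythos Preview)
Your argument is correct, and in fact you establish the stronger \emph{operator} identity $R\,W^*W = W$, which the paper does not state. The key step---that $\Tr(\Sigma T)=\Tr(\Sigma)$ with $\Sigma$ diagonal, strictly positive, and $\|T\|_\infty\le 1$ forces $T=\id$, and then the same Cauchy--Schwarz equality argument applied to the columns of $RV_0$ forces $RV_0=U_0$---is clean and complete. This is exactly the equality case of H\"older's inequality $|\Tr(RM^*)|\le\|R\|_\infty\|M\|_1$ analysed via the SVD.

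The paper takes a different route: it does not prove \Cref{clm:exact-rigidity} directly, but obtains it as the $\eps=0$ specialisation of the robust rigidity theorem (\Cref{thm:main}), whose proof proceeds by setting up a semidefinite program for the worst-case deviation and exhibiting a dual feasible point. Your direct SVD argument is more elementary and self-contained for the exact case, and it yields the operator identity $RW^*W=W$ rather than only the state-dependent conclusion. Conversely, the paper's SDP machinery is built from the outset to handle the approximate case and produces the explicit robustness function $\delta(\eps)=(2\kappa/\eta)\eps$. Your final remark that tracking the slack in $\sum_k\sigma_k(1-\mathrm{Re}\,T_{kk})\le\eps$ should give a robust version is reasonable in spirit, but converting that slack into a bound on $\|\id\otimes(W-R)W^*W\ket{C}\|^2$ requires weighting by the reduced state on $\reg B$, not by the singular values $\sigma_k$ themselves; this is where the quantities $\eta$ and $\kappa$ enter, and reproducing the paper's precise dependence would take real additional work.
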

\noindent This says that \emph{any} optimal Uhlmann transformation, when restricted to the support of $W$, must behave identically to $W$ on the state $\ket{C}$. This provides some justification in calling the $W$ in \Cref{eq:canonical} the ``canonical'' Uhlmann transformation corresponding to $\ket{C},\ket{D}$. 

\Cref{clm:exact-rigidity} is in fact a special case of a more general \emph{robust rigidity} theorem that we prove in this paper. Roughly speaking, the theorem (\Cref{thm:main} below) states that any transformation $R$ that achieves \emph{approximately}-optimal fidelity (meaning that $|\bra{D} \id \otimes R \ket{C}| \geq \fidelity(\rho,\sigma) - \eps$) must be \emph{approximately} the canonical Uhlmann transformation $W$. This is analogous to rigidity results for some quantum information processing tasks such as nonlocal games~\cite{mayers2003self,mckague2012robust,vsupic2020self} and superdense coding~\cite{nayak2023rigidity}. These results show that the only way for a quantum operation to achieve near-optimal performance according to some metric (e.g., winning probability in a nonlocal game, or decoding probability in superdense coding) is if, in fact, it is close to a canonical strategy or protocol. 

First we give a way to quantify the rigidity of canonical Uhlmann transformations. 
\begin{definition}\label{def:robustness}
Let $\ket{C},\ket{D} \in \C^d \otimes \C^d$ be pure bipartite states with respective reduced density matrices $\rho,\sigma$ on the subsystem $\reg{A}$. Then we say the corresponding canonical Uhlmann transformation $W$ defined in \Cref{eq:canonical} has \emph{$\delta(\eps)$-robust rigidity} if, for all $\eps > 0$, for all unitaries $R$ such that 
    \[
        \bra{D} \id \ot R \ket{C} \geq \fidelity(\rho,\sigma) - \eps~,
    \]
    we have 
    \[
        \| \id \ot (W - R)W^* W \ket{C} \|^2 \leq \delta(\eps)~.
    \]
\end{definition}

\noindent Thus, bounds on the function $\delta(\eps)$ of an Uhlmann transformation quantifies the extent to which the exact rigidity statement of \Cref{clm:exact-rigidity} can be made robust. 

\begin{remark}
The reader may notice an apparent asymmetry between $\ket{C}$ and $\ket{D}$ in \Cref{clm:exact-rigidity} and \Cref{def:robustness}. This is motivated by an operational interpretation of Uhlmann's theorem: starting with $\ket{C}$, how close can we get to $\ket{D}$ by acting on subsystem $\reg{B}$? The choice of starting with $\ket{C}$ versus $\ket{D}$ is significant, as the canonical Uhlmann transformation can have different robustness functions depending on this choice (see \Cref{sec:eta-dependence} for an example). 
\end{remark}

\begin{remark}
The reader may also wonder about the role of $W^* W$ in \Cref{def:robustness}, which is the projection onto the image of $W$. The image of $W$ may not be fully contained in the support of subsystem $\reg{B}$ of $\ket{C}$ or $\ket{D}$. Interestingly, this projection is necessary in the statement of rigidity: \emph{any} unitary completion of the partial isometry $W$ achieves the optimal fidelity, as shown in \Cref{clm:completeness}. In other words, it is possible to behave arbitrarily outside the image of $W$, and still attain the optimal fidelity. 
\end{remark}

Our main result is a general bound on the rigidity of Uhlmann transformations.

\begin{restatable}[Robust rigidity of Uhlmann transformations]{theorem}{rigidity}
\label{thm:main}
    Let $\ket{C},\ket{D} \in \C^d \otimes \C^d$ be pure bipartite states with respective reduced density matrices $\rho,\sigma$ on the subsystem $\reg{A}$. The corresponding canonical Uhlmann transformation $W$ satisfies the following:
\begin{enumerate}
    \item (\textbf{Completeness}). For all unitary completions $U$ of $W$, we have
    \[
        | \bra{D} \id \ot U \ket{C} | = \fidelity(\rho,\sigma)~.
    \]
    \item (\textbf{Rigidity}). $W$ has $\delta(\eps)$-robust rigidity for $\delta(\eps) = \Big( \frac{2\kappa}{\eta} \Big) \eps$ where $\kappa=\| \rho^{-1/2} P \rho^{1/2} \|_\infty^2$ with $P$ being the projection onto $\Image(\rho^{1/2} \sigma \rho^{1/2})$, and $\eta$ is the smallest nonzero eigenvalue of the matrix geometric mean $\rho^{-1} \# \sigma$.
\end{enumerate}
\end{restatable}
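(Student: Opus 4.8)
The plan is to work through the standard operator correspondence for bipartite states. Set $M := \Tr_{\reg{A}}(\ketbra{D}{C})$, so that $\bra{D}(\id\ot U)\ket{C} = \Tr(U M^*)$ for every operator $U$ on $\reg{B}$; hence $W = \sgn(M)$, the operator $N := W M^* = |M^*|$ is positive semidefinite with $\Tr N = \|M\|_1$ and $W^* N = M^*$, and in fact $\Tr N = \fidelity(\rho,\sigma)$ by applying the stated lemma to $W$ itself. I also record $P_W := W^* W$, $Q_W := W W^*$, and $\tau := \Tr_{\reg{A}}(\ketbra{C}{C})$ (the reduced state of $\ket{C}$ on $\reg{B}$), so that $\bra{C}(\id\ot Z)\ket{C} = \Tr(Z\tau)$. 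For \textbf{completeness}: the column space of $M^*$ equals the initial space of $W$ (both are the span of the right singular vectors of $M$ with nonzero singular value), and a unitary completion $U$ of $W$ agrees with $W$ there, so $U M^* = W M^*$ and hence $\bra{D}(\id\ot U)\ket{C} = \Tr(W M^*) = \Tr|M^*| = \fidelity(\rho,\sigma)$ exactly.

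For \textbf{rigidity}, fix a unitary $R$ with $\Tr(R M^*) \ge \fidelity(\rho,\sigma) - \eps$; since $\Tr(R M^*) \le \|M\|_1 = \fidelity(\rho,\sigma)$, this quantity lies in $[\fidelity(\rho,\sigma)-\eps,\,\fidelity(\rho,\sigma)]$. Put $A := W - R$. Expanding $A A^*$ and using $R R^* = \id$, $Q_W N = N$, and $W^* N = M^*$, every term is real and one gets $\Tr(A A^* N) = 2(\fidelity(\rho,\sigma) - \Tr(R M^*)) \le 2\eps$. The key manoeuvre is then to move the weight $N$ from outside $A A^*$ to the right of $A^* A$: expanding shows $A A^* - W A^* A W^* = (\id - Q_W) - (\id - Q_W) R W^* - W R^* (\id - Q_W)$, and each term is annihilated by $N$ because $(\id - Q_W) N = 0 = N(\id - Q_W)$ and $W^* N = M^*$, so $\Tr[(A A^* - W A^* A W^*) N] = 0$. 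Writing $\tilde N := W^* N W$ (which equals $|M|$), this yields $\Tr(A^* A \tilde N) = \Tr(W A^* A W^* N) = \Tr(A A^* N) \le 2\eps$. Crucially this step uses unitarity of $R$: the bare inequality $\Tr(A^* A Z)\lesssim\Tr(A^* Z A)$ fails for generic operators $A$.

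It remains to prove the operator inequality $P_W \tau P_W \preceq \tfrac{\kappa}{\eta}\,\tilde N$. Granting it, since $A^* A$ and $\tfrac{\kappa}{\eta}\tilde N - P_W\tau P_W$ are both positive semidefinite their product has nonnegative trace, so $\Tr(A^* A\, P_W\tau P_W) \le \tfrac{\kappa}{\eta}\Tr(A^* A\tilde N) \le \tfrac{2\kappa}{\eta}\eps$, and since $\|\id\ot (W - R)W^* W\,\ket{C}\|^2 = \bra{C}(\id\ot P_W A^* A P_W)\ket{C} = \Tr(A^* A\, P_W\tau P_W)$, this is exactly $\delta(\eps) = (2\kappa/\eta)\eps$. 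For the operator inequality I would use the Schmidt decomposition of $\ket{C}$ and conjugate by the partial isometry $T$ sending the $\reg{A}$-Schmidt basis to the $\reg{B}$-Schmidt basis: one finds $\tilde N = |M|$ is unitarily equivalent to $X := (\rho^{1/2}\sigma\rho^{1/2})^{1/2}$, that $\tau$ corresponds to $\rho$, and that $P_W$ corresponds to the support projection $P$ of $X$ (the projection onto $\Image(\rho^{1/2}\sigma\rho^{1/2})$), reducing the claim to $P\rho P \preceq \tfrac{\kappa}{\eta} X$. For this, $\|X^{-1/2} P\rho P\, X^{-1/2}\|_\infty = \|X^{-1/2} P \rho^{1/2}\|_\infty^2$; writing $P\rho^{1/2} = \rho^{1/2}(\rho^{-1/2}P\rho^{1/2})$ and using $\|\rho^{-1/2}P\rho^{1/2}\|_\infty^2 = \kappa$ gives $\|X^{-1/2}P\rho P\,X^{-1/2}\|_\infty \le \kappa\,\|X^{-1/2}\rho X^{-1/2}\|_\infty = \kappa\,\|\rho^{1/2} X^{-1}\rho^{1/2}\|_\infty$; and $\rho^{1/2}X^{-1}\rho^{1/2}$ is the inverse on its support of $\rho^{-1/2}X\rho^{-1/2} = \rho^{-1}\#\sigma$, hence has operator norm $1/\eta$. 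Therefore $X^{-1/2}P\rho P X^{-1/2}\preceq (\kappa/\eta) P$, and conjugating by $X^{1/2}$ yields $P\rho P\preceq (\kappa/\eta) X$.

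The hard part is the combination of these ideas: realizing that, once unitarity of $R$ has been used to make the weight single-sided, the residual comparison is between an object on $\ket{D}$'s side of the picture ($\tilde N \sim (\rho^{1/2}\sigma\rho^{1/2})^{1/2}$) and one on $\ket{C}$'s side ($\tau \sim \rho$), which is precisely what forces the geometric mean $\rho^{-1}\#\sigma$ and the basis-distortion factor $\kappa = \|\rho^{-1/2}P\rho^{1/2}\|_\infty^2$ to appear. The remaining work is bookkeeping with supports and pseudoinverses when $\rho$ or $\sigma$ is singular — in particular to justify that $\rho^{1/2}X^{-1}\rho^{1/2}$ genuinely has norm $1/\eta$ and that conjugation by $T$ is a faithful Loewner-order equivalence on the relevant subspaces — which is routine.
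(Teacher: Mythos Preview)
Your overall strategy is sound and more direct than the paper's: instead of setting up a semidefinite program and exhibiting a dual certificate, you argue by elementary manipulation that $\Tr(AA^*N)\le 2\eps$, use unitarity of $R$ to convert this to $\Tr(A^*A\,\tilde N)\le 2\eps$, and then reduce everything to the operator inequality $P\rho P\preceq(\kappa/\eta)(\rho^{1/2}\sigma\rho^{1/2})^{1/2}$. This is exactly the same inequality the paper isolates as the crux of its dual-certificate step (there written $(\kappa/\eta)A^*W - P\rho P \ge 0$), so the two proofs converge at that point; yours simply reaches it without the SDP machinery, which is a genuine simplification.

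The gap is in your proof of that operator inequality, and it is \emph{not} routine bookkeeping. Write $X=(\rho^{1/2}\sigma\rho^{1/2})^{1/2}$. Your claim that $\rho^{1/2}X^{-1}\rho^{1/2}$ is ``the inverse on its support of $\rho^{-1}\#\sigma=\rho^{-1/2}X\rho^{-1/2}$, hence has operator norm $1/\eta$'' fails whenever $[\rho,P]\ne 0$: one computes $(\rho^{-1/2}X\rho^{-1/2})(\rho^{1/2}X^{-1}\rho^{1/2})=\rho^{-1/2}P\rho^{1/2}$, which is only an \emph{oblique} projection, so these are not Moore--Penrose inverses of each other. Concretely, for $\rho=\mathrm{diag}(\eps,1)$ (invertible) and $\sigma=\proj{+}$ one finds $\|\rho^{1/2}X^{-1}\rho^{1/2}\|_\infty=\kappa/\eta$ with $\kappa=2(1+\eps^2)/(1+\eps)^2>1$, not $1/\eta$. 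Your preceding submultiplicativity step is also circular: since the pseudoinverse satisfies $X^{-1/2}=X^{-1/2}P$, you have $X^{-1/2}\rho^{1/2}=X^{-1/2}P\rho^{1/2}$, so the ``bound'' $\|X^{-1/2}P\rho^{1/2}\|_\infty\le\sqrt{\kappa}\,\|X^{-1/2}\rho^{1/2}\|_\infty$ is a tautology. The obliqueness of $\rho^{-1/2}P\rho^{1/2}$ is precisely the mechanism by which $\kappa$ enters, and your route ends up counting it twice. The fix is to conjugate by $\rho^{-1/2}$ rather than $X^{-1/2}$: one shows $\rho^{-1/2}P\rho P\rho^{-1/2}\le\kappa\,\Pi_{\Dom(\rho^{-1}\#\sigma)}\le(\kappa/\eta)\,\rho^{-1}\#\sigma$ (the first inequality is the definition of $\kappa$ together with a short support-containment check), and then conjugating back by $\rho^{1/2}$ gives $P\rho P\le(\kappa/\eta)X$ as needed.
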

\noindent For readers who are not familiar with the (beautiful notion of the) matrix geometric mean we provide a brief introduction in \Cref{sec:matrix-geometric-mean}. 

Thus, the canonical Uhlmann transformation is indeed robustly rigid, up to some blow-up that depends on two parameters $\eta$ and $\kappa$ (called the \emph{spectral gap} and \emph{obliqueness}, respectively) of the reduced density matrices $\rho,\sigma$. Intuitively, the obliqueness parameter $\kappa$ is a measure of a combination of non-commutativity and non-invertibility of $\rho,\sigma$ and the spectral gap parameter $\eta$ is a measure of how ``well-conditioned'' the matrix geometric mean $\rho^{-1} \# \sigma$ is (which one can think of as a notion of ``ratio'' between $\sigma$ and $\rho$).

\begin{remark}
    Suppose either
    \begin{enumerate}
        \item The density matrices $\rho,\sigma$ commute, or
        \item The density matrices $\rho,\sigma$ are invertible. 
    \end{enumerate}
    Then the obliqueness parameter $\kappa$ is equal to $1$, and the robustness bound only depends on the spectral gap $\eta$ of $\rho^{-1} \# \sigma$. 
\end{remark}

\noindent In \Cref{sec:eta-dependence,sec:kappa-dependence} respectively we further show that some dependence on the spectral gap parameter $\eta$ and the obliqueness parameter $\kappa$ in \Cref{thm:main} is necessary.

\subsection{Matrix geometric mean}
\label{sec:matrix-geometric-mean}

The matrix geometric mean is a noncommutative generalization of the geometric mean $\sqrt{ab}$ of two nonnegative numbers $a,b$. If $A,B$ are commuting positive semidefinite matrices, then the matrix geometric mean $A \# B$ is defined as $A^{1/2} B^{1/2}$. For general positive definite (i.e., all eigenvalues are strictly positive) matrices $A,B$, the matrix geometric mean $A \# B$ is defined as
\begin{equation}
    \label{eq:matrix-geometric-mean}
    A \# B := A^{1/2} (A^{-1/2} B A^{-1/2})^{1/2} A^{1/2}~.
\end{equation}
For positive definite matrices $A,B$ the matrix geometric mean enjoys many pleasant properties, including
\begin{enumerate}
    \item $A \# B$ is positive definite.
    \item $A\# B = B \# A$.
    \item $A \# B$ is the unique positive solution to the equation $X A^{-1} X = B$. 
    \item If $X$ is invertible, then $X(A\# B) X^{-1} = (XA X^{-1}) \# (X B X^{-1})$. 
    \item $A \# B \leq \frac{1}{2} ( A + B)$, a noncommutative analogue of the arithmetic-geometric mean inequality. 
    \item $\Phi(A) \# \Phi(B) \leq \Phi(A \# B)$ for all positive maps $\Phi$.
\end{enumerate}
Proofs of these properties can be found in~\cite[Chapter 4]{bhatia2009positive}. For more applications of the matrix geometric mean in quantum information theory, see~\cite{cree2020fidelity,fawzi2021defining,liu2024quantum}.

For noninvertible $A, B$ (but still positive \emph{semi}definite), the matrix geometric mean $A \# B$ is typically defined as a limit of geometric means of sequences of strictly positive matrices converging to $A,B$; however, in this case not all of the properties listed above are satisfied. For example, the symmetry property $A \# B = B \# A$ need not hold. 

In this paper we do not use this limit definition, and instead stick to \Cref{eq:matrix-geometric-mean} as the definition for the matrix geometric mean for \emph{all} positive semidefinite matrices $A, B$, with the inverses now being Moore-Penrose pseudoinverses.
Although it does not satisfy all the properties listed above, it satisfies a few important properties that are needed for the proof of \Cref{thm:main}; for example, as we will show in \Cref{clm:equiv-w}, when the density matrices $\rho,\sigma$ are real, the canonical Uhlmann transformation can equivalently be expressed in terms of a matrix geometric mean:
    \[
        W = Y^* (\rho^{1/2} \sigma^{1/2})^{-1} \rho^{1/2} (\rho^{-1} \# \sigma) \rho^{1/2} X
    \] 
    for some unitary operators $X,Y$ (see the start of \Cref{sec:rigidity} for their definitions). Furthermore, the fidelity between $\rho$ and $\sigma$ can also be written as
    $\fidelity(\rho,\sigma) = \Tr((\rho^{-1} \# \sigma) \rho)$ (see e.g.~\cite{cree2020fidelity}).

\subsection{Applications}

Given the centrality of Uhlmann transformations, the rigidity statement in~\Cref{thm:main} may be of interest in its own right, but it also turns out to be a useful technical tool for other applications.
To illustrate this, we briefly discuss applications of our robust rigidity theorem to  unitary complexity theory and approximate representation theory. 

\paragraph{The complexity of the Uhlmann Transformation Problem.} Bostanci, et al.~\cite{bostanci2023unitary} defined the Uhlmann Transformation Problem, a computational task associated to implementing canonical Uhlmann transformations corresponding to a pair ($\ket{C},\ket{D}$) whose circuit descriptions are given. They introduced a framework for unitary complexity theory in order to properly describe the complexity of performing Uhlmann transformations: for the special case that the pair $(\ket{C},\ket{D})$ have identical reduced density matrices (i.e., $\fidelity(\rho,\sigma) = 1$), the Uhlmann Transformation Problem is complete for $\mathsf{avgUnitaryHVPZK}$, a unitary complexity class that captures \emph{perfect zero knowledge} in the unitary synthesis setting~\cite[Theorem 6.1]{bostanci2023unitary}. They left open the challenge of characterizing the complexity of canonical Uhlmann transformations for general values of $\fidelity(\rho,\sigma)$. 

In \Cref{sec:complexity} we present a simple $2$-round quantum interactive synthesis protocol for the Uhlmann Transformation Problem (for all values of the fidelity of the reduced density matrices) -- this improves upon the $8$-round protocol that arises from the machinery of proving $\mathsf{avgUnitaryPSPACE} = \mathsf{avgUnitaryQIP}$ in~\cite{bostanci2023unitary}.  The soundness of our $2$-round protocol crucially depends on the robust Uhlmann rigidity theorem. We believe that this could be helpful for better understanding the complexity of the Uhlmann Transformation Problem in the future.

\paragraph{Approximate representation theory.} In the mathematics literature, results such as \Cref{thm:main} are known as \emph{stability} results: if an object $A$ approximately satisfies some constraints, then is it close (in the appropriate metric) to an object $B$ that \emph{exactly} satisfies those constraints~\cite{ulam1960collection}? In \Cref{sec:gowers-hatami}, we show that our robust Uhlmann rigidity theorem is powerful enough to derive other stability results -- in particular, we show that the Gowers-Hatami theorem on the stability of approximate representations of finite groups~\cite{gowers2015inverse} is an easy consequence of \Cref{thm:main}. Our proof suggests a possible ``mechanical template'' for proving other kinds of stability results: first, define the appropriate pair of pure states $(\ket{C},\ket{D})$, show that the canonical Uhlmann transformation is the ideal, ``exact'' object, and then use robust Uhlmann rigidity to conclude that all approximate objects are close to the ideal, exact object. We note that our approach of proving the Gowers-Hatami theorem is reminiscent of Metger, Natarajan, Zhang's alternate proof of it~\cite{metger2024succinct}.

\subsection{Related work}

\paragraph{A weaker rigidity theorem.} Bostanci, et al.~\cite{bostanci2023unitary} proved the following rigidity theorem for Uhlmann transformations, where the robustness itself depends on the fidelity between the reduced states:
\begin{theorem}[Weak Uhlmann rigidity]
\label{thm:weak-rigidity}
Let $\ket{C},\ket{D}$ be pure bipartite states with reduced density matrices $\rho,\sigma$ on the first subsystem. Then for all unitaries $R$ such that 
\[
    \bra{D} \id \otimes R \ket{C} \geq \fidelity(\rho,\sigma) - \eps~,
\]
we have that 
\[
    \| \id \otimes (W - R) \ket{C} \|^2 \leq 8 \Big( 1 - \fidelity(\rho,\sigma) + \sqrt{\eps} \Big)
\]
where $W$ is the corresponding canonical Uhlmann transformation.
\end{theorem}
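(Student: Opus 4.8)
The plan is to prove this by a short triangle‑inequality argument routed through the target state $\ket{D}$: I would show that both $\id \ot W \ket{C}$ and $\id \ot R \ket{C}$ lie close to $\ket{D}$ — the former by optimality of the canonical Uhlmann transformation, the latter by the near‑optimality hypothesis on $R$ — and then conclude via the triangle inequality that $\id \ot W \ket{C}$ and $\id \ot R \ket{C}$ lie close to each other. The two error terms combine additively, and that is what produces both the $1-\fidelity(\rho,\sigma)$ and the $\eps$ in the bound.

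The one preliminary point is that the canonical $W$ of \Cref{eq:canonical} satisfies $\bra{D}\, \id \ot W \,\ket{C} = \fidelity(\rho,\sigma)$ as a nonnegative real number, not merely in absolute value. This is immediate from the $\sgn$: if $\Tr_{\reg A}(\ket{D}\!\bra{C})$ has singular value decomposition $u\Sigma v^*$, then $W = u\,\sgn(\Sigma)\,v^*$, and expanding the overlap collapses it to $\Tr(\Sigma\,\sgn(\Sigma)) = \Tr(\Sigma) = \| \Tr_{\reg A}(\ket{D}\!\bra{C}) \|_1 = \fidelity(\rho,\sigma) \ge 0$. Since $\id \ot W$ is a contraction we have $\| \id \ot W \ket{C} \|^2 \le 1$, so
\[ \big\| \id \ot W \ket{C} - \ket{D} \big\|^2 \;=\; \| \id \ot W \ket{C} \|^2 + 1 - 2\,\fidelity(\rho,\sigma) \;\le\; 2\big(1 - \fidelity(\rho,\sigma)\big). \]
For a unitary $R$ with $\bra{D}\, \id \ot R \,\ket{C} \ge \fidelity(\rho,\sigma) - \eps$, unitarity gives $\| \id \ot R \ket{C} \|^2 = 1$, hence
\[ \big\| \id \ot R \ket{C} - \ket{D} \big\|^2 \;=\; 2 - 2\,\bra{D}\, \id \ot R \,\ket{C} \;\le\; 2\big(1 - \fidelity(\rho,\sigma) + \eps\big). \]
Combining these with the triangle inequality and $(u+v)^2 \le 2u^2 + 2v^2$ yields
\[ \big\| \id \ot (W-R) \ket{C} \big\|^2 \;\le\; 2\Big( \big\| \id \ot W \ket{C} - \ket{D} \big\|^2 + \big\| \id \ot R \ket{C} - \ket{D} \big\|^2 \Big) \;\le\; 8\big(1 - \fidelity(\rho,\sigma)\big) + 4\eps. \]
When $\eps \le 4$ this is at most $8\big(1 - \fidelity(\rho,\sigma) + \sqrt{\eps}\big)$ since $4\eps \le 8\sqrt{\eps}$; and when $\eps > 4$ the claimed bound exceeds $16$ while $\| \id \ot (W-R) \ket{C} \|^2 \le \big(\| \id \ot W \ket{C} \| + \| \id \ot R \ket{C} \|\big)^2 \le 4$ because both $\id \ot W$ and $\id \ot R$ are contractions. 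So the bound holds in all cases — in fact with the slightly stronger right‑hand side $8(1 - \fidelity(\rho,\sigma)) + 4\eps$.

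There is no serious obstacle here. The only subtlety worth flagging is the preliminary observation that $\bra{D}\, \id \ot W \,\ket{C}$ is a genuine nonnegative real, which is what lets us treat $\id \ot W \ket{C}$ as literally (not merely up to a global phase) close to $\ket{D}$; and it is harmless that $\id \ot W$ is only a partial isometry, since this can only decrease $\| \id \ot W \ket{C} \|$. The real content is that the theorem is genuinely \emph{weak}: because $\id \ot W \ket{C}$ sits only within distance $\sim\sqrt{1-\fidelity(\rho,\sigma)}$ of $\ket{D}$, this detour through $\ket{D}$ can certify that a near‑optimal $R$ is close to $W$ only when $\fidelity(\rho,\sigma)$ is itself close to $1$. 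Removing this dependence on the fidelity — and obtaining an honest linear‑in‑$\eps$ error relative to the canonical object, as in \Cref{thm:main} — is precisely what the simple triangle‑inequality approach cannot achieve and what requires the finer, matrix‑geometric‑mean‑based analysis.
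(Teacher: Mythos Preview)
Your proof is correct. One small remark: the paper does not actually supply a proof of \Cref{thm:weak-rigidity}; it is quoted from~\cite{bostanci2023unitary} (and, as the paper notes, is a specialisation of a statement about channels). So there is no in-paper proof to compare against.

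That said, your argument is exactly the kind of elementary derivation one would expect lies behind the cited result in the unitary case. The route through $\ket{D}$ via the triangle inequality is sound: the key preliminary fact you isolate, that $\bra{D}\,\id\ot W\,\ket{C}=\|\Tr_{\reg A}(\ketbra{D}{C})\|_1=\fidelity(\rho,\sigma)$ is a nonnegative real, is established in the paper as \Cref{clm:completeness} (see also the proof of \Cref{clm:equiv-w}), and the rest is just norm arithmetic. Your observation that the argument in fact delivers the sharper bound $8(1-\fidelity(\rho,\sigma))+4\eps$ is correct and worth noting; the $\sqrt{\eps}$ in the stated version is an artifact of the more general channel setting in~\cite{bostanci2023unitary}, not of anything intrinsic to the unitary case. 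Your closing diagnosis is also on point: the detour through $\ket{D}$ is precisely why this approach cannot beat the additive $1-\fidelity(\rho,\sigma)$ loss, which is what the SDP/matrix-geometric-mean machinery of \Cref{sec:rigidity} is designed to remove.
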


\begin{remark}
    Technically, the theorem is stated in greater generality in~\cite{bostanci2023unitary} for arbitrary channels rather than unitaries; we specialize the theorem statement for this paper. 
\end{remark}

Suppose the fidelity $\fidelity(\rho,\sigma)$ is equal to $1$. In our language, \Cref{thm:weak-rigidity} implies that the canonical Uhlmann transformation $W$ has robust rigidity $\delta(\eps) \leq 8\sqrt{\eps}$. (At first glance, it may seem rather nice that there is no dependence on the spectral gap $\eta$ or the obliqueness parameter $\kappa$, but note that in this special case of $\fidelity(\rho,\sigma) = 1$, the two density matrices are identical and therefore $\eta = \kappa = 1$.) Thus \Cref{thm:weak-rigidity} implies a robust rigidity bound for the perfect fidelity setting.\footnote{We note that in the $\fidelity(\rho,\sigma) = 1$ case \Cref{thm:main} implies a quadratically-better robustness function $\delta(\eps) = 2\eps$.}

However, when $\fidelity(\rho,\sigma)$ is strictly less than $1$, then the rigidity bound of \Cref{thm:weak-rigidity} becomes trivial as $\eps \to 0$; the upper bound on the closeness of $R$ and $W$ is always at least $8(1 - \fidelity(\rho,\sigma))$, a quantity that is a constant compared to $\eps$. Furthermore this gives trivial upper bounds whenever $\fidelity(\rho,\sigma) \leq 7/8$. 

Our main theorem (\Cref{thm:main}), on the other hand, gives a nontrivial rigidity bound no matter what $\fidelity(\rho,\sigma)$ is. 

\paragraph{Rigidity in quantum information theory.} This paper is inspired by rigidity in nonlocal games (also known as \emph{self-testing} in the nonlocal game literature), which is the phenomenon that for many nonlocal games of interest (such as the CHSH game or the Magic Square game), near-optimal strategies must be close to a canonical optimal strategy~\cite{mayers2003self,mckague2012robust,miller2012optimal}. There is also a long line of work studying various aspects of rigidity in nonlocal games; we refer the reader to the extensive survey of~\cite{vsupic2020self}. Nonlocal game rigidity is a powerful tool in quantum cryptography and quantum complexity theory, with applications ranging from classical verification of quantum computations~\cite{reichardt2013classical} to settling the complexity of quantum multiprover interactive proofs~\cite{ji2021mip}.

\paragraph{Stability of polar decompositions.} The rigidity of Uhlmann transformations is loosely related to the \emph{stability of polar decompositions}, a topic that has been studied extensively in numerical analysis~\cite{higham1986computing}. Every square matrix $A$ admits a polar decomposition $A = UP$ where $U$ (the ``polar factor'' of $A$) is a partial isometry and $P$ is a positive semidefinite matrix. How do the polar factors of a matrix $A$ and a perturbation $A + \Delta A$ compare with each other, as a function of $A$ and the perturbation $\Delta A$? This is a central question to the study of numerical algorithms for computing the polar decomposition. 

The connection with the Uhlmann transformation is as follows. The canonical Uhlmann transformation $W$ for a pair $(\ket{C},\ket{D})$ of states with corresponding density matrices $\rho,\sigma$ can be derived from the polar decomposition of the matrix $A = \sqrt{\rho} \sqrt{\sigma}$. Perturbing the states $\rho,\sigma$ (and consequently the states $\ket{C},\ket{D}$) will perturb the canonical Uhlmann transformation $W$; this relationship is governed by the stability of the polar decomposition of $A$.\footnote{See \Cref{sec:sensitivity} for an illustration of how the canonical Uhlmann transformation is a sensitive function of the states $\ket{C},\ket{D}$.}

However, the robust rigidity of Uhlmann transformations studied in this paper is a different notion of stability. Here, we do not consider perturbations of the states $\ket{C},\ket{D}$; we are asking whether all \emph{approximate} Uhlmann transformations $R$ for a pair of states $(\ket{C},\ket{D})$ must be close to a unique \emph{exact} Uhlmann transformation.

\subsection{Summary}

Uhlmann transformations, which are local transformations of bipartite (pure state) entanglement, are fundamental in quantum information theory. In this paper we showed that Uhlmann transformations possess a robust form of \emph{rigidity}: near-optimal entanglement transformations must close (in a well-defined sense) to a unique optimal transformation. This unique optimal transformation is the canonical Uhlmann transformation introduced by~\cite{metger2023stateqip}, and our result gives further justification to calling it ``canonical.'' 

We showed that the robustness of the rigidity theorem inherently depends on two parameters called the spectral gap and obliqueness, which are functions of the underlying pure states $\ket{C},\ket{D}$. An interesting open question is whether there is a general ``rounding'' procedure that converts any pair of states $(\ket{C},\ket{D})$ into a nearby pair $(\ket{\tilde{C}},\ket{\tilde{D}})$ with controlled spectral gap and controlled obliqueness. In \Cref{sec:rounding}, we provide a rounding lemma that only controls the spectral gap. 

Finally, we presented two applications of our robust rigidity theorem.  The first is to unitary complexity theory, where we can improve the round complexity required for the task of synthesizing canonical Uhlmann transformations in the regime where the fidelity between the reduced states is not $1$.  Second, we demonstrate that the robust rigidity theorem is a very general form of robustness that can be used to derive other stability theorems.  As an example, we re-derive the stability of approximate group representations by reducing to the robust rigidity of the canonical Uhlmann transformation between a specific pair of states.  Given the ubiquity of Uhlmann transformations in protocols across quantum information theory and quantum complexity theory, developing an understanding of their rigidity properties should unlock further applications and deeper insight into the ways that bipartite entanglement can be locally transformed. 

\section{Notation and facts about states and matrices}
\label{sec:prelims}

For a square matrix $A$,
\begin{enumerate}
    \item $A^{-1}$ denotes its Moore-Penrose pseudoinverse,
    \item $A^*$ denotes its conjugate transpose,
    \item $\overline{A}$ denotes its entrywise complex conjugate (with respect to the standard basis),
    \item $A^\top$ denotes its transpose (with respect to the standard basis).
    \item $\Image(A)$ denotes its image, i.e., $\mathrm{span} \{ A \ket{v} \}$,
    \item $\Dom(A)$ denotes its domain, i.e., the orthogonal complement of the kernel of $A$, 
    \item $\| A \|_1$ denotes its Schatten-$1$ norm, i.e., $\Tr(\sqrt{A^* A})$,
    \item $\|A\|_\infty$ denotes its operator norm, i.e., its largest singular value.
\end{enumerate}
Note that $\Image(A^* A) = \Dom(A) = \Image(A^*)$. 

Throughout we let $\ket{\Omega} = \sum_{i = 1}^d \ket{i} \otimes \ket{i}$ denote the unnormalized maximally entangled state on $\C^d \otimes \C^d$. We recall the well-known ``reflection'' property of maximally entangled states:
\begin{claim}
    \label{clm:reflection-property}
    For all operators $A$ acting on $\C^d$, we have
    \[
        A \otimes \id \ket{\Omega} = \id \otimes A^\top \ket{\Omega}~.
    \]
\end{claim}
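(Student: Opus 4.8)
The plan is to prove \Cref{clm:reflection-property} by the standard route: expand both sides in the computational product basis $\{\ket{i}\otimes\ket{j}\}$ and match coefficients. First I would fix the matrix-entry convention $A_{ij} := \bra{i}A\ket{j}$, so that the transpose satisfies $(A^\top)_{ij} = A_{ji}$ and hence $A^\top = \sum_{i,j} A_{ij}\,\ketbra{j}{i}$.

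Next I would compute the left-hand side, $A\otimes\id\ket{\Omega} = \sum_{k}(A\ket{k})\otimes\ket{k}$, and read off that its coefficient on $\ket{i}\otimes\ket{k}$ equals $A_{ik}$. Then I would do the same for the right-hand side, $\id\otimes A^\top\ket{\Omega} = \sum_{k}\ket{k}\otimes(A^\top\ket{k})$, whose coefficient on $\ket{k}\otimes\ket{j}$ equals $(A^\top)_{jk} = A_{kj}$; renaming the free summation indices $(k,j)\mapsto(i,k)$ identifies this with $A_{ik}$ on $\ket{i}\otimes\ket{k}$, so the two vectors agree term by term. Equivalently — and this is perhaps the cleanest phrasing — both maps $A\mapsto A\otimes\id\ket{\Omega}$ and $A\mapsto\id\otimes A^\top\ket{\Omega}$ are linear in $A$, so it suffices to verify the identity on the spanning set $A=\ketbra{i}{j}$, for which both sides manifestly equal $\ket{i}\otimes\ket{j}$.

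I do not anticipate any genuine obstacle: this is a one-line index computation. The only point that requires a moment's care is the conjugation convention — the identity features the plain transpose $A^\top$ (not the conjugate transpose $A^*$), precisely because $\ket{\Omega} = \sum_i \ket{i}\otimes\ket{i}$ has real coefficients in the standard basis; had $\ket{\Omega}$ instead been defined with complex coefficients, the statement would involve $\overline{A}$ or $A^*$ in place of $A^\top$.
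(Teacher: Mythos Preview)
Your proof is correct and is the standard direct verification. The paper itself does not prove this claim at all---it simply states it as ``the well-known `reflection' property of maximally entangled states''---so there is nothing to compare against; your basis-expansion argument (or equivalently the check on the spanning set $A=\ketbra{i}{j}$) is exactly how one fills in this detail.
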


Next we express every bipartite pure state in terms of the maximally entangled state. 
\begin{claim}
\label{clm:pure-state}
    Every pure bipartite state $\ket{C} \in \C^d \otimes \C^d$ whose reduced density matrix on the first register is $\rho$ can be written as 
    \[
        \ket{C} = \sqrt{\rho} \otimes X  \ket{\Omega}
    \]
    for some unitary operator $X$.
\end{claim}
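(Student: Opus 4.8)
The plan is to use the standard dictionary between bipartite pure states on $\C^d \ot \C^d$ and $d \times d$ matrices. First I would write $\ket{C} = (M \ot \id)\ket{\Omega}$, where $M$ is the matrix with entries $M_{ij} = (\bra{i} \ot \bra{j})\ket{C}$; this is immediate by expanding $(M \ot \id)\ket{\Omega} = \sum_k M\ket{k} \ot \ket{k}$ and matching coefficients. A one-line computation — using that $\Tr_{\reg{B}}(\ket{\Omega}\!\bra{\Omega}) = \id$ and that $\Tr_{\reg{B}}$ pulls the factor $M \ot \id$ out as $M(\cdot)M^*$ — shows that the reduced density matrix of $\ket{C}$ on $\reg{A}$ is exactly $M M^*$. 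So the hypothesis becomes the matrix identity $M M^* = \rho = \sqrt{\rho}\,\sqrt{\rho}^{\,*}$.

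The key step is then to conclude from $M M^* = \sqrt{\rho}\,\sqrt{\rho}^{\,*}$ that $M = \sqrt{\rho}\,U$ for some unitary $U$. This is the standard fact that two square matrices with the same ``left Gram matrix'' $AA^* = BB^*$ differ by a unitary on the right: take polar decompositions $M = \sqrt{MM^*}\,U_M$ and $\sqrt{\rho} = \sqrt{\rho}\,U_{\sqrt\rho}$ with \emph{unitary} polar factors (available since all matrices here are square), so $M = \sqrt{\rho}\,(U_{\sqrt\rho}^{*} U_M)$. If one instead prefers to use the canonical partial isometry $V$ in the polar decomposition $M = \sqrt{\rho}\,V$, the only thing to check is that $V$ extends to a unitary $U$ with $\sqrt{\rho}\,U = \sqrt{\rho}\,V$: the difference $U - V$ is supported on $\ker V = \Dom(M)^{\perp}$, which has dimension $\dim \ker \rho$, and one chooses the completion so that $U$ maps this subspace into $\ker \rho = \ker \sqrt{\rho}$, whence $\sqrt{\rho}(U - V) = 0$. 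This dimension-counting completion is the only place even mild care is needed, and it is routine.

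Finally I would translate $M = \sqrt{\rho}\,U$ back into tensor-product form using the reflection property of \Cref{clm:reflection-property} (applied with $A = U$):
\[
    \ket{C} = (M \ot \id)\ket{\Omega} = (\sqrt{\rho} \ot \id)(U \ot \id)\ket{\Omega} = (\sqrt{\rho} \ot \id)(\id \ot U^{\top})\ket{\Omega} = \sqrt{\rho} \ot U^{\top} \ket{\Omega},
\]
so $X := U^{\top}$, which is unitary since $U$ is, does the job. As an alternative route that avoids the matrix dictionary entirely, one can argue directly from the Schmidt decomposition $\ket{C} = \sum_i \sqrt{p_i}\,\ket{a_i}\ket{b_i}$: the $\ket{a_i}$ form an orthonormal family of eigenvectors of $\rho$ with eigenvalues $p_i$, so $\sqrt{\rho} \ot \id \ket{\Omega} = \sum_i \sqrt{p_i}\,\ket{a_i} \ot \overline{\ket{a_i}}$, where $\overline{\ket{a_i}}$ is the entrywise complex conjugate; since complex conjugation preserves orthonormality, the families $\{\overline{\ket{a_i}}\}$ and $\{\ket{b_i}\}$ can each be extended to an orthonormal basis of $\C^d$, and taking $X$ to be any unitary carrying the first basis to the second yields $\ket{C}$. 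In either approach there is essentially no obstacle: the statement is polar decomposition plus the reflection identity.
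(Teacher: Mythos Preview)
Your proposal is correct, and your ``alternative route'' via the Schmidt decomposition is exactly the paper's proof: the paper writes $\ket{C} = \sum_i \lambda_i \ket{a_i} \ot \ket{b_i}$, sets $X = \sum_i \ketbra{b_i}{\overline{a_i}}$, and uses $\ket{\Omega} = \sum_i \ket{a_i} \ot \ket{\overline{a_i}}$. Your primary route via the matrix dictionary and polar decomposition is a close cousin (Schmidt decomposition is just the SVD of $M$), so there is no substantive difference in approach.
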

\begin{proof}
    Let $\ket{C} = \sum_{i = 1}^d \lambda_i \ket{a_i} \otimes \ket{b_i}$ denote a Schmidt decomposition of $\ket{C}$, so the reduced state is $\rho = \sum_i \lambda_i^2 \ketbra{a_i}{a_i}$. Let $X = \sum_{i=1}^d \ketbra{b_i}{\overline{a_i}}$ where $\ket{\overline{a_i}}$ denotes the complex conjugate of $\ket{a_i}$. The claim follows since the unnormalized maximally entangled state $\ket{\Omega}$ can be equivalently expressed as 
    \[
        \ket{\Omega} = \sum_{i=1}^d \ket{a_i} \otimes \ket{\overline{a_i}}~.
    \]
\end{proof}

\begin{lemma}[Schur complement lemma]
\label{lem:schur}
Let $M$ be the following block matrix:
\[
    M = \begin{pmatrix} A & B \\ B^* & C \end{pmatrix}~.
\]
Then $M$ is positive semidefinite if and only if $A \geq 0$ and $(\id - AA^{-1}) B = 0$ and $C \geq B^* A^{-1} B$. 
\end{lemma}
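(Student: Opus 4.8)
The plan is to prove both directions simultaneously by reducing, via a block $LDL^{*}$ factorization, to the positivity of a block-diagonal matrix; the classical Schur complement argument does exactly this when $A$ is invertible, and the only extra work here is handling the Moore-Penrose pseudoinverse. The key observation is that the condition $(\id - AA^{-1})B = 0$ is equivalent to $\Image(B) \subseteq \Image(A)$ (because $AA^{-1}$ is the orthogonal projection onto $\Image(A)$), and this is precisely the hypothesis that makes the factorization go through algebraically.

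First I would handle the ``only if'' direction. Assuming $M \geq 0$: testing the quadratic form on vectors $(\ket{u},0)$ immediately gives $A \geq 0$. To get the range condition, I would test on vectors $\ket{v} = (\ket{u}, t\ket{w})$ with $\ket{u} \in \ker A$, $\ket{w}$ arbitrary, and $t \in \C$; since $A\ket{u} = 0$ and $A=A^{*}$, the form collapses to $2\,\mathrm{Re}(\bar t\,\bra{u}B\ket{w}) + |t|^{2}\bra{w}C\ket{w} \geq 0$, and letting $t \to 0$ along a suitable phase forces $\bra{u}B\ket{w} = 0$. Hence $\ker A \subseteq \ker B^{*}$, i.e.\ $(\id - AA^{-1})B = 0$, equivalently $B^{*}A^{-1}A = B^{*}$.

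Next I would establish the algebraic identity
\[
    M = \begin{pmatrix} \id & 0 \\ B^{*}A^{-1} & \id \end{pmatrix} \begin{pmatrix} A & 0 \\ 0 & C - B^{*}A^{-1}B \end{pmatrix} \begin{pmatrix} \id & A^{-1}B \\ 0 & \id \end{pmatrix}~,
\]
which holds whenever $(\id - AA^{-1})B = 0$, using only the identities $AA^{-1}B = B$, $B^{*}A^{-1}A = B^{*}$, and $A^{-1}AA^{-1} = A^{-1}$ — all valid for the pseudoinverse. Since the outer factors are invertible and adjoint to each other, $M \geq 0$ if and only if the middle factor is, i.e.\ if and only if $A \geq 0$ and $C \geq B^{*}A^{-1}B$. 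For the ``only if'' direction this closes the argument (the range condition just proved makes the identity applicable, and then positivity of $M$ forces positivity of the Schur complement block). For the ``if'' direction, the three hypotheses directly guarantee that the identity holds and that the middle block-diagonal factor is positive semidefinite, so $M \geq 0$.

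I expect the main obstacle to be the first step — cleanly extracting $\Image(B) \subseteq \Image(A)$ from $M \geq 0$ — together with, more as a matter of care than of difficulty, checking that every appearance of $A^{-1}$ in the factorization is legitimate for the pseudoinverse rather than a genuine inverse. Once the range condition is secured, the rest is the textbook Schur-complement computation.
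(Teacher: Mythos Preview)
Your proposal is correct and is the standard textbook argument. The paper does not give its own proof of this lemma; it simply cites \cite{boyd2004convex} and \cite{bhatia2009positive} for proofs, and those references use exactly the block $LDL^*$ factorization you describe, with the range condition $(\id - AA^{-1})B = 0$ handling the pseudoinverse case. One tiny cosmetic slip: in your quadratic-form computation the cross term is $2\,\mathrm{Re}(t\,\bra{u}B\ket{w})$ rather than $2\,\mathrm{Re}(\bar t\,\bra{u}B\ket{w})$, but since you choose the phase of $t$ freely this is immaterial.
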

\noindent Proofs of this can be found in, e.g.,~\cite[Appendix A.5.5]{boyd2004convex} and~\cite[Chapter 1]{bhatia2009positive}.

\subsection{Semidefinite programming}

A \emph{semidefinite program (SDP) in standard form} is specified by self-adjoint matrices $B \in \C^{m \times m},C^{n \times n}$, and a Hermiticity-preserving\footnote{This means that if $X$ is self-adjoint, then so is $\Phi(X)$.} superoperator $\Phi$ mapping operators on $\C^n$ to operators on $\C^m$, and corresponds to the following optimization problem:
\begin{equation*}
\begin{array}{ll@{}ll}
\underset{X}{\text{maximize}}  & \Tr(CX) &\\
\\
\text{subject to}& \displaystyle 
\Phi(X) = B  \\
\\
& \displaystyle X \geq 0
\end{array}
\label{eq:primal-sdp-prelims}
\end{equation*}
where $X$ ranges over all positive semidefinite operators on $\C^n$.

The \emph{dual SDP} of the standard-form program above is
\begin{equation*}
\begin{array}{ll@{}ll}
\text{minimize}  & \Tr(BY) &\\
\\
\text{subject to}& \displaystyle 
\Phi^*(Y) \geq C  \\
\\
& \displaystyle Y \text{ Hermitian}
\end{array}
\end{equation*}
where $\Phi^*$ denotes the \emph{adjoint} of the superoperator $\Phi$ (meaning that $\Phi^*$ is the unique superoperator satisfying $\langle Y, \Phi(X) \rangle = \langle \Phi^*(Y), X \rangle$ for all operators $X \in \C^{n \times n},Y \in \C^{m \times m}$). 

\emph{Weak SDP duality} states that the objective value of the dual SDP is always an upper bound on the objective value of the primal SDP. We refer the reader to Watrous's textbook for a detailed treatment of SDPs in the context of quantum information~\cite{watrous2018theory}.

\section{Rigidity of Uhlmann transforms} \label{sec:rigidity}

We now prove the main theorem of the paper, \Cref{thm:main}, which for convenience we restate here.
\rigidity*
Let $\ket{C},\ket{D}$ be bipartite pure states. For notational convenience, we let $\rho,\sigma$ denote the entry-wise \emph{complex conjugates} of the reduced density matrices of $\ket{C},\ket{D}$ on the subsystem $\reg{A}$ respectively -- it is easy to verify that $\fidelity(\overline{\rho},\overline{\sigma}) = \fidelity(\rho,\sigma)$. Then by \Cref{clm:pure-state}, we can write
\begin{align*}
    \ket{C} &= \sqrt{\overline{\rho}} \otimes X \ket{\Omega} \\
    \ket{D} &= \sqrt{\overline{\sigma}} \otimes Y \ket{\Omega} 
\end{align*}
for some unitary operators $X,Y$. 

For simplicity, we prove \Cref{thm:main} for the special case that $X = Y = \id$. This is without loss of generality: if \Cref{thm:main} holds for a pair $\ket{C} = \sqrt{\overline{\rho}} \otimes \id \ket{\Omega}, \ket{D} = \sqrt{\overline{\sigma}} \otimes \id \ket{\Omega}$ with canonical Uhlmann transformation $W$, then it is easy to see that \Cref{thm:main} holds also for $\ket{C} = \sqrt{\overline{\rho}} \otimes X \ket{\Omega}, \ket{D} = \sqrt{\overline{\sigma}} \otimes Y \ket{\Omega}$ with canonical Uhlmann transformation $\widetilde{W} = YWX^*$. 

Thus, for the rest of this section we assume $X = Y = \id$. 

\subsection{Properties of the canonical Uhlmann transformation}

First, we establish equivalent expressions for the canonical Uhlmann transformation in terms of the reduced density matrices $\rho,\sigma$.

\begin{claim}
\label{clm:equiv-w}
    When $X = Y = \id$, the following expressions are equivalent definitions of the canonical Uhlmann transformation $W$ for the pair $(\ket{C},\ket{D}$):
    \begin{itemize}
        \item $ W = \sgn(\sqrt{\sigma} \sqrt{\rho})$~.
        \item $W = (\rho^{1/2} \sigma^{1/2})^{-1} \rho^{1/2} (\rho^{-1} \# \sigma) \rho^{1/2}$.
    \end{itemize}
\end{claim}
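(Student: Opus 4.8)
The goal is to show that two expressions for $W$ agree (when $X=Y=\id$): the definitional $W = \sgn(\Tr_{\reg A}(\ketbra{D}{C}))$ and the matrix-geometric-mean formula $W = (\rho^{1/2}\sigma^{1/2})^{-1}\rho^{1/2}(\rho^{-1}\#\sigma)\rho^{1/2}$. The first reduction is to compute $\Tr_{\reg A}(\ketbra{D}{C})$ explicitly. Using $\ket{C} = \sqrt{\overline\rho}\otimes\id\,\ket{\Omega}$ and $\ket{D} = \sqrt{\overline\sigma}\otimes\id\,\ket{\Omega}$ together with the reflection property (\Cref{clm:reflection-property}), I would rewrite $\ket{C} = \id\otimes\sqrt{\rho}\,\ket{\Omega}$ and $\ket{D} = \id\otimes\sqrt\sigma\,\ket{\Omega}$, so that $\ketbra{D}{C} = (\id\otimes\sqrt\sigma)\ketbra{\Omega}{\Omega}(\id\otimes\sqrt\rho)$. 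Tracing out $\reg A$ of $\ketbra{\Omega}{\Omega}$ on the second register gives the identity (each $\ket{i}\!\bra{j}$ factor on $\reg A$ forces $i=j$), leaving $\Tr_{\reg A}(\ketbra{D}{C}) = \sqrt\sigma\,\id\,\sqrt\rho = \sqrt\sigma\sqrt\rho$. Hence the definition specializes to $W = \sgn(\sqrt\sigma\sqrt\rho)$, which is the first bullet.

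For the equivalence of the two bullets, I would argue that both sides are the partial isometry appearing in the polar decomposition of $\sqrt\sigma\sqrt\rho$. Recall $\sgn(M)$ is exactly the polar factor $U$ in $M = UP$ with $P = \sqrt{M^*M}\geq 0$ and $U$ the partial isometry with $\Dom(U) = \Image(M^*)$, $\Image(U)=\Image(M)$. So it suffices to check that $V := (\rho^{1/2}\sigma^{1/2})^{-1}\rho^{1/2}(\rho^{-1}\#\sigma)\rho^{1/2}$ is a partial isometry with $V \cdot (\text{something PSD}) = \sqrt\sigma\sqrt\rho$ and the correct domain/image. The cleanest route: show $V^*V$ equals the projection onto $\Dom(\sqrt\sigma\sqrt\rho) = \Image(\rho^{1/2}\sigma\rho^{1/2})$ (call it $P$ as in \Cref{thm:main}), and show $V(\sqrt\sigma\sqrt\rho)^* = V\,\rho^{1/2}\sigma^{1/2}$ is PSD — equivalently that $(\rho^{1/2}\sigma^{1/2})V^* = $ its adjoint, i.e. $V$ has the polar-factor property $VM^* \geq 0$ where $M = \sqrt\sigma\sqrt\rho$. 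Using property 3 of the matrix geometric mean, $(\rho^{-1}\#\sigma)\rho^{-1}(\rho^{-1}\#\sigma) = \sigma$ (on the relevant support, with pseudoinverses), so $\rho^{1/2}(\rho^{-1}\#\sigma)\rho^{-1}(\rho^{-1}\#\sigma)\rho^{1/2} = \rho^{1/2}\sigma\rho^{1/2}$; this is the algebraic identity that makes the verification go through, and I expect the bulk of the work to be careful bookkeeping of pseudoinverses and supports so that the cancellations $(\rho^{1/2}\sigma^{1/2})^{-1}\rho^{1/2}\sigma^{1/2} = (\text{projection onto }\Image(\sigma^{1/2}\rho^{1/2}))$ are valid.

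The main obstacle is precisely this support/pseudoinverse bookkeeping: none of $\rho$, $\sigma$ are assumed invertible, the matrix geometric mean is defined via \Cref{eq:matrix-geometric-mean} with Moore–Penrose pseudoinverses, and the identity $X^{-1}X = \Pi_{\Image(X^*)}$ must be invoked on exactly the right subspaces for each factor. I would handle this by first proving the claim under the assumption that $\rho,\sigma$ are invertible (where $\sgn(\sqrt\sigma\sqrt\rho)$ is a genuine unitary, $\rho^{-1}\#\sigma$ is positive definite, and all the cancellations are clean), and then passing to the general case either by a direct support analysis or by a continuity/limiting argument restricted to the common support structure. A useful sanity check along the way: verify $\sqrt\sigma\sqrt\rho = W\cdot|\sqrt\sigma\sqrt\rho|$ and that $\|W^*\ket{v}\|$ behaves correctly, confirming $W$ is the claimed partial isometry rather than an arbitrary operator satisfying the algebraic identity.
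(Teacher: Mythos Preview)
Your reduction to $W = \sgn(\sqrt\sigma\sqrt\rho)$ is correct and matches the paper's argument essentially verbatim.

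For the second equivalence, your route (characterize $\sgn(M)$ as the unique partial isometry $V$ with the right support and $VM^*\ge 0$, then verify these properties for the geometric-mean expression using property~3 of $\#$, first in the invertible case and then by a limiting or support argument) is sound in principle, but the paper's proof is considerably more direct and sidesteps exactly the difficulties you flag. The paper simply writes the singular value decomposition $\sqrt\sigma\sqrt\rho = U\Sigma V^*$ and computes each factor of the second expression in this basis: one has $(\rho^{1/2}\sigma^{1/2})^{-1} = U\Sigma^{-1}V^*$ and, after observing that $\rho^{1/2}(\rho^{-1}\#\sigma)\rho^{1/2} = (\rho^{1/2}\sigma\rho^{1/2})^{1/2} = V\Sigma V^*$, the product collapses to $U\Sigma^{-1}V^*\cdot V\Sigma V^* = U\,\sgn(\Sigma)\,V^*$. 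The only support fact needed is that $\rho^{1/2}\rho^{-1/2}$ is the projection onto $\Image(\rho)$, which contains $\Image((\rho^{1/2}\sigma\rho^{1/2})^{1/2})$; this is a one-line check and replaces all of your pseudoinverse bookkeeping. In particular the paper never invokes property~3 of the geometric mean (which, as the paper itself warns, need not survive in the non-invertible pseudoinverse setting) and never passes to a limit---a real advantage, since $\sgn$ is discontinuous and a limiting argument here would require its own justification. Your approach would ultimately work, but the SVD computation is both shorter and safer.
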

\begin{proof}
From the reflection property of maximally entangled states, we have
\begin{gather*}
    \ket{C} = \id \otimes  (\sqrt{\overline{\rho}})^\top \ket{\Omega} \\
    \ket{D} = \id \otimes  (\sqrt{\overline{\sigma}})^\top \ket{\Omega}~.
\end{gather*}
On the other hand, since $\rho$ is positive semidefinite, we have
\[
    (\sqrt{\overline{\rho}})^\top = \sqrt{\overline{\rho}^\top} = \sqrt{\rho^*} = \sqrt{\rho}
\]
and similarly $(\sqrt{\overline{\sigma}})^\top = \sqrt{\sigma}$.
Thus, recalling the definition of $W$ from \Cref{eq:canonical}, we have
    \begin{align*}
        W &= \sgn(\Tr_{\reg{A}}(\ketbra{D}{C})) \\
        &= \sgn(\Tr_{\reg{A}}((\id  \otimes \sqrt{\sigma}) \ketbra{\Omega}{\Omega} (\id \otimes \sqrt{\rho} ))) \\
&= \sgn(\sqrt{\sigma} \sqrt{\rho})
    \end{align*}
    where the third line uses the fact that the partial trace of $\ketbra{\Omega}{\Omega}$ is the identity matrix.

    Let $U\Sigma V^*$ denote the singular value decomposition of the product $\sqrt{\sigma} \sqrt{\rho}$. Then $\sgn(\sqrt{\sigma} \sqrt{\rho}) = U \sgn(\Sigma) V^*$ by definition. On the other hand,
    \begin{align*}
        (\rho^{1/2} \sigma^{1/2})^{-1} \rho^{1/2} (\rho^{-1} \# \sigma) \rho^{1/2} &= (\rho^{1/2} \sigma^{1/2})^{-1} (\rho^{1/2} \sigma \rho^{1/2})^{1/2} \\
        &= (V \Sigma U^*)^{-1} (V\Sigma^2 V^*)^{1/2} \\
        &= U \Sigma^{-1} V^* V \Sigma V^* \\
        &= U \sgn(\Sigma) V^* \\
        &= \sgn(\sqrt{\sigma} \sqrt{\rho})
    \end{align*}
    where the first line uses that 
    \[
        \rho^{1/2} (\rho^{-1} \# \sigma) \rho^{1/2} = \rho^{1/2} \rho^{-1/2} (\rho^{1/2} \sigma \rho^{1/2})^{1/2} \rho^{-1/2} \rho^{1/2} = (\rho^{1/2} \sigma \rho^{1/2})^{1/2}~.
    \]
    To see this, we use two facts:
    \begin{enumerate}
        \item $\rho^{1/2} \rho^{-1/2}$ is the projection onto $\Image(\rho)$
        \item $\Image \Big((\rho^{1/2} \sigma \rho^{1/2})^{1/2} \Big) = \Image \Big(\rho^{1/2} \sigma \rho^{1/2} \Big) \subseteq \Image(\rho)$.
    \end{enumerate}
    Therefore $\rho^{1/2} \rho^{-1/2}$ acts as the identity on $(\rho^{1/2} \sigma \rho^{1/2})^{1/2}$.
    This concludes the proof of the claim. 
\end{proof}

\begin{claim}
\label{clm:basic-properties-w} The following hold:
\begin{enumerate}
    \item $WW^*$ is the projection onto $\Image(\sigma^{1/2} \rho \sigma^{1/2})$, and
    \item $W^* W$ is the projection onto $\Image(\rho^{1/2} \sigma \rho^{1/2})$.
\end{enumerate}
\end{claim}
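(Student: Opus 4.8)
The plan is to compute $W^*W$ and $WW^*$ directly from a singular value decomposition of $\sqrt{\sigma}\sqrt{\rho}$, reusing the setup from the proof of \Cref{clm:equiv-w}. Write $\sqrt{\sigma}\sqrt{\rho} = U\Sigma V^*$ with $U,V$ unitary and $\Sigma$ diagonal with nonnegative entries, and abbreviate $P_\Sigma := \sgn(\Sigma)$ for the diagonal $0/1$ projection onto the coordinates where $\Sigma$ is strictly positive; in particular $P_\Sigma = P_\Sigma^* = P_\Sigma^2$. By \Cref{clm:equiv-w} we have $W = \sgn(\sqrt{\sigma}\sqrt{\rho}) = U P_\Sigma V^*$.

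Next I would simply multiply out. Since $U^*U = \id$ and $V^*V = \id$,
\[
    W^*W = (V P_\Sigma U^*)(U P_\Sigma V^*) = V P_\Sigma^2 V^* = V P_\Sigma V^*, \qquad WW^* = (U P_\Sigma V^*)(V P_\Sigma U^*) = U P_\Sigma U^*,
\]
and each of these is Hermitian and idempotent (as $U,V$ are unitary), hence an orthogonal projection. So it only remains to identify the ranges of these two projections.

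For $W^*W = V P_\Sigma V^*$, the range is the span of those right singular vectors of $\sqrt{\sigma}\sqrt{\rho}$ with nonzero singular value, i.e. $\Image(V\Sigma) = \Image(V\Sigma U^*) = \Image\bigl((\sqrt{\sigma}\sqrt{\rho})^*\bigr) = \Image(\sqrt{\rho}\sqrt{\sigma})$. Applying the identity $\Image(A^*) = \Image(A^*A)$ (noted in \Cref{sec:prelims}) with $A = \sqrt{\sigma}\sqrt{\rho}$ gives
\[
    \Image(\sqrt{\rho}\sqrt{\sigma}) = \Image\bigl(\sqrt{\rho}\sqrt{\sigma}\,\sqrt{\sigma}\sqrt{\rho}\bigr) = \Image\bigl(\rho^{1/2}\sigma\rho^{1/2}\bigr),
\]
which is exactly part 2. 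Symmetrically, for $WW^* = U P_\Sigma U^*$ the range is the span of the left singular vectors with nonzero singular value, namely $\Image(U\Sigma) = \Image(U\Sigma V^*) = \Image(\sqrt{\sigma}\sqrt{\rho})$, and the identity $\Image(A) = \Image(AA^*)$ (the same notation fact applied to $A^*$) yields $\Image(\sqrt{\sigma}\sqrt{\rho}) = \Image\bigl(\sqrt{\sigma}\sqrt{\rho}\,\sqrt{\rho}\sqrt{\sigma}\bigr) = \Image(\sigma^{1/2}\rho\sigma^{1/2})$, which is part 1.

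There is no real obstacle here: the computation is entirely mechanical once the SVD is in hand. The only points requiring a bit of care are (i) tracking which of the two unitaries $U,V$ carries the relevant singular subspace, and (ii) correctly invoking the pseudoinverse-robust identities $\Image(A^*A)=\Image(A^*)$ and $\Image(AA^*)=\Image(A)$, valid for arbitrary (possibly singular) $A$, which let us pass freely between $\sqrt{\rho}\sqrt{\sigma}$ and $\rho^{1/2}\sigma\rho^{1/2}$ (resp. between $\sqrt{\sigma}\sqrt{\rho}$ and $\sigma^{1/2}\rho\sigma^{1/2}$). One could instead avoid the SVD and argue from the second expression for $W$ in \Cref{clm:equiv-w}, but the SVD route is the shortest.
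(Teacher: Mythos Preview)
Your proposal is correct and follows essentially the same route as the paper: both take the SVD $\sqrt{\sigma}\sqrt{\rho}=U\Sigma V^*$ from \Cref{clm:equiv-w}, compute $WW^*=U\sgn(\Sigma)U^*$ and $W^*W=V\sgn(\Sigma)V^*$, and identify these as projections onto $\Image(\sigma^{1/2}\rho\sigma^{1/2})$ and $\Image(\rho^{1/2}\sigma\rho^{1/2})$ respectively. You are slightly more explicit than the paper in justifying the range identifications via the identities $\Image(A^*A)=\Image(A^*)$ and $\Image(AA^*)=\Image(A)$, but the argument is the same.
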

\begin{proof}
These items follow from the proof of \Cref{clm:equiv-w}, which established that $W = \sgn(\sqrt{\sigma} \sqrt{\rho})$. Let $\sqrt{\sigma} \sqrt{\rho}$ have singular value decomposition $U \Sigma V^*$. Then $W = U \sgn(\Sigma) V^*$. Therefore
\[
    WW^* = U \sgn(\Sigma) U^*
\]
which is the projection onto the image of $\sigma^{1/2} \rho \sigma^{1/2} = U \Sigma U^*$. Similarly, $W^* W = V \sgn(\Sigma) V^*$ which is the projection onto the image of $\rho^{1/2} \sigma \rho^{1/2} = V \Sigma V^*$.

\end{proof}
We verify the completeness property of the canonical Uhlmann transformation. 
\begin{claim} \label{clm:completeness} For all unitary completions $U$ of $W$, we have
$\bra{D} \id \ot U \ket{C} = \fidelity(\rho,\sigma)$. 
\end{claim}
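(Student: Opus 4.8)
The plan is to reduce the claim to a one-line singular-value computation. By the discussion at the start of \Cref{sec:rigidity} it suffices to treat the case $X = Y = \id$, so that (via \Cref{clm:reflection-property} together with the identities $(\sqrt{\overline{\rho}})^\top = \sqrt{\rho}$ and $(\sqrt{\overline{\sigma}})^\top = \sqrt{\sigma}$ already used in the proof of \Cref{clm:equiv-w}) we may write $\ket{C} = \id \ot \sqrt{\rho}\,\ket{\Omega}$ and $\ket{D} = \id \ot \sqrt{\sigma}\,\ket{\Omega}$. First I would expand the overlap using $\bra{\Omega}(\id \ot M)\ket{\Omega} = \Tr(M)$:
\[
    \bra{D}\,\id \ot U\,\ket{C} \;=\; \Tr\!\bigl(\sqrt{\sigma}\,U\,\sqrt{\rho}\bigr) \;=\; \Tr\!\bigl(U\,(\sqrt{\sigma}\sqrt{\rho})^{*}\bigr)~.
\]

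Next I would introduce the singular value decomposition $\sqrt{\sigma}\sqrt{\rho} = U_0 \Sigma V_0^{*}$ (with $U_0, V_0$ unitary and $\Sigma$ diagonal, entrywise nonnegative), exactly as in the proofs of \Cref{clm:equiv-w,clm:basic-properties-w}. In this notation $W = U_0\,\sgn(\Sigma)\,V_0^{*}$ and $W^{*}W = V_0\,\sgn(\Sigma)\,V_0^{*}$, where $\sgn(\Sigma)$ is the diagonal projection onto the coordinates $i$ with $\Sigma_{ii} > 0$. Substituting the SVD into the overlap above gives $\bra{D}\,\id\ot U\,\ket{C} = \Tr(U_0^{*} U V_0\,\Sigma) = \sum_{i:\,\Sigma_{ii} > 0} (U_0^{*} U V_0)_{ii}\,\Sigma_{ii}$, the coordinates outside the support of $\Sigma$ contributing nothing because they are multiplied by $0$. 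So the whole claim comes down to showing that $(U_0^{*} U V_0)_{ii} = 1$ whenever $\Sigma_{ii} > 0$.

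This is the step where being a \emph{unitary completion} of $W$ is used: $U$ must agree with $W$ on $\Dom(W)$, i.e.\ $U\,W^{*}W = W$. Substituting $W^{*}W = V_0\,\sgn(\Sigma)\,V_0^{*}$ and $W = U_0\,\sgn(\Sigma)\,V_0^{*}$ and cancelling $V_0^{*}$ on the right yields $U V_0\,\sgn(\Sigma) = U_0\,\sgn(\Sigma)$, i.e.\ the $i$-th columns of $U V_0$ and of $U_0$ coincide for every $i$ in the support of $\Sigma$; hence $U_0^{*} U V_0\, e_i = e_i$ and in particular $(U_0^{*} U V_0)_{ii} = 1$ for such $i$. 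Therefore
\[
    \bra{D}\,\id\ot U\,\ket{C} \;=\; \sum_{i:\,\Sigma_{ii} > 0}\Sigma_{ii} \;=\; \Tr(\Sigma) \;=\; \bigl\|\sqrt{\sigma}\sqrt{\rho}\bigr\|_1 \;=\; \Tr\!\sqrt{\rho^{1/2}\sigma\rho^{1/2}} \;=\; \fidelity(\rho,\sigma)~,
\]
which is the claim. The calculation is essentially bookkeeping with the SVD; the one point that needs care — and which I expect to be the only real subtlety — is the precise meaning of ``unitary completion'' and the fact that the freedom in $U$ outside $\Dom(W)$ (and outside $\Image(W)$) cannot affect the overlap. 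That is handled cleanly above by the observation that those coordinates are annihilated by $\Sigma$.
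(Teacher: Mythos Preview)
Your proof is correct, but it takes a different route from the paper's. The paper first checks directly that $\bra{D}\,\id\ot W\,\ket{C}=\fidelity(\rho,\sigma)$ via the identity $\Tr(A\,\sgn(A^{*}))=\Tr\sqrt{AA^{*}}$, and then handles an arbitrary completion $U=W+E$ by a \emph{contradiction} argument: if $\bra{D}\,\id\ot E\,\ket{C}\neq 0$, one can twist the extension by a phase to obtain a unitary $U'=W+e^{i\theta}E$ whose overlap strictly exceeds $\fidelity(\rho,\sigma)$, contradicting Uhlmann's theorem. Your argument is instead a direct SVD computation: you write the overlap as $\Tr(U_0^{*}UV_0\,\Sigma)$ and use the completion condition $U\,W^{*}W=W$ to force the relevant diagonal entries of $U_0^{*}UV_0$ to equal $1$, so the sum collapses to $\Tr(\Sigma)=\fidelity(\rho,\sigma)$. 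Your approach is more self-contained (it does not invoke Uhlmann's theorem as a black box) and shows immediately that the overlap is real without any phase discussion; the paper's argument is slicker in that it avoids unpacking the SVD again, at the cost of appealing to the optimality statement.
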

\begin{proof}
First, observe that 
\begin{align*}
     \bra{D} \id \ot W \ket{C} = \Tr(\sigma^{1/2} W \rho^{1/2}) = \Tr(\rho^{1/2} \sigma^{1/2} \, \sgn(\sigma^{1/2} \rho^{1/2})) = \Tr((\rho^{1/2} \sigma \rho^{1/2})^{1/2}) = \fidelity(\rho,\sigma)
\end{align*}
where we used the fact that for all square matrices $A$, $\Tr(A \, \sgn(A^*)) = \Tr(\sqrt{AA^*})$ and we used the definition of the fidelity function.

Now let $U = W + E$ be a unitary completion of $W$. Suppose for sake of contradiction that $\bra{D} \id \otimes U \ket{C} \neq \fidelity(\rho,\sigma)$. Then this implies that $\bra{D} \id \otimes E \ket{C} \neq 0$. Let $e^{i \theta}$ be a complex phase such that $e^{i \theta} \bra{D} \id \otimes E \ket{C}$ is a strictly positive number. Then consider the unitary $U' = W + e^{i \theta} E$. Then
\[
    \bra{D} \id \otimes U' \ket{C} = \fidelity(\rho,\sigma) + e^{i \theta} \bra{D} \id \otimes E \ket{C} > \fidelity(\rho,\sigma)
\]
which contradicts Uhlmann's theorem. 
\end{proof} 

\subsection{The rigidity proof}

Now we prove that the canonical Uhlmann transformation $W$ has $\delta(\eps)$-robust rigidity for $\delta(\eps) = (2\kappa/\eta) \eps$. We do this by setting up a semidefinite program (SDP) whose objective is to maximize the distance
\[
\| \id \ot W \ket{C} - \id \ot R W^* W \ket{C} \|^2
\]
when ranging over all unitary operators $R$ satisfying $\bra{D} \id \ot R \ket{C} \geq \fidelity(\rho,\sigma) - \eps$. We analyze the dual SDP, which is a minimization problem, and provide a feasible solution whose objective is $2\kappa \eps/\eta$, which by (weak) SDP duality gives an upper bound on the objective value of the primal SDP. 

For convenience, we present some helpful notation for the rest of the proof:
\begin{itemize}
    \item $W$ is the canonical Uhlmann transformation corresponding to $(\ket{C},\ket{D})$,
    \item $A = \sqrt{\sigma} \sqrt{\rho}$,
    \item $P = W^* W$ is the projection onto $\Image(\rho^{1/2} \sigma \rho^{1/2})$ by \Cref{clm:basic-properties-w}. 
\end{itemize}
The robustness function $\delta(\eps)$ of $W$ can be cast as the optimum of the following optimization problem (P), where $R$ ranges over all matrices.
\begin{equation}
\label{eqn:og_optimization}
\tag{P}
\begin{array}{ll@{}ll}
\underset{R}{\text{maximize}}  & \| \id \ot (W - R)P \ket{C} \|^2 &\\
\\
\text{subject to}& \displaystyle 
\bra{D} \id \ot R \ket{C} \geq \fidelity(\rho,\sigma) - \eps  \\
\\
& \displaystyle R^* R = \id \,.
\end{array} 
\end{equation}
The next claim shows how to recast this optimization problem as a standard-form SDP.

\begin{claim} \label{cl:std_sdp}
Let $\omega$ denote the value of the following SDP:
\begin{equation}
\tag{Primal SDP}
\begin{array}{ll@{}ll}
\underset{X}{\text{maximize}}  & \Tr(CX) &\\
\\
\text{subject to}& \displaystyle 
\Phi(X) = B  \\
\\
& \displaystyle X \geq 0
\end{array}
\label{eqn:primal-sdp}
\end{equation}
where $X$ is a block-matrix $\begin{pmatrix} X_1 & R \\ R^* & X_2 \end{pmatrix}$, the matrix $C = \frac{1}{2} \begin{pmatrix} 0 & - W \, \rho P\\ 
- P \rho \, W^*   & 0 \end{pmatrix}$, $\Phi$ is the Hermiticity-preserving superoperator
\[ 
    \Phi(X) = \begin{pmatrix} X_1 &  &\\  & X_2 &  \\ & & \frac{1}{2} \Tr(R A^*) + \frac{1}{2} \Tr(R^* A)  
    \end{pmatrix} \,,
\]
where $A = \sigma^{1/2} \rho^{1/2}$, and $B = \begin{pmatrix} \id & & \\ & \id & \\ & & \fidelity(\rho,\sigma) - \eps \end{pmatrix}$. 
Then $2(\omega + \Tr(P\rho))$ is an upper bound to the optimal value of the optimization problem (\ref{eqn:og_optimization}).
\end{claim}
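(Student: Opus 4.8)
The plan is to recast the optimization problem~(\ref{eqn:og_optimization}) as a linear objective over positive semidefinite block matrices in three moves: rewrite the objective and constraints purely in terms of $\rho,\sigma,W,P$; argue that the fidelity constraint may be taken to hold with equality; and relax the isometry constraint $R^*R=\id$ to the contraction constraint $R^*R\leq\id$, which is exactly what the PSD block matrix $X$ with $X_1=X_2=\id$ encodes.

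First I would put $\ket{C},\ket{D}$ into a convenient form. Since the conjugated reduced density matrices $\rho,\sigma$ are Hermitian, \Cref{clm:pure-state} and the reflection property (\Cref{clm:reflection-property}) give $\ket{C}=\id\ot\sqrt{\rho}\,\ket{\Omega}$ and $\ket{D}=\id\ot\sqrt{\sigma}\,\ket{\Omega}$. Using $\|\id\ot M\ket{\Omega}\|^2=\Tr(M^*M)$ together with the partial-isometry identities $W^*W=P$ (\Cref{clm:basic-properties-w}), $WP=W$, $PW^*=W^*$, and the hypothesis $R^*R=\id$, one expands
\[
    \|\id\ot(W-R)P\ket{C}\|^2 \;=\; \Tr\!\big((W-R)^*(W-R)\,P\rho P\big) \;=\; 2\Tr(P\rho)-2\,\mathrm{Re}\,\Tr(RP\rho W^*)~,
\]
and similarly $\bra{D}\,\id\ot R\,\ket{C}=\Tr(R\,\rho^{1/2}\sigma^{1/2})=\Tr(RA^*)$, so the fidelity constraint reads $\mathrm{Re}\,\Tr(RA^*)\geq\fidelity(\rho,\sigma)-\eps$. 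Matching this against the SDP data, for a block matrix $X=\left(\begin{smallmatrix}\id & R\\ R^* & \id\end{smallmatrix}\right)$ one has $\Tr(CX)=-\mathrm{Re}\,\Tr(RP\rho W^*)$ (the diagonal blocks of $C$ vanish, so $X_1,X_2$ do not enter the objective), $\Phi(X)=B$ is equivalent to $\mathrm{Re}\,\Tr(RA^*)=\fidelity(\rho,\sigma)-\eps$, and $X\geq0$ is equivalent to $R^*R\leq\id$ by the Schur complement lemma (\Cref{lem:schur}). Hence the objective of~(\ref{eqn:og_optimization}) at a feasible $R$ equals exactly $2\Tr(P\rho)+2\Tr(CX)$.

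The only subtlety is that~(\ref{eqn:og_optimization}) has an inequality fidelity constraint while $\Phi(X)=B$ forces equality, so I would show that any feasible $R$ may be replaced, without decreasing the objective, by a contraction $\widetilde{R}$ with $\mathrm{Re}\,\Tr(\widetilde{R}A^*)=\fidelity(\rho,\sigma)-\eps$. Fix a unitary completion $W'$ of $W$; by \Cref{clm:completeness} it satisfies $\Tr(W'A^*)=\fidelity(\rho,\sigma)$, and $W'P=W$. If $\mathrm{Re}\,\Tr(RA^*)>\fidelity(\rho,\sigma)-\eps$, set $\widetilde{R}=(1-t)R-tW'$ with $t\in(0,1]$ chosen so that the fidelity constraint becomes tight. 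Then $\|\widetilde{R}\|_\infty\leq1$ by convexity, and a direct computation gives that the change in the objective equals $2t\big(\mathrm{Re}\,\Tr(RP\rho W^*)+\Tr(P\rho)\big)$, which is nonnegative because $|\mathrm{Re}\,\Tr(RP\rho W^*)|\leq\|P\rho W^*\|_1=\Tr(P\rho)$ (the norm identity following from $W^*W=P$). Combining the three moves: for every $R$ feasible for~(\ref{eqn:og_optimization}) there is an SDP-feasible $X$ with $2\Tr(P\rho)+2\Tr(CX)$ at least the objective of~(\ref{eqn:og_optimization}) at $R$; taking the supremum over such $R$ yields the upper bound $2(\omega+\Tr(P\rho))$.

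I expect the tightening step to be the one place requiring genuine care: both the objective and the SDP ``see'' the fidelity constraint only at equality, so one must rule out near-optimal $R$ leaving slack in $\bra{D}\,\id\ot R\,\ket{C}$, and the clean fix is the explicit perturbation toward $-W'$, whose validity rests on $\|P\rho W^*\|_1=\Tr(P\rho)$. The algebraic rewriting in the first two moves is otherwise routine once the states are written in $\id\ot\sqrt{\cdot}\,\ket{\Omega}$ form.
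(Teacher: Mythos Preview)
Your proposal is correct and follows essentially the same route as the paper: relax the unitarity constraint $R^*R=\id$ to a contraction constraint (encoded via the PSD block matrix and \Cref{lem:schur}), linearize the objective, and argue that the fidelity inequality can be tightened to equality. Your explicit perturbation $\widetilde R=(1-t)R-tW'$ together with the identity $\|P\rho W^*\|_1=\Tr(P\rho)$ in fact makes the tightening step more rigorous than the paper's, which simply asserts that ``any $R$ that achieves a better-than-necessary fidelity can be perturbed in a way that lowers the fidelity, but does not decrease the distance'' without exhibiting the perturbation.
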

\begin{proof}

We first relax the constraint on $R^* R = \id$ in (\ref{eqn:og_optimization}) to $R^* R \leq \id$; an upper bound for this relaxed optimization problem is also an upper bound for (\ref{eqn:og_optimization}).
Using \Cref{lem:schur} we observe that $R^* R \leq \id$ is equivalent to $\begin{pmatrix} \id & R \\ R^* & \id \end{pmatrix} \geq 0$.

Next, we observe that we can replace the condition $\bra{D} \id \ot R \ket{C}  \geq \fidelity(\rho,\sigma) - \eps$ in (\ref{eqn:og_optimization}) by $\bra{D} \id \ot R \ket{C}  = \fidelity(\rho,\sigma) - \eps$, i.e., the maximizer in (\ref{eqn:og_optimization}) will always satisfy this condition with equality.
This is because any $R$ that achieves a better-than-necessary fidelity can be perturbed in a way that lowers the fidelity, but does not decrease the distance $\| \id \ot (W - R)P \ket{C} \|^2$ from the optimal Uhlmann transform.
Secondly, the condition $\bra{D} \id \ot R \ket{C}  = \fidelity(\rho,\sigma) - \eps$ implicitly imposes that $R$ is chosen such that $\bra{D} \id \ot R \ket{C}$ is real; we can relax this condition to $\Re \bra{D} \id \ot R \ket{C}  = \fidelity(\rho,\sigma) - \eps$, or equivalently
\begin{align*}
\frac{1}{2} \Tr(R A^*) + \frac{1}{2} \Tr(R^* A) = \fidelity(\rho,\sigma) - \eps \,,
\end{align*}
where $A = \sigma^{1/2} \rho^{1/2}$.

Third, we turn to the objective function in (\ref{eqn:og_optimization}).
Using that $P = W^* W$ is a projection and $P W^* = W^*$, as well as the condition $R^* R \leq \id$, we get
\begin{align*}
\| \id \ot (W - R)P \ket{C} \|^2 &= \bra{C} \id \ot P W^* W P \ket{C} + \bra{C} \id \ot R R^* R P \ket{C} - 2 \Re \bra{C} P W^* R P \ket{C} \\
&\leq 2 \; \Tr (P \rho) - \big( \Tr(R P \rho \, W^*) + \Tr( W \rho\, P R^* )  \big) \,.
\end{align*}
Thus we see that the optimization problem (\ref{eqn:og_optimization}) is upper-bounded the following SDP, except the objective value is scaled by $2$ and shifted by a constant $2\Tr(P \rho)$:
\begin{equation*}
\begin{array}{ll@{}ll}
\text{maximize}  & - \frac{1}{2} \Tr(R P \rho \, W^*) - \frac{1}{2} \Tr( W \rho\, P R^* ) &\\
\\
\text{subject to}& \displaystyle 
\begin{pmatrix} \id & R \\ R^* & \id \end{pmatrix} \geq 0  \\
\\
& \displaystyle \frac{1}{2} \Tr(R A^*) + \frac{1}{2} \Tr(R^* A) = \fidelity(\rho,\sigma) - \eps~.
\end{array}
\end{equation*}
Putting this SDP into standard form yields \Cref{cl:std_sdp}.
\end{proof}

We now analyze the dual of (\ref{eqn:primal-sdp}). 
\begin{claim} \label{cl:sdp_dual}
The following is the SDP dual of (\ref{eqn:primal-sdp}):
\begin{align}
\tag{Dual SDP}
\begin{array}{ll@{}ll}
\underset{Y}{\text{minimize}}  & \Tr(Y_1) + \Tr(Y_2) + \alpha (\fidelity(\rho,\sigma) - \eps)     &\\
\\
\text{subject to}& \displaystyle 
\begin{pmatrix} Y_1 & \frac{1}{2} \alpha A  \\ \frac{1}{2} \alpha A^* & Y_2 \end{pmatrix} \geq \frac{1}{2} \begin{pmatrix} & -  W \, \rho P\\ -P \rho \, W^* & \end{pmatrix}   \\
\\
& \displaystyle Y \text{ Hermitian}~.
\end{array}
\label{eqn:dual-sdp}
\end{align}
where $Y$ ranges over all block matrices of the form $\begin{pmatrix} Y_1 & * & * \\ * & Y_2 & * \\ * & * & \alpha \end{pmatrix}$ with $\alpha \in \R$. 
\end{claim}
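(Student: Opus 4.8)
The plan is to apply, verbatim, the standard-form SDP duality recipe recalled in \Cref{sec:prelims}: the dual of $\max \Tr(CX)$ subject to $\Phi(X) = B$, $X \ge 0$ is $\min \Tr(BY)$ subject to $\Phi^*(Y) \ge C$ and $Y$ Hermitian, where $\Phi^*$ is the adjoint superoperator characterized by $\langle Y, \Phi(X)\rangle = \langle \Phi^*(Y), X\rangle$ for all operators $X,Y$. So the only two computations are the dual objective $\Tr(BY)$ and the adjoint $\Phi^*$.

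First I would pin down the block structure of the dual variable $Y$. Since the codomain of $\Phi$ consists of matrices that are block-diagonal with two operator blocks (matching the sizes of $X_1$ and $X_2$) and one $1 \times 1$ scalar block, a general Hermitian $Y$ in this space has the form $Y = \begin{pmatrix} Y_1 & * & * \\ * & Y_2 & * \\ * & * & \alpha \end{pmatrix}$ with $Y_1, Y_2$ Hermitian and $\alpha \in \R$; because $\Phi(X)$ has vanishing off-diagonal blocks, the $*$ entries of $Y$ never enter $\langle Y, \Phi(X)\rangle$ and hence can be dropped, so the dual variable is effectively $(Y_1, Y_2, \alpha)$. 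As $B$ is block-diagonal with blocks $\id$, $\id$, and $\fidelity(\rho,\sigma) - \eps$, the objective is immediately $\Tr(BY) = \Tr(Y_1) + \Tr(Y_2) + \alpha\big(\fidelity(\rho,\sigma) - \eps\big)$, which is the stated dual objective.

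Next I would compute $\Phi^*$ block by block. Writing $X = \begin{pmatrix} X_1 & R \\ R^* & X_2\end{pmatrix}$, one has $\langle Y, \Phi(X)\rangle = \Tr(Y_1 X_1) + \Tr(Y_2 X_2) + \alpha\big(\tfrac12\Tr(RA^*) + \tfrac12\Tr(R^*A)\big)$. The first two terms pair $X$ with the block-diagonal matrix $\mathrm{diag}(Y_1, Y_2)$. For the scalar term, I would use the elementary identity $\Tr\!\left( \begin{pmatrix} 0 & M_{12} \\ M_{21} & 0 \end{pmatrix} X \right) = \Tr(M_{12}R^*) + \Tr(M_{21}R)$ to recognize $\tfrac{\alpha}{2}\Tr(AR^*) + \tfrac{\alpha}{2}\Tr(A^*R)$ as the pairing of $X$ with $\begin{pmatrix} 0 & \tfrac{\alpha}{2}A \\ \tfrac{\alpha}{2}A^* & 0 \end{pmatrix}$. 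Summing, $\Phi^*(Y) = \begin{pmatrix} Y_1 & \tfrac{\alpha}{2}A \\ \tfrac{\alpha}{2}A^* & Y_2\end{pmatrix}$, so the constraint $\Phi^*(Y) \ge C$ with $C = \tfrac12 \begin{pmatrix} 0 & -W\rho P \\ -P\rho W^* & 0\end{pmatrix}$ is precisely the semidefinite constraint in the claimed Dual SDP; the Hermiticity of $Y$ reduces to $Y_1, Y_2$ Hermitian and $\alpha$ real, as asserted.

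There is no real obstacle; the content is a routine application of the duality formula. The one point requiring care is the adjoint of the map $R \mapsto \tfrac12\Tr(RA^*) + \tfrac12\Tr(R^*A)$ that sends the off-diagonal block of $X$ to the scalar slot: one must track conjugate transposes so that its adjoint places $A$ in the $(1,2)$ block and $A^*$ in the $(2,1)$ block (not the reverse), and check that the resulting $\Phi^*$ is Hermiticity-preserving — which it is, being the adjoint of the Hermiticity-preserving $\Phi$.
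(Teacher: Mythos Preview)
Your proposal is correct and follows essentially the same route as the paper: both apply the standard duality recipe, compute $\Tr(BY)$ from the block-diagonal form of $B$, and identify $\Phi^*$ by verifying $\Tr(Y\Phi(X)) = \Tr(\Phi^*(Y)X)$ termwise. Your exposition is slightly more explicit about why the off-diagonal blocks of $Y$ are irrelevant and about the block-trace identity used for the scalar slot, but the argument is the same.
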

\begin{proof}
Taking the dual of (\ref{eqn:primal-sdp}) we get
\begin{equation*}
\begin{array}{ll@{}ll}
\text{minimize}  & \Tr(BY) &\\
\\
\text{subject to}& \displaystyle 
\Phi^*(Y) \geq C  \\
\\
& \displaystyle Y \text{ Hermitian}
\end{array}
\end{equation*}
We compute the adjoint $\Phi^*$. For all block matrices $Y = \begin{pmatrix} Y_1 & * & * \\ * & Y_2 & * \\ * & * & \alpha \end{pmatrix}$ where $\alpha$ is a scalar, we claim that $\Phi^*$ evaluated on $Y$ is given by
\[
    \Phi^*(Y) = \begin{pmatrix} Y_1 & \frac{1}{2} \alpha A \\ \frac{1}{2} \alpha A^* & Y_2 \end{pmatrix}~.
\]
We can verify that this is the adjoint because 
\[
    \Tr(Y \Phi(X)) = \Tr(Y_1 X_1) + \Tr(Y_2 X_2) + \frac{1}{2} \alpha \Tr(R A^*) + \frac{1}{2} \alpha \Tr(R^* A)  = \Tr(\Phi^*(Y) X)~. 
\]
It then follows from weak SDP duality that an upper bound on the value of the dual SDP is also an upper bound on the value of the primal SDP.
\end{proof}

We now consider feasible certificates for (\ref{eqn:dual-sdp}), which give \emph{upper bounds} on the value of (\ref{eqn:dual-sdp}), which by weak SDP duality gives upper bounds on the value of (\ref{eqn:primal-sdp}).

\begin{claim} \label{cl:choose_Y}
For all $\alpha \in \mathbb{R}$, the matrix 
\begin{align*}
Y = \begin{pmatrix}
    \sqrt{T^* T} && \\
    & T (\sqrt{T^* T})^{-1} T^* & \\
    && \alpha
\end{pmatrix}
\end{align*}
is feasible for (\ref{eqn:dual-sdp}), where $T = \frac{1}{2} \left( \alpha A^* + P \rho W^* \right)$, and the corresponding value of the objective function is
\[
2 \| T \|_1 + \alpha (\fidelity(\rho,\sigma) - \eps)~.
\]
\end{claim}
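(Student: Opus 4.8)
The plan is to recast the matrix inequality defining feasibility in (\ref{eqn:dual-sdp}) as a single positive-semidefiniteness condition, apply the Schur complement lemma (\Cref{lem:schur}) to reduce feasibility to three elementary conditions, verify those for the proposed $Y$, and then read off the objective value.

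First I would move the right-hand side of the constraint in (\ref{eqn:dual-sdp}) to the left. Since $\alpha \in \R$ and $(P\rho W^*)^* = W\rho P$, the constraint is equivalent to
\[
\begin{pmatrix} Y_1 & T^* \\ T & Y_2 \end{pmatrix} \geq 0, \qquad T := \tfrac{1}{2}\left( \alpha A^* + P\rho W^* \right).
\]
By \Cref{lem:schur} (applied with $A \to Y_1$, $B \to T^*$, $C \to Y_2$), this holds if and only if (i) $Y_1 \geq 0$, (ii) $(\id - Y_1 Y_1^{-1})T^* = 0$, and (iii) $Y_2 \geq T Y_1^{-1} T^*$.

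Next I would check these for $Y_1 = \sqrt{T^*T}$ and $Y_2 = T(\sqrt{T^*T})^{-1}T^*$. Condition (i) is immediate since $\sqrt{T^*T}$ is a positive semidefinite square root. For (ii), $Y_1 Y_1^{-1}$ is the orthogonal projection onto $\Image(\sqrt{T^*T}) = \Image(T^*T) = \Image(T^*)$ (using $\Image(A^*A) = \Image(A^*)$ from \Cref{sec:prelims}), so it acts as the identity on the range of $T^*$, giving $(\id - Y_1 Y_1^{-1})T^* = 0$. For (iii), I would use the polar decomposition $T = V|T|$ with $|T| = \sqrt{T^*T}$ and $V$ a partial isometry with $\Dom(V) = \Image(|T|)$: then $T(\sqrt{T^*T})^{-1}T^* = V\,(|T|\,|T|^{-1})\,|T|\,V^* = V|T|V^* = \sqrt{TT^*}$, because $|T|V^*$ has range inside $\Image(|T|)$. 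Hence $Y_2 = T Y_1^{-1} T^*$ exactly, so (iii) holds with equality.

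Finally, for the objective value I would use $\Tr(Y_1) = \Tr\sqrt{T^*T} = \|T\|_1$ and $\Tr(Y_2) = \Tr\sqrt{TT^*} = \|T\|_1$, since the two Hermitian square roots have the same nonzero spectrum; thus $\Tr(Y_1) + \Tr(Y_2) + \alpha(\fidelity(\rho,\sigma) - \eps) = 2\|T\|_1 + \alpha(\fidelity(\rho,\sigma) - \eps)$ as claimed. The only step requiring genuine care is the bookkeeping of images and kernels under the Moore-Penrose pseudoinverse in (ii) and (iii) — in particular the identity $T(\sqrt{T^*T})^{-1}T^* = \sqrt{TT^*}$ — but this is exactly the polar-decomposition manipulation already carried out in the proof of \Cref{clm:equiv-w}, so no new difficulty arises.
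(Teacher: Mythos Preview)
Your proposal is correct and follows essentially the same route as the paper: rewrite the dual constraint as positive semidefiniteness of $\begin{pmatrix} Y_1 & T^* \\ T & Y_2 \end{pmatrix}$, verify the Schur-complement conditions for $Y_1 = \sqrt{T^*T}$ and $Y_2 = T(\sqrt{T^*T})^{-1}T^*$, and then read off the objective. The only cosmetic difference is that you take the extra step of identifying $Y_2 = \sqrt{TT^*}$ via the polar decomposition before computing its trace, whereas the paper computes $\Tr(Y_2)$ directly by cyclicity of trace and the pseudoinverse identity $\Tr(T(\sqrt{T^*T})^{-1}T^*) = \Tr(\sqrt{T^*T})$; both are equally valid.
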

\begin{proof}
Our choice of $Y$ is manifestly self-adjoint, so we only need to check that it satisfies the constraint 
\begin{align*}
\begin{pmatrix} Y_1 & \frac{1}{2} \alpha A  \\ \frac{1}{2} \alpha A^* & Y_2 \end{pmatrix} \geq \frac{1}{2} \begin{pmatrix} & -  W \, \rho PW\\ -P \rho \, W^* & \end{pmatrix}  \,.
\end{align*}
For our choice of $Y$, this is equivalent to 
\begin{align*}
\begin{pmatrix} \sqrt{T^* T} & T^*  \\ T & T (\sqrt{T^* T})^{-1} T^* \end{pmatrix} \geq 0 \,.
\end{align*}
This follows easily from \Cref{lem:schur}.
The only non-trivial condition to check is that $(\id - \sqrt{T^* T} (\sqrt{T^* T})^{-1}) T^* = 0$.
This holds because $\sqrt{T^* T} (\sqrt{T^* T})^{-1}$ is the projector onto $\Dom(\sqrt{T^* T}) = \Dom(T^* T) = \Dom(T) = \Image(T^*)$.

The value achieved by this $Y$ is 
\begin{align*}
\Tr(Y_1) + \Tr(Y_2) + \alpha (\fidelity(\rho,\sigma) - \eps) 
&= \Tr(\sqrt{T^* T}) + \Tr(T (\sqrt{T^* T})^{-1} T^*) + \alpha (\fidelity(\rho,\sigma) - \eps) \\
&= 2 \lVert{T}\rVert_1 + \alpha (\fidelity(\rho,\sigma) - \eps) \,,
\end{align*}
where the last line holds because $\Tr(T (\sqrt{T^* T})^{-1} T^*) = \Tr(\sqrt{T^* T} (\sqrt{T^* T})^{-1} \sqrt{T^* T}) = \Tr(\sqrt{T^* T}) = \lVert{T}\rVert_1$ by the properties of the pseudoinverse.
\end{proof}

Recall that $\kappa = \| \rho^{-1/2} P \rho^{1/2} \|_\infty^2$ and $\eta$ is the smallest nonzero eigenvalue of $\rho^{-1} \# \sigma$. The next claim computes a bound on the value of the dual SDP for $\alpha = -\kappa/\eta$. 

\begin{claim}
\label{clm:dual-sdp-bound}
Let $\alpha = -\kappa/\eta$. Then for the feasible solution $Y$ given by \Cref{cl:choose_Y} achieves value $(\kappa/\eta) \eps - \Tr(P \rho)$.
\end{claim}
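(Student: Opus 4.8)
The plan is to plug $\alpha = -\kappa/\eta$ into the objective value $2\|T\|_1 + \alpha(\fidelity(\rho,\sigma)-\eps)$ from \Cref{cl:choose_Y} and show that, with this choice, the Schatten-$1$ norm $\|T\|_1$ collapses to something whose trace exactly cancels against the $\alpha \fidelity(\rho,\sigma)$ term, leaving the claimed value $(\kappa/\eta)\eps - \Tr(P\rho)$. Concretely, $T = \tfrac12(\alpha A^* + P\rho W^*)$ with $A = \sigma^{1/2}\rho^{1/2}$, so $A^* = \rho^{1/2}\sigma^{1/2}$ and $T = \tfrac12\big(\alpha \rho^{1/2}\sigma^{1/2} + P\rho W^*\big)$. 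The key is that both summands are, up to the scalar $\alpha$ and the sign function, expressible through the SVD $U\Sigma V^*$ of $A = \sigma^{1/2}\rho^{1/2}$ — recall from \Cref{clm:equiv-w} and its proof that $W = \sgn(\sigma^{1/2}\rho^{1/2})^* = \sgn(A)^* $ wait, more carefully $W = \sgn(\sqrt\sigma\sqrt\rho)$, and $P = W^*W$ is the projection onto $\Image(\rho^{1/2}\sigma\rho^{1/2})$. I expect that after substituting these SVD forms, $T$ factors as a single positive operator times a partial isometry — i.e. $T$ is essentially $\sgn$ applied to something — so that $\sqrt{T^*T}$ is computed explicitly and $\|T\|_1 = \Tr\sqrt{T^*T}$ becomes a clean trace expression in $\rho, \sigma, P$.

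The first step is therefore to simplify $P\rho W^*$. Using $W = \sgn(\sqrt\sigma\sqrt\rho) = U\sgn(\Sigma)V^*$, we have $W^* = V\sgn(\Sigma)U^*$ and $P = W^*W = V\sgn(\Sigma)V^*$, so $P\rho W^* = V\sgn(\Sigma)V^*\rho V \sgn(\Sigma) U^*$. Meanwhile $\alpha A^* = \alpha \rho^{1/2}\sigma^{1/2} = \alpha (V\Sigma U^*)^* $ — careful, $A = \sigma^{1/2}\rho^{1/2} = U\Sigma V^*$ so $A^* = V\Sigma U^*$. Then $T = \tfrac12(\alpha V\Sigma U^* + V\sgn(\Sigma)V^*\rho V\sgn(\Sigma)U^*) = \tfrac12 V\big(\alpha\Sigma + \sgn(\Sigma)(V^*\rho V)\sgn(\Sigma)\big)U^*$. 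So $T = V M U^*$ where $M = \tfrac12\big(\alpha\Sigma + \sgn(\Sigma)\tilde\rho\,\sgn(\Sigma)\big)$ with $\tilde\rho := V^*\rho V$, and $T^*T = U M^*M U^*$, $\sqrt{T^*T} = U\sqrt{M^*M}U^*$, $\|T\|_1 = \Tr\sqrt{M^*M} = \|M\|_1$. Now the crucial sign question: if $\alpha < 0$ and the spectral gap condition forces the eigenvalue structure of $M$ restricted to $\Image(\sgn\Sigma)$ to be uniformly negative (this is exactly where $\eta$, the smallest nonzero eigenvalue of $\rho^{-1}\#\sigma$, enters — one translates $\|\rho^{-1/2}P\rho^{1/2}\|_\infty \le \sqrt\kappa$ and the $\eta$-bound into a statement that $\sgn(\Sigma)\tilde\rho\,\sgn(\Sigma) \le (\kappa/\eta)\Sigma$ on that subspace, i.e. $-\alpha\Sigma \ge \sgn(\Sigma)\tilde\rho\,\sgn(\Sigma)$), then $M = \tfrac12(\alpha\Sigma + \sgn(\Sigma)\tilde\rho\sgn(\Sigma)) \le 0$ on the relevant subspace and $\|M\|_1 = -\tfrac12\alpha\Tr\Sigma - \tfrac12\Tr(\sgn(\Sigma)\tilde\rho\,\sgn(\Sigma))$ on it, i.e. $\|M\|_1 = -\tfrac12\alpha\Tr\Sigma - \tfrac12\Tr(P\rho)$ after undoing the conjugation (using $\Tr(\sgn(\Sigma)V^*\rho V\sgn(\Sigma)) = \Tr(V\sgn\Sigma V^*\rho) = \Tr(P\rho)$ and $\Tr\Sigma = \Tr\sqrt{\rho^{1/2}\sigma\rho^{1/2}} = \fidelity(\rho,\sigma)$).

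Assembling: $2\|T\|_1 + \alpha(\fidelity(\rho,\sigma)-\eps) = -\alpha\fidelity(\rho,\sigma) - \Tr(P\rho) + \alpha\fidelity(\rho,\sigma) - \alpha\eps = -\Tr(P\rho) - \alpha\eps = -\Tr(P\rho) + (\kappa/\eta)\eps$, which is exactly the claimed value. The main obstacle I anticipate is the sign/ordering step: verifying that the choice $\alpha = -\kappa/\eta$ really does make $M$ negative semidefinite on $\Image(P)$ (equivalently $\Image(\sgn\Sigma)$), which requires showing $\sgn(\Sigma)\,\tilde\rho\,\sgn(\Sigma) \preceq (\kappa/\eta)\,\Sigma$ there. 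This is the one place the definitions of $\kappa = \|\rho^{-1/2}P\rho^{1/2}\|_\infty^2$ and $\eta$ (smallest nonzero eigenvalue of $\rho^{-1}\#\sigma$) must be used in an essential, non-cosmetic way — I would prove it by writing $\Sigma = (\rho^{1/2}\sigma\rho^{1/2})^{1/2}$ in the $V$-basis, relating it to $\rho^{-1}\#\sigma$ via $\rho^{1/2}(\rho^{-1}\#\sigma)\rho^{1/2} = (\rho^{1/2}\sigma\rho^{1/2})^{1/2}$ (established in \Cref{clm:equiv-w}), and bounding the resulting operator product using the obliqueness norm; this likely gets its own claim preceding \Cref{clm:dual-sdp-bound} in the actual paper. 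Everything else is bookkeeping with the SVD and traces.
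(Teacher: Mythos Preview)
Your proposal is correct and follows essentially the same route as the paper: the paper right-multiplies $2T$ by the partial isometry $W$ (using $A^* = A^* WW^*$ and $W^* = W^* WW^*$, in lieu of your explicit SVD factorization $T = VMU^*$) to get $2\|T\|_1 = \|\alpha A^* W + P\rho P\|_1$, and then proves the key positivity step $(\kappa/\eta) A^* W - P\rho P \geq 0$, which is exactly your inequality $\sgn(\Sigma)\tilde\rho\,\sgn(\Sigma) \preceq (\kappa/\eta)\Sigma$ conjugated by $V$. The paper proves this inequality inline (not as a separate preceding claim) via the chain $\rho^{-1/2} P\rho P \rho^{-1/2} \leq \kappa\,\Pi_{\Dom(\rho^{-1}\#\sigma)} \leq (\kappa/\eta)\,\rho^{-1}\#\sigma$ followed by conjugation with $\rho^{1/2}$, matching the sketch you give at the end.
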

\begin{proof}
By definition of $T$, we have
\[ 
2 \| T\|_1 = \|  \alpha A^* + P \rho W^* \|_1 = \|  \alpha A^* W + P \rho P \|_1~.
    \]
This is because we can write $W = Q U$ for some projection $Q$ and a unitary $U$. The projection $Q$ is equal to $W W^*$, which by \Cref{clm:basic-properties-w} is the projection onto $\Image(\sigma^{1/2} \rho \sigma^{1/2})$, or equivalently onto $\Dom(\rho^{1/2} \sigma^{1/2})$. Thus $A^* = \rho^{1/2} \sigma^{1/2} Q = A^* Q$ and $W^* = W^* Q$. The equality then follows by unitary invariance of the trace norm.

For our choice of $\alpha$, $2\| T \|_1$ is equal to
\[
    \| P \rho P - (\kappa/\eta) A^* W \|_1 = \| (\kappa/\eta) A^* W - P \rho P \|_1~.
\]
We will argue that 
\begin{equation}
\label{eq:psd}
    (\kappa/\eta) A^* W - P \rho P \geq 0
\end{equation}
in the positive semidefinite ordering. This will complete the proof of the claim because then the objective value of the dual SDP is equal to
\begin{align*}
    \| (\kappa/\eta) A^* W - P \rho P \|_1 + \alpha (\fidelity(\rho,\sigma) - \eps) &= \Tr( (\kappa/\eta) A^* W - P \rho P ) + \alpha (\fidelity(\rho,\sigma) - \eps) \\
    &= (\kappa/\eta)  \Tr( A^* W) - \Tr( P \rho) - (\kappa/\eta) \fidelity(\rho,\sigma) + (\kappa/\eta) \eps \\
    &= (\kappa/\eta) \eps - \Tr(P \rho)
\end{align*}
as desired. The last line follows because $\Tr(A^* W) = \fidelity(\rho,\sigma)$ as shown in the proof of \Cref{clm:completeness}.

We now prove \Cref{eq:psd}. The proof of \Cref{clm:equiv-w} shows that $W = (\rho^{1/2} \sigma^{1/2})^{-1} (\rho^{1/2} \sigma \rho^{1/2})^{1/2}$. Thus 
\[
    A^* W = \rho^{1/2} (\rho^{-1} \# \sigma) \rho^{1/2}~,
\]
because $A^* (A^*)^{-1}$ is equal to the projection onto $\Image(\rho^{1/2} \sigma^{1/2})$. 
From this expression we see that $A^* W$ is self-adjoint.

We now argue that
\begin{equation}
    \label{eq:dom-incl}
    \Dom(\rho^{-1/2} P \rho P \rho^{-1/2}) \subseteq \Dom(\rho^{-1} \# \sigma)~.
\end{equation}
This is equivalent to showing that
\[
     \ker(\rho^{-1/2} P \rho P \rho^{-1/2}) \supseteq \ker(\rho^{-1} \# \sigma)~.
\]
Let $\ket{v}$ be a vector in $ \ker(\rho^{-1} \# \sigma)$, which implies
\[
    \bra{v} \rho^{-1/2} (\rho^{1/2} \sigma \rho^{1/2})^{1/2} \rho^{-1/2} \ket{v} = 0~.
\]
This implies that the vector $\rho^{-1/2} \ket{v} \in \ker( (\rho^{1/2} \sigma \rho^{1/2})^{1/2})$. On the other hand, $\ker( (\rho^{1/2} \sigma \rho^{1/2})^{1/2}) = \ker( \rho^{1/2} \sigma \rho^{1/2})$, so therefore $P \rho^{-1/2} \ket{v} = 0$. This implies that $\ket{v} \in \ker(\rho^{-1/2} P \rho P \rho^{-1/2})$, as desired. 

Let $\Pi_{\Dom(\rho^{-1/2} P \rho P \rho^{-1/2})}$ and $\Pi_{\Dom(\rho^{-1} \# \sigma)}$ denote the projections onto the specified domains. Thus
\begin{align*}
    \rho^{-1/2} P \rho P \rho^{-1/2} &\leq \kappa \,  \Pi_{\Dom(\rho^{-1/2} P \rho P \rho^{-1/2})} \\
    &\leq \kappa \, \Pi_{\Dom(\rho^{-1} \# \sigma)} \\
    &\leq \frac{\kappa}{\eta} \rho^{-1} \# \sigma~.
\end{align*}
The first line follows from the definition of $\kappa$, the second line follows from \Cref{eq:dom-incl}, and the third line follows because $\eta$ is the smallest non-zero eigenvalue of $\rho^{-1} \# \sigma$. 

To conclude, we can conjugate by $\rho^{1/2}$ (which preserves the operator inequality) to get 
\[
    P \rho P \leq \frac{\kappa}{\eta} \rho^{1/2} (\rho^{-1} \# \sigma) \rho^{1/2} = \frac{\kappa}{\eta} A^* W
\]
as desired. This uses the fact that $\rho^{1/2} \rho^{-1/2} P = P \rho^{-1/2} \rho^{1/2} = P$, as shown in the proof of \Cref{clm:equiv-w}. 
\end{proof}

Combining \Cref{clm:dual-sdp-bound} with \Cref{cl:std_sdp} shows that the optimal value of (\ref{eqn:og_optimization}) is at most
\[
    2(\omega + \Tr(P \rho)) \leq \Big( \frac{2\kappa}{\eta} \Big) \, \eps
\]
as desired. This concludes the proof of the rigidity property of \Cref{thm:main}.

\section{Sensitivity of Uhlmann transformations}
\label{sec:sensitivity}

In this section we present several examples that illustrate the \emph{sensitivity} of the canonical Uhlmann transformation. Some of these examples will demonstrate that it is necessary for the rigidity bound of \Cref{thm:main} to depend on both the spectral gap $\eta$ as well as the obliqueness $\kappa$.

\subsection{Canonical Uhlmann transformation is not a smooth function of input states}

The canonical Uhlmann transformation $W$ is not a smoothly-varying function of the states $\ket{C},\ket{D}$. It is instructive to consider the following two-qutrit example:
\begin{align*}
\ket{C} &= \sqrt{1 - \eps} \ket{00} + \sqrt{\eps/2} \ket{11} + 
\sqrt{\eps/2} \ket{22}\,, \\
\ket{\tilde{C}} &= \sqrt{1 - \eps} \ket{00} + \sqrt{\eps/2} \ket{12} + 
\sqrt{\eps/2} \ket{21}\,, \\
\ket{D} &= \ket{C}\,.
\end{align*}
The canonical Uhlmann transformation $W$ corresponding to $(\ket{C},\ket{D})$ is simply the identity operator on $\C^3$. On the other hand, the Uhlmann transformation $\tilde{W}$ corresponding to $(\ket{\tilde{C}},\ket{D})$ can be computed as
\[
    \tilde{W} = \ketbra{0}{0} + \ketbra{1}{2} + \ketbra{2}{1}\,.
\]
In other words, it swaps $\ket{1}$ with $\ket{2}$ and keeps $\ket{0}$ unchanged. The difference $W - \tilde{W}$ has operator norm at least $2$, but the difference $\ket{C} - \ket{\tilde{C}}$ has norm at most $\eps$, which can be arbitrarily small.

\subsection{The dependence on the spectral gap parameter $\eta$}
\label{sec:eta-dependence}

The parameter $\eta$ in \Cref{thm:main} can be viewed as quantifying how ``well-conditioned'' the pair of states $(\ket{C},\ket{D})$ are. We now present an example of a pair of states $(\ket{C},\ket{D})$ that 
\begin{enumerate}
   \item Shows the dependence on the spectral gap $\eta$ in the rigidity bound of \Cref{thm:main} is necessary, and 
   \item Demonstrates a scenario where the canonical Uhlmann transformation $W$ corresponding to the pair $(\ket{C},\ket{D}$) has a different robustness than the Uhlmann transformation $W^*$ corresponding to the flipped pair $(\ket{D},\ket{C})$.
\end{enumerate}

\begin{lemma}
\label{lem:eta-dependence}
    For all even $d \in \N$ and for all $\eta > 0$, there exists a pair $(\ket{C},\ket{D})$ of $d$-dimensional states with reduced density matrices $\rho,\sigma$ respectively such that
 \begin{enumerate}
        \item $\fidelity(\rho,\sigma) \geq \frac{1}{2}$,
        \item The smallest nonzero eigenvalue of $\rho^{-1} \# \sigma$ is $\eta$. 
        
        \item Let $W$ denote the canonical Uhlmann transformation for $(\ket{C},\ket{D})$. For all $0 < \eps < 1$ there exists a unitary $R$ such that
    \[
    \bra{D} \id \otimes R \ket{C} \geq \fidelity(\rho,\sigma) - \eps
    \]
    but 
    \[
        \| \id \otimes (W - R)W^* W \ket{C} \|^2 \geq 2 \eps/\eta~.
    \]
    \item Let $W^*$ denote the canonical Uhlmann transformation for $(\ket{D},\ket{C})$. 
    For all unitaries $Q$ such that
    \[
        \bra{C} \id \otimes Q \ket{D} \geq \fidelity(\rho,\sigma) - \eps
    \]
    we have
    \[
        \| \id \otimes (W^* - Q) WW^* \ket{D} \|^2 \leq 2\sqrt{2}\eps~.
    \]
    \end{enumerate}
\end{lemma}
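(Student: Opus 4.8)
The plan is to build an explicit block-diagonal example, reducing everything to a single qubit block that is then repeated $d/2$ times. On one qubit, take $\rho$ to be (a normalized version of) a fixed state, say $\rho \propto \mathrm{diag}(1, \text{something})$ in one basis, and $\sigma$ to be (a normalized version of) a nearby-but-noncommuting state chosen so that the matrix geometric mean $\rho^{-1} \# \sigma$ has a small eigenvalue $\eta$. Concretely, I would look for $\rho, \sigma$ with a common invertible support on $\C^2$ such that $\fidelity(\rho,\sigma) \geq \tfrac12$ and the two eigenvalues of $\rho^{-1} \# \sigma$ are $\eta$ and something of order $1$ (the precise normalization can be tuned afterward to hit both $\fidelity \geq \tfrac12$ and the prescribed $\eta$ exactly, using continuity as $\eta$ varies). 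Then define $\ket{C} = \sqrt{\overline\rho}\otimes\id\ket{\Omega}$, $\ket{D} = \sqrt{\overline\sigma}\otimes\id\ket{\Omega}$ as in \Cref{sec:rigidity}. Items (1) and (2) are then true by construction.

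For item (3), I want to exhibit a bad approximate Uhlmann unitary $R$ for the direction $(\ket C, \ket D)$. The natural candidate is a small rotation applied to $W$: roughly, $R = e^{iH\theta} W$ (completed to a unitary), where $H$ is a Hermitian generator chosen so that acting on $\id\otimes W\ket C$ moves in the "cheapest" direction — i.e. the direction conjugate to the eigenvector of $\rho^{-1}\#\sigma$ with eigenvalue $\eta$. The point is that the fidelity loss, $\fidelity(\rho,\sigma) - \Re\bra{D}\id\otimes R\ket{C}$, is second order in $\theta$ and is weighted by the eigenvalues of the relevant operator ($A^* W = \rho^{1/2}(\rho^{-1}\#\sigma)\rho^{1/2}$, from \Cref{clm:dual-sdp-bound}), so moving along the $\eta$-eigendirection costs fidelity $\approx \tfrac12\eta\,\theta^2\|\cdot\|^2$, while the displacement $\|\id\otimes(W-R)W^*W\ket C\|^2 \approx \theta^2\|\cdot\|^2$. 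Setting the fidelity loss equal to $\eps$ gives $\theta^2\|\cdot\|^2 \approx 2\eps/\eta$, hence the displacement is $\approx 2\eps/\eta$, matching the claimed lower bound. I expect this to be essentially a one-block (qubit) computation that I then verify carefully; one must be slightly careful that the linearized analysis is exact enough, or else replace the exponential rotation by an exact $2\times2$ rotation and compute both quantities in closed form (which is feasible since everything is a $2\times2$ matrix).

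For item (4), I need the \emph{opposite} direction $(\ket D, \ket C)$ to have good robustness, so I invoke \Cref{thm:main} with the roles of $\rho$ and $\sigma$ swapped: the canonical transformation is $W^*$ (its domain projection is $WW^*$), the relevant spectral gap is the smallest nonzero eigenvalue $\eta'$ of $\sigma^{-1}\#\rho$, and the obliqueness is $\kappa' = \|\sigma^{-1/2} Q \sigma^{1/2}\|_\infty^2$ with $Q$ the projection onto $\Image(\sigma^{1/2}\rho\sigma^{1/2}) = WW^*$. Since in my construction $\rho,\sigma$ are both invertible on the whole space, $Q = \id$ and $\kappa' = 1$; and $\sigma^{-1}\#\rho$ is the inverse of $\rho^{-1}\#\sigma$ (using property 3 of the matrix geometric mean, $X A^{-1} X = B \iff X B^{-1} X = A$), so its smallest nonzero eigenvalue is $1/\|\rho^{-1}\#\sigma\|_\infty$. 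Thus \Cref{thm:main} gives robustness $\delta(\eps) = (2/\eta')\eps = 2\|\rho^{-1}\#\sigma\|_\infty \eps$, and I just need to choose the normalization of the qubit block so that $\|\rho^{-1}\#\sigma\|_\infty \leq \sqrt 2$, yielding the stated bound $2\sqrt 2\,\eps$. The main obstacle is the bookkeeping in item (3): making sure the explicit $2\times2$ choice of $\rho,\sigma$ simultaneously satisfies $\fidelity \geq \tfrac12$, has the prescribed $\eta$ (for \emph{all} $\eta>0$, which forces the construction to degenerate as $\eta\to 0$ and one should check $\fidelity$ does not also degrade below $\tfrac12$), and admits an $R$ whose fidelity-loss-to-displacement ratio is exactly $\tfrac\eta2$ rather than merely $O(\eta)$ — so I would do the $2\times2$ computation in closed form rather than relying on linearization, and then tensor up to dimension $d$.
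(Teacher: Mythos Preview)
Your proposal is workable but considerably more elaborate than what the paper does, and one design choice---insisting on \emph{non-commuting} $\rho,\sigma$---is a detour that buys you nothing and costs you simplicity.

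The paper's construction is fully diagonal: take $\rho = \id/d$ and $\sigma$ diagonal with $d/2$ entries equal to $2(1-\delta)/d$ and $d/2$ entries equal to $2\delta/d$, where $\delta = \eta^2/2$. Then $\rho^{-1}\#\sigma$ is diagonal with entries $\sqrt{d\sigma_i}$, so its smallest eigenvalue is $\sqrt{2\delta}=\eta$ exactly, and the canonical $W$ is simply the identity. For item~(3), instead of a small rotation the paper takes $R$ to be a \emph{derangement} on a subset $G$ of the ``small'' indices and the identity elsewhere; this zeros out the $G$-contributions to $\bra{D}\id\otimes R\ket{C}$ and gives \emph{exact} closed-form values for both the fidelity deficit $\eps = (|G|/|B|)\sqrt{\delta/2}$ and the displacement $\|\id\otimes(\id-R)\ket C\|^2 = 2|G|/d$, whose ratio is precisely $2/\eta$ with no linearization at all. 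Your rotation idea also yields this ratio (indeed, in the diagonal setting a phase rotation on the small-eigenvalue block gives the exact ratio $2/\eta$ as well), but the derangement avoids any second-order bookkeeping and any worry about whether the inner product $\bra{D}\id\otimes R\ket{C}$ comes out real.

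For item~(4) your approach matches the paper's exactly: swap the roles of $\rho,\sigma$, observe $\kappa'=1$ because $\rho,\sigma$ commute (in the paper) or are invertible (in your setup), and note that $\sigma^{-1}\#\rho = (\rho^{-1}\#\sigma)^{-1}$, whose smallest eigenvalue $1/\sqrt{2(1-\delta)} \geq 1/\sqrt{2}$ gives the $2\sqrt{2}\,\eps$ bound via \Cref{thm:main}. So the only real advice is: drop the non-commuting ansatz, work with diagonal $\rho,\sigma$ from the start, and replace the infinitesimal rotation by a permutation-type $R$---you then get all four items by direct calculation rather than by perturbative estimates.
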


\begin{proof}
Fix $\eta$ and let $\delta = \eta^2/2$. Let $A = \{1,2,\ldots,d/2\}$ and $B = \{d/2 + 1,\ldots,d\}$.
Consider the pair of states
\begin{gather*}
    \ket{C} = \frac{1}{\sqrt{d}} \sum_{i=1}^d \ket{i} \otimes \ket{i}~, \qquad \qquad \ket{D} = \sum_{i=1}^d \sqrt{\sigma_i} \ket{i} \otimes \ket{i}
\end{gather*}
where $\sigma_i = 2(1 - \delta)/d$ if $i \in A$ and $\sigma_i = 2\delta/d$ otherwise. Let $\sigma$ be the diagonal matrix with entries $(\sigma_i)$, and let $\rho = \id/d$ denote the maximally mixed state. By construction $\rho,\sigma$ are the reduced density matrices of $\ket{C},\ket{D}$ on register $\reg{A}$, respectively. 

The fidelity between $\rho,\sigma$ is 
\[
    \fidelity(\rho,\sigma) = \sum_i \sqrt{\sigma_i/d} = \frac{d}{2} \Big( \sqrt{2(1 - \delta)/d^2}  + \sqrt{2\delta/d^2} \Big) = \sqrt{\frac{1 - \delta}{2}} + \sqrt{\frac{\delta}{2}} \geq \frac{1}{2}~.
\]

The canonical Uhlmann transformation $W$ for $(\ket{C},\ket{D})$ is the following:
\[
    W = \sgn(\Tr_{\reg{A}}(\ketbra{D}{C})) = \sgn \Big(  \sum_i \sqrt{\frac{\sigma_i}{d}} \ketbra{i}{i} \Big) = \id~.
\]
Here we used that $\sigma_i > 0$ for all $i$. Similarly, the canonical Uhlmann transformation $W^*$ for $(\ket{D},\ket{C})$ is also the identity matrix. However their robustnesses are different.

\paragraph{Robustness of $W$.} The matrix geometric mean $\rho^{-1} \# \sigma$ is a diagonal matrix whose entries are $\sqrt{d \sigma_i}$. The smallest nonzero eigenvalue is $\sqrt{2\delta} = \eta$. Since $\rho,\sigma$ commute, the obliqueness parameter $\kappa$ is equal to $1$. 

Let $0 < \tau < 1$. Let $G \subseteq B$ be a subset such that $|G|/|B| \geq \tau$. Consider the following Uhlmann transformation $R$: it acts as identity on the basis states $\ket{i}$ for $i \notin G$, and applies a \emph{derangement} (i.e., a permutation without any fixed points) on the basis states in $G$. The performance of this Uhlmann transformation is
\[
    \bra{D} \id \otimes R \ket{C} = \sum_{i \notin G} \sqrt{\rho_i \sigma_i} \geq \frac{d}{2} \sqrt{\frac{2(1- \delta)}{d^2}} + \frac{(1 - \eps)d}{2} \sqrt{\frac{2\delta}{d^2}} = \sqrt{\frac{1 - \delta}{2}} + (1 - \tau) \sqrt{\frac{\delta}{2}}
\]
Thus, the fidelity deficit of $R$ is $\eps = \tau \sqrt{\delta/2}$. On the other hand, we can evaluate the distance of $R$ from $W = \id$ (when evaluated on $\ket{C}$ and using the fact that $W^* W = \id$):
\[
    \| \ket{C} - \id \otimes R \ket{C} \|^2 = 2 \Big \| \sum_{i \in G} \frac{1}{d} \Big \|^2 = 2|G|/d \geq \tau~.
\]
This exactly matches the robust rigidity upper bound given by \Cref{thm:main}, which is
\[
  \delta(\eps) = (2\kappa/\eta) \eps = \frac{2}{\sqrt{2\delta}} \sqrt{\frac{\delta}{2}} \tau = \tau~.
\]
Thus, when the parameter $\eta$ is very small, there exist near-optimal Uhlmann transformations that can be far away from the canonical one, when measured on $\ket{C}$. 

\paragraph{Robustness of $W^*$.} Now we investigate the robustness of the canonical Uhlmann transformation for the opposite state transformation (going from $\ket{D}$ to $\ket{C}$). 
The matrix geometric mean $\sigma^{-1} \# \rho$ has diagonal entries $\frac{1}{\sqrt{d \sigma_i}}$, and thus the smallest nonzero eigenvalue is $\eta' = \frac{1}{\sqrt{2(1 - \delta)}} \geq \frac{1}{\sqrt{2}}$. The obliqueness parameter $\kappa'$ is still $1$. 

Thus, by \Cref{thm:main}, any unitary $Q$ such that
\[
    \bra{C} \id \otimes Q \ket{D} \geq \fidelity(\rho,\sigma) - \eps
\]
must satisfy
\[
    \| \ket{D} - \id \otimes Q \ket{D} \|^2 \leq (2/\eta') \eps \leq 2\sqrt{2} \, \eps~.
\]
Therefore the rigidity of the \emph{opposite} state transformation (from $\ket{D}$ to $\ket{C}$) is much more robust when measured on $\ket{D}$.

\end{proof}

\subsection{A rounding lemma for the spectral gap}
\label{sec:rounding}
\Cref{lem:eta-dependence} shows that the dependence on the spectral gap $\eta$ in the rigidity of Uhlmann transformations is unavoidable. In the following, we show that every pair of states $(\ket{C},\ket{D})$, which might have a very small spectral gap $\eta$, is \emph{close} to another pair of states $(\ket{\tilde{C}},\ket{\tilde{D}})$ with a better spectral gap $\tilde{\eta}$. We call this a \emph{rounding lemma} for the spectral gap parameter. 

\begin{restatable}[Rounding lemma for the spectral gap]{lemma}{closeness}
\label{lem:closeness}
    For all $0 < \eta < 1$, for all pure bipartite states $\ket{C},\ket{D} \in \C^d \otimes \C^d$ with reduced density matrices $\rho,\sigma$ on subsystem $\reg{A}$, there exist  bipartite states $\ket{\tilde{C}}, \ket{\tilde{D}}$ with reduced density matrices $\tilde{\rho},\tilde{\sigma}$ such that 
    \begin{enumerate}
        \item The smallest nonzero eigenvalue of $ \tilde{\rho}^{-1} \# \tilde{\sigma}$ is at least $\eta$, and
        \item Both $| \langle \tilde{D} | D \rangle|^2$ and $\langle \tilde{C} | C \rangle |^2$ are at least $1 - \eta^2$.
    \end{enumerate}
\end{restatable}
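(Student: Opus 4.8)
The plan is to first decouple the two purifications, reducing the lemma to a statement purely about density matrices, and then to ``round'' $\rho$ and $\sigma$. The decoupling rests on the observation that the conclusion imposes no joint constraint linking $\ket{\tilde C}$ and $\ket{\tilde D}$: the eigenvalue condition depends only on $\tilde\rho,\tilde\sigma$, and by Uhlmann's theorem, once $\tilde\rho,\tilde\sigma$ are fixed we may \emph{separately} choose $\ket{\tilde C}$ to be the purification of $\tilde\rho$ with $|\langle\tilde C|C\rangle| = \fidelity(\tilde\rho,\rho)$ and $\ket{\tilde D}$ to be the purification of $\tilde\sigma$ with $|\langle\tilde D|D\rangle| = \fidelity(\tilde\sigma,\sigma)$ (both purifications live in $\C^d\ot\C^d$ since $\tilde\rho,\tilde\sigma$ have rank at most $d$). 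Hence it suffices to produce density matrices $\tilde\rho,\tilde\sigma$ with $\fidelity(\tilde\rho,\rho)^2\geq 1-\eta^2$, $\fidelity(\tilde\sigma,\sigma)^2\geq 1-\eta^2$, and smallest nonzero eigenvalue of $\tilde\rho^{-1}\#\tilde\sigma$ at least $\eta$.

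I build these in two moves. First, set $\tilde\rho := (1-\eta^2)\rho + \eta^2\frac{\id}{d}$; this is invertible and satisfies $\tilde\rho\geq(1-\eta^2)\rho$, so operator monotonicity of the square root gives $\sqrt{\tilde\rho}\geq\sqrt{1-\eta^2}\,\sqrt\rho$ and hence $\fidelity(\tilde\rho,\rho) = \|\sqrt{\tilde\rho}\sqrt\rho\|_1\geq\Tr(\sqrt{\tilde\rho}\sqrt\rho)\geq\sqrt{1-\eta^2}$. Second, let $M_0 := \tilde\rho^{-1}\#\sigma$ with spectral decomposition $M_0 = \sum_j\mu_j\ketbra{m_j}{m_j}$; let $M_0'$ be the truncation keeping only the terms with $\mu_j\geq\eta$, let $M_0'' := M_0 - M_0'$, put $Z := \Tr(M_0'^2\tilde\rho)$, and define $\tilde\sigma := M_0'\tilde\rho M_0'/Z$.

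Three short verifications remain. (a) $\tilde\rho^{-1}\#\tilde\sigma = M_0'/\sqrt Z$: since $\tilde\rho^{1/2}\tilde\sigma\tilde\rho^{1/2} = \tfrac1Z(\tilde\rho^{1/2}M_0'\tilde\rho^{1/2})^2$ with $\tilde\rho^{1/2}M_0'\tilde\rho^{1/2}$ positive semidefinite, taking its square root and conjugating by $\tilde\rho^{-1/2}$ (a genuine inverse, since $\tilde\rho$ is invertible) yields the identity, just as in \Cref{clm:equiv-w}. (b) $1-\eta^2\leq Z\leq 1$: a direct computation (as in the proof of \Cref{clm:equiv-w}, using that $\tilde\rho$ is invertible) gives $M_0\tilde\rho M_0 = \sigma$, so $\Tr(M_0^2\tilde\rho) = \Tr(M_0\tilde\rho M_0) = \Tr\sigma = 1$; as $M_0',M_0''$ have orthogonal supports, $1 = \Tr(M_0^2\tilde\rho) = Z + \Tr(M_0''^2\tilde\rho)$, and $\Tr(M_0''^2\tilde\rho)\leq\|M_0''\|_\infty^2\leq\eta^2$ because every eigenvalue of $M_0''$ is below $\eta$. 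In particular $Z>0$, so $\tilde\sigma$ is a bona fide density matrix, and by (a) the nonzero eigenvalues of $\tilde\rho^{-1}\#\tilde\sigma$ are exactly the $\mu_j/\sqrt Z$ with $\mu_j\geq\eta$, each at least $\eta/\sqrt Z\geq\eta$. (c) $\fidelity(\tilde\sigma,\sigma)^2\geq Z\geq 1-\eta^2$: use the purifications $\ket{D'} := (M_0\sqrt{\tilde\rho}\ot\id)\ket\Omega$ of $\sigma$ and $\ket{\tilde D'} := \tfrac1{\sqrt Z}(M_0'\sqrt{\tilde\rho}\ot\id)\ket\Omega$ of $\tilde\sigma$; since $M_0'M_0 = M_0'^2$ (orthogonal supports), $\langle\tilde D'|D'\rangle = \tfrac1{\sqrt Z}\Tr(M_0'M_0\tilde\rho) = \tfrac1{\sqrt Z}\Tr(M_0'^2\tilde\rho) = \sqrt Z$, so $\fidelity(\tilde\sigma,\sigma)\geq\sqrt Z$. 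Combining (a)--(c) with the first paragraph finishes the proof.

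The main thing to get right is the constant. Truncating $M_0$ (rather than, say, smoothing $\sigma$ by mixing in a full-rank state) is exactly what makes $1-\eta^2$ tight: the only $\sigma$-weight discarded is $\Tr(M_0''^2\tilde\rho)$, and the crude operator-norm bound $\|M_0''\|_\infty\leq\eta$ already controls this at the sharp rate $\eta^2$, whereas a naive smoothing of $\sigma$ would only push the smallest eigenvalue of the geometric mean up to roughly $\eta/\sqrt 2$ and cost an extra constant factor in fidelity. The other mild technical point is keeping $\tilde\rho$ invertible throughout, which is precisely why the first move mixes $\rho$ with $\id/d$: it removes every pseudoinverse caveat from the matrix-geometric-mean manipulations in (a) and (b).
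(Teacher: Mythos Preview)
Your proof is correct and follows the same high-level plan as the paper's: first mix $\rho$ with the maximally mixed state to make it invertible, then truncate the small eigenvalues of the matrix geometric mean to produce $\tilde\sigma$. The difference lies in the choice of $\tilde\sigma$ and how the spectral-gap bound is verified.

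The paper takes $\tilde\sigma := \Pi\sigma\Pi/\Tr(\Pi\sigma)$, where $\Pi$ projects onto the large-eigenvalue subspace of $\rho^{-1}\#\sigma$. With this choice one does not get an exact formula for $\tilde\rho^{-1}\#\tilde\sigma$; instead the paper proves only the operator inequality $\tilde\rho^{-1}\#\tilde\sigma \geq \beta\,\Pi(\rho^{-1}\#\sigma)\Pi$ via an auxiliary Schur-complement claim, and it bounds the fidelity loss using the Gentle Measurement Lemma after first showing $\Tr((\id-\Pi)\sigma)\leq\eta^2$ through the identity $(\rho^{-1}\#\sigma)\rho = \sigma(\rho\#\sigma^{-1})$. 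You instead take $\tilde\sigma := M_0'\tilde\rho M_0'/Z$, which yields the \emph{exact} equality $\tilde\rho^{-1}\#\tilde\sigma = M_0'/\sqrt Z$ by a one-line square-root computation, and the fidelity bound drops out of the same decomposition $1 = Z + \Tr(M_0''^2\tilde\rho)$ via explicit purifications. Your route bypasses the Schur-complement claim entirely, does not require $\sigma$ to be made invertible (the paper perturbs both $\rho$ and $\sigma$ before the truncation step, which is not strictly necessary and slightly muddies the constant), and tracks the constant $1-\eta^2$ more cleanly. The paper's pinched $\tilde\sigma$ is perhaps the more ``obvious'' candidate, but your inverse-engineered $\tilde\sigma$ makes the geometric-mean computation trivial; this is a genuine simplification.
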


\begin{proof}
Let $\ket{C},\ket{D}$ be two states such that
\begin{gather*}
    \ket{C} = \sqrt{\rho} \otimes X \ket{\Omega}~, \qquad \qquad \ket{D} = \sqrt{\sigma} \otimes Y \ket{\Omega}~.
\end{gather*}
First we argue that $\ket{C},\ket{D}$ are arbitrarily close to states $\ket{\hat{C}},\ket{\hat{D}}$ where the corresponding reduced density matrices $\hat{\rho},\hat{\sigma}$ are invertible.
Define
\[
    \hat{\rho} = (1 - \delta) \rho + \delta \frac{\id}{d}, \qquad \qquad \hat{\sigma}  = (1 - \delta) \sigma + \delta \frac{\id}{d}
\]
and define
\[
    \ket{\hat{C}} = \sqrt{\hat{\rho}} \ot X \ket{\Omega}, \qquad \qquad \ket{\hat{D}} = \sqrt{\hat{\sigma}} \ot Y \ket{\Omega}~.
\]
Clearly, $|\langle \hat{C} | C \rangle |^2 = \Tr(\sqrt{\hat{\rho}} \sqrt{\rho})^2 \geq 1 - \delta$ where we used the fact that $\sqrt{\hat{\rho}} \geq \sqrt{(1 - \delta)} \sqrt{\rho}$ in the positive semidefinite ordering. Similarly, $|\langle \hat{D} | D \rangle |^2 \geq 1 - \delta$.  Note that for any nonzero value of $\delta$, the density matrices $\hat{\sigma},\hat{\rho}$ are invertible. 

Thus we assume without loss of generality that $\rho,\sigma$ are invertible to begin with. We now show there exists a density matrix $\tilde{\sigma}$ that is $O(\eta)$-close to $\sigma$ such the least nonzero eigenvalue of $\rho^{-1} \# \tilde{\sigma}$ is at least $\eta$. 

Let $\Pi$ denote the projection onto the eigenspace of $\rho^{-1} \# \sigma$ with eigenvalues at least $\eta$. Then 
\[
    \Tr( (\id - \Pi) (\rho^{-1} \# \sigma) \rho) \leq \eta~.
\]
On the other hand, by direct calculation we have that for invertible $\rho, \sigma$,
\[
    (\rho^{-1} \# \sigma) \rho = (\rho^{-1} \# \sigma) \rho (\rho^{-1} \# \sigma) (\rho \# \sigma^{-1}) = \sigma (\rho \# \sigma^{-1})~.
\]
Next, $\id - \Pi$ projects onto the eigenspace of $\rho \# \sigma^{-1}$ with eigenvalues at least $1/\eta$. Therefore
\[
    \Tr( (\id - \Pi) (\rho^{-1} \# \sigma) \rho) = \Tr(\sigma (\rho \# \sigma^{-1}) (\id - \Pi)) \geq \frac{1}{\eta} \Tr(\sigma (\id - \Pi))~.
\]
Thus $\Tr(\sigma (\id - \Pi) ) \leq \eta^2$. 
Let $\tilde{\sigma} = \Pi \sigma \Pi/\Tr(\Pi \sigma)$. 
By the Gentle Measurement Lemma (see~\cite[Corollary 3.15]{watrous2018theory}), $\fidelity(\sigma, \tilde{\sigma}) \geq \sqrt{1 - \eta^2}$. We now evaluate
\begin{align*}
     \rho^{-1} \# \tilde{\sigma} = \beta \Big ( \, ( \rho^{-1} \# \Pi \sigma \Pi) \Big ) = \beta \rho^{-1/2} \Big( \rho^{1/2} \Pi \sigma \Pi \rho^{1/2} \Big)^{1/2} \rho^{-1/2}~.
\end{align*}

\begin{claim}
    Suppose $B,X$ are positive semidefinite matrices and let $A$ be strictly positive. Then if
    \[
        \begin{pmatrix} A & X \\ X & B \end{pmatrix} \geq 0
    \]
    then for all projections $\Pi$ that commute with $X$ we have 
    \[
        A^{1/2} (A^{-1/2} \Pi B \Pi A^{-1/2})^{1/2} A^{1/2} \geq \Pi X \Pi~.
    \]
\end{claim}
\begin{proof}
By conjugating with $\begin{pmatrix} \id & 0 \\ 0 & \Pi \end{pmatrix}$ we get
\[
    \begin{pmatrix} A & \Pi X \Pi \\ \Pi X \Pi & \Pi B \Pi \end{pmatrix} = \begin{pmatrix} A & X \Pi \\ \Pi X & \Pi B \Pi \end{pmatrix} \geq 0~.
\]
    Here we used that $\Pi X = X \Pi = \Pi X \Pi$ since $\Pi$ is a projector and $X$ and $\Pi$ commute by assumption. By the Schur complement lemma (\Cref{lem:schur}) we have
    \[
        \Pi B \Pi \geq \Pi X \Pi A^{-1} \Pi X \Pi ~.
    \]
     Conjugating by $A^{-1/2}$ and using operator monotonicity of the square root we have
    \[
        (A^{-1/2} \Pi B \Pi A^{-1/2})^{1/2} \geq A^{-1/2} \Pi X \Pi A^{-1/2}~.
    \]
    Since $A$ is invertible we can multiply both sides by $A^{1/2}$ and get
    \[
        A^{1/2} (A^{-1/2} \Pi B \Pi A^{-1/2})^{1/2} A^{1/2} \geq \Pi X \Pi~.
    \]
\end{proof}

Using this preceding Claim with $A = \rho^{-1}, B = \sigma, X = \rho^{-1} \# \sigma$ and $\Pi = \Pi$ (which commutes with $X$) we get 
\[
    \rho^{-1} \# \tilde{\sigma} \geq \beta \Pi (\rho^{-1} \# \sigma) \Pi~.
\]
Therefore $\rho^{-1} \#\tilde{\sigma}$ has a larger spectral gap than $\Pi (\rho^{-1} \# \sigma) \Pi$ since $\beta$ is at least $1$. 

\end{proof}

\subsection{The dependence on the obliqueness parameter $\kappa$}
\label{sec:kappa-dependence}

We now show that the dependence on the parameter $\kappa$ is also necessary in the rigidity bound of \Cref{thm:main}. Recall that $P$ is the projection onto the image of $\rho^{1/2} \sigma \rho^{1/2}$. The intuition behind the obliqueness parameter $\kappa = \| \rho^{-1/2} P \rho^{1/2} \|_\infty^2$ is less clear than that of $\eta$ (at least, to us). 

First, note that the operator $\rho^{-1/2} P \rho^{1/2}$ is a projection operator, in that 
\[
    (\rho^{-1/2} P \rho^{1/2})^2 = \rho^{-1/2} P \rho^{1/2}~.
\]
However, it is not necessarily self-adjoint, in which case it is called an \emph{oblique} projection. The spectral norm of oblique projections are always at least one, but can be much larger.

One can think of $\kappa$ as measuring some combination of how noncommuting and noninvertible $\rho$ and $\sigma$ are; if either (a) $\rho,\sigma$ commute or (b) $\rho,\sigma$ are invertible, then $\kappa$ is always $1$ (like in the example constructed in \Cref{lem:eta-dependence}). On the other hand, we can construct examples of pairs of states where the spectral norm parameter $\kappa$ can be arbitrarily large, and the robustness of the canonical Uhlmann transformation must depend on $\kappa$ and $\eps$ (albeit in a way that doesn't yet match the upper bounds given by \Cref{thm:main}). 

\begin{lemma}
\label{lem:kappa-dependence}
    For all $d \in \N$, for all $\kappa \geq 1$, and for all $0 < \eps \leq \kappa^{-1/2}$, there exists a pair $(\ket{C},\ket{D})$ of $d$-dimensional states with reduced density matrices $\rho,\sigma$ respectively such that 
    \begin{enumerate}
        \item The parameter $\kappa$ satisfies $\kappa = \| \rho^{-1/2} P \rho^{1/2} \|_\infty^2$ where $P$ is the projection onto the image of $\rho^{1/2} \sigma \rho^{1/2}$,
        \item The smallest nonzero eigenvalue $\eta$ of $\rho^{-1} \# \sigma$ is at least $1$,
        \item There exists a unitary $R$ and an $\eps$ (depending on $\eta$) such that
    \[
    \bra{D} \id \otimes R \ket{C} = \fidelity(\rho,\sigma) - \eps
    \]
    but
    \[
        \| (\id \otimes (W - R)W^* W \ket{C} \|^2 \geq \kappa \, \eps^2~.
    \]
    \end{enumerate}
\end{lemma}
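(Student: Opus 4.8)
The plan is to exhibit an explicit example in which $\sigma$ has rank one and $\rho$ is invertible but badly conditioned, so that the oblique projection $\rho^{-1/2}P\rho^{1/2}$ has large operator norm; it suffices to work in dimension $2$ and then embed into $\C^d\ot\C^d$ for larger $d$, which leaves every quantity below unchanged. Fix a unit vector $\ket{\psi}=\cos\theta\,\ket{0}+\sin\theta\,\ket{1}$ that is not an eigenvector of the diagonal state $\rho=\mathrm{diag}(1-\mu,\mu)$, with $\mu>0$ small, and set $\sigma=\ketbra{\psi}{\psi}$, $\ket{C}=\sqrt{\rho}\ot\id\ket{\Omega}$, $\ket{D}=\sqrt{\sigma}\ot\id\ket{\Omega}$ (so $\ket{D}=\ket{\psi}\ot\ket{\psi}$). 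By \Cref{clm:equiv-w}, $W=\sgn(\sqrt{\sigma}\sqrt{\rho})=\tfrac{1}{\sqrt c}\,\ketbra{\psi}{\psi}\sqrt{\rho}$ where $c:=\bra{\psi}\rho\ket{\psi}$, and by \Cref{clm:basic-properties-w} the projection $P=W^{*}W$ is the rank-one projection onto $\sqrt{\rho}\ket{\psi}$. Since $\rho$ is invertible, short computations give $\fidelity(\rho,\sigma)=\sqrt c$, $\rho^{-1}\#\sigma=c^{-1/2}\ketbra{\psi}{\psi}$, and $\kappa=\|\rho^{-1/2}P\rho^{1/2}\|_{\infty}^{2}=\bra{\psi}\rho^{2}\ket{\psi}/c^{2}=\sec^{2}\angle(\ket{\psi},\rho\ket{\psi})$; in particular the smallest nonzero eigenvalue of $\rho^{-1}\#\sigma$ is $\eta=c^{-1/2}=1/\fidelity(\rho,\sigma)\ge 1$, which is condition~$2$.

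For condition~$1$, I would tune the two free parameters. For fixed small $\mu$ the function $\theta\mapsto\kappa(\theta)$ is continuous on $(0,\pi/2)$, equals $1$ at both endpoints (where $\ket{\psi}$ becomes an eigenvector of $\rho$ and the pair commutes), and attains an interior maximum of order $1/\mu$; hence for $\mu$ small enough relative to the target value, the intermediate value theorem produces a $\theta$ realizing that value exactly, and one can take the solution on the branch where $\cos^{2}\theta$ --- and hence $c$ --- is of order $1/\kappa$. Along that branch $\fidelity(\rho,\sigma)=\sqrt c\ge\tfrac{1}{2}\kappa^{-1/2}$, so that every admissible $\eps$ (that is, $0<\eps\le\kappa^{-1/2}$) also satisfies $\eps\le 2\,\fidelity(\rho,\sigma)$ --- precisely the range in which the perturbation below is defined.

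For condition~$3$, fix such an $\eps$ and pick a unitary $R$ on register $\reg{B}$ with $\bra{\psi}R\sqrt{\rho}\ket{\psi}=\fidelity(\rho,\sigma)-\eps$; this is possible because $\sqrt{\rho}\ket{\psi}$ has norm $\sqrt c=\fidelity(\rho,\sigma)$, so we only need the real number $1-\eps/\sqrt c$ to lie in $[-1,1]$. Since $\bra{D}\id\ot R\ket{C}=\Tr(R\sqrt{\rho}\sqrt{\sigma})=\bra{\psi}R\sqrt{\rho}\ket{\psi}$, this $R$ achieves fidelity exactly $\fidelity(\rho,\sigma)-\eps$. Using the reflection property (\Cref{clm:reflection-property}) one computes $\id\ot W^{*}W\ket{C}=\tfrac{1}{c}\,\rho\ket{\psi}\ot\sqrt{\rho}\ket{\psi}$ and $\id\ot W\ket{C}=\id\ot WW^{*}W\ket{C}=\tfrac{1}{\sqrt c}\,\rho\ket{\psi}\ot\ket{\psi}$, hence
\[
\id\ot(W-R)W^{*}W\ket{C}=\tfrac{1}{c}\,\rho\ket{\psi}\ot\big(\sqrt c\,\ket{\psi}-R\sqrt{\rho}\ket{\psi}\big) ,
\]
whose squared norm is $\tfrac{\bra{\psi}\rho^{2}\ket{\psi}}{c^{2}}\,\big\|\sqrt c\,\ket{\psi}-R\sqrt{\rho}\ket{\psi}\big\|^{2}=\kappa\,\big\|\sqrt c\,\ket{\psi}-R\sqrt{\rho}\ket{\psi}\big\|^{2}$. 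Expanding the last norm using $\|\sqrt{\rho}\ket{\psi}\|^{2}=c$ and $\mathrm{Re}\,\bra{\psi}R\sqrt{\rho}\ket{\psi}=\sqrt c-\eps$ collapses it to $2c-2\sqrt c(\sqrt c-\eps)=2\sqrt c\,\eps$, so the squared displacement equals $2\kappa\,\fidelity(\rho,\sigma)\,\eps=(2\kappa/\eta)\eps\ge\kappa\eps^{2}$, the last inequality because $\eps\le 2\,\fidelity(\rho,\sigma)$. This is condition~$3$, and completes the proof.

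The step I expect to be the main obstacle is the parameter tuning: one must hit the prescribed value of $\kappa$ exactly while simultaneously keeping $\fidelity(\rho,\sigma)$ --- equivalently $\eta$, and hence the admissible range of $\eps$ --- under control, which requires some care with the family $(\mu,\theta)$ together with an intermediate-value argument. Everything else is routine bookkeeping with the reflection property and the explicit rank-one forms of $W$ and $P$. We note in passing that the displacement obtained above, $(2\kappa/\eta)\eps$, in fact matches the upper bound of \Cref{thm:main}; the weaker lower bound $\kappa\eps^{2}$ stated in the lemma already suffices to show that no $\kappa$-free rigidity bound can hold.
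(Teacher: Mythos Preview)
Your approach is essentially the same as the paper's: both take $\sigma$ rank-one and $\rho$ invertible, compute $\kappa=\bra{\psi}\rho^2\ket{\psi}/\bra{\psi}\rho\ket{\psi}^2$ and $\eta=\bra{\psi}\rho\ket{\psi}^{-1/2}$, and perturb the canonical unitary by rotating slightly away from the direction $\sqrt{\rho}\ket{\psi}$ on register $\reg{B}$. Your calculations are correct and match the paper's line by line. Two minor differences are worth noting. First, you are more explicit than the paper about hitting a prescribed $\kappa$ exactly via the intermediate value theorem in the $(\mu,\theta)$ family; the paper simply writes down the general rank-one construction and leaves the realizability of a given $\kappa$ implicit. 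Second, your final displacement $2\kappa\fidelity(\rho,\sigma)\eps=(2\kappa/\eta)\eps$ is in fact the exact upper bound from \Cref{thm:main}, which is sharper than the $\kappa\eps^2$ stated in the lemma; the paper only records the weaker bound. Your passing remark that this example saturates the main theorem is a nice observation not made in the paper.
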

\begin{proof}
Let $\rho$ be an invertible density matrix, and let $\sigma = \ketbra{\sigma}{\sigma}$ be an arbitrary pure state. First, we calculate the corresponding quantities $\kappa$ and $\eta$ for these density matrices.

\begin{claim}
    Let $P$ denote the projection onto the image of $\rho^{1/2} \sigma \rho^{1/2}$. We have that
    \begin{enumerate}
        \item $\kappa = \| \rho^{-1/2} P \rho^{1/2} \|^2_\infty = \frac{\bra{\sigma} \rho^2 \ket{\sigma}}{\bra{\sigma} \rho \ket{\sigma}^2}$
        \item The smallest nonzero eigenvalue $\eta$ of $\rho^{-1} \# \sigma$ is $\frac{1}{\sqrt{\bra{\sigma} \rho \ket{\sigma}}}$.
    \end{enumerate}
\end{claim}
\begin{proof}
    First we compute
    \[
    \rho^{1/2} \sigma \rho^{1/2} = \frac{\rho^{1/2} \ketbra{\sigma}{\sigma} \rho^{1/2}}{\bra{\sigma} \rho \ket{\sigma}} \bra{\sigma} \rho \ket{\sigma}~.
    \]
    The operator $P$ projects onto this pure state, so we can write the projection as
    \[
    P = \frac{\rho^{1/2} \ketbra{\sigma}{\sigma} \rho^{1/2}}{\bra{\sigma} \rho \ket{\sigma}}~.
    \]
    We now compute the $\kappa$ parameter, which is the operator norm of the following matrix
    \[
    \rho^{-1/2} P \rho P \rho^{-1/2} = \ketbra{\sigma}{\sigma} \cdot \frac{\bra{\sigma} \rho^2 \ket{\sigma}}{\bra{\sigma} \rho \ket{\sigma}^2}~.
    \]
    
    For pure $\sigma = \proj{\sigma}$, we have that for any vector $\ket{\alpha}$, 
    \begin{align*}
    \rho^{1/2} \sigma \rho^{1/2} \ket{\alpha} = \bra{\sigma} \rho^{1/2} \ket{\alpha} \cdot \rho^{1/2} \ket{\sigma} \,,
    \end{align*}
    so $\Image(\rho^{1/2} \sigma \rho^{1/2})$ is spanned solely by the vector $\rho^{1/2} \ket{\sigma}$. Hence, $P$ is the normalized projector onto this vector: 
    \begin{align*}
    P = \frac{\rho^{1/2} \ketbra{\sigma}{\sigma} \rho^{1/2}}{\bra{\sigma} \rho \ket{\sigma}}~.
    \end{align*}
    On the other hand, the $\eta$ parameter can be computed as follows. The matrix geometric mean $\rho^{-1} \# \sigma$ can be computed as
    \begin{align*}
        \rho^{-1/2} (\rho^{1/2} \sigma \rho^{1/2})^{1/2} \rho^{-1/2} = \frac{\ketbra{\sigma}{\sigma}}{\sqrt{\bra{\sigma} \rho \ket{\sigma}}}~.
    \end{align*}
    Thus the smallest nonzero eigenvalue is $\frac{1}{\sqrt{\bra{\sigma} \rho \ket{\sigma}}}$.
\end{proof}

Define the pure states
\[
    \ket{C} = \id \otimes \sqrt{\rho} \ket{\Omega}~, \qquad \ket{D} = \ket{\overline{\sigma}} \otimes \ket{\sigma}
\]
where $\ket{\overline{\sigma}}$ denotes the entry-wise complex conjugate of $\ket{\sigma}$ with respect to the standard basis. Note that $\overline{\rho},\overline{\sigma}$ are the reduced density matrices of $\ket{C},\ket{D}$ on the first register, respectively, where $\overline{\rho},\overline{\sigma}$ denote the entry-wise complex conjugate of the density matrices $\rho,\sigma$, respectively.

The canonical Uhlmann transformation $W$ for the pair $(\ket{C},\ket{D})$ is
\[
    W = \sgn(\Tr_{\reg{A}}(\ketbra{D}{C})) = \ketbra{\sigma}{v}
\]
where $\ket{v} = \rho^{1/2} \ket{\sigma}/\sqrt{\bra{\sigma} \rho \ket{\sigma}}$. We can verify that this achieves the Uhlmann fidelity between the states $\ket{C},\ket{D}$:
\[
    \bra{D} \id \otimes W \ket{C} = \bra{v} \rho^{1/2} \ket{\sigma} = \sqrt{\bra{\sigma} \rho \ket{\sigma}} = \fidelity(\rho,\sigma) = \fidelity(\overline{\rho},\overline{\sigma})~.
\]  

We now analyze the extent to which the canonical Uhlmann transformation $W$ is rigid. Let $\ket{\hat{v}}$ be a state such that $\braket{v | \hat{v}} = 1 - \eps/\fidelity(\rho,\sigma)$, and let $\ket{\hat{v}} = \ket{v} + \ket{\tau}$ for some subnormalized vector $\ket{\tau}$. Consider the following partial isometry
\[
    R = \ketbra{\sigma}{\hat{v}} + \ketbra{\sigma^\perp}{\hat{v}^\perp}
\]
where $\ket{\hat{v}},\ket{\hat{v}^\perp}$ are orthogonal and $\ket{\sigma},\ket{\sigma^\perp}$ are orthogonal. Then 
\begin{align*}
    \bra{D} \id \otimes R \ket{C} &= \bra{\hat{v}} \rho^{1/2} \ket{\sigma} = \bra{v} \rho^{1/2} \ket{\sigma} + \bra{\tau} \rho^{1/2} \ket{\sigma} = \fidelity(\rho,\sigma) ( 1 + \braket{\tau | v} ) = 
    \fidelity(\rho,\sigma) - \eps~.
\end{align*}
On the other hand, we have
\begin{gather*}
    \id \otimes W \ket{C} = \sqrt{\overline{\rho}} \ket{\overline{v}} \otimes \ket{\sigma} \\
    \id \otimes R W^* W \ket{C} = (\id \otimes R \ketbra{v}{v})(\sqrt{\overline{\rho}} \otimes \id) \ket{\Omega} = \sqrt{\overline{\rho}} \ket{\overline{v}} \otimes R \ket{v}
\end{gather*}
where $\ket{\overline{v}}$ denotes the complex conjugate of $\ket{v}$. 
Now,
\[
R \ket{v} = \ket{\sigma} \braket{\hat{v} | v} + \ket{\sigma^\perp} \braket{\hat{v}^\perp | v}
\]
so that
\begin{align*}
    \Big \| \id \otimes (W - RW^* W) \ket{C} \Big \| &= \Big \| \ket{\sigma} (1 - \braket{\hat{v} | v}) - \ket{\sigma^\perp} \braket{\hat{v}^\perp | v} \Big \| \cdot \Big \| \sqrt{\overline{\rho}} \ket{\overline{v}} \Big \| \\
    &= \sqrt{ |\braket{\tau | v}|^2 + |\braket{\hat{v}^\perp | v}|^2} \cdot \sqrt{\frac{\bra{\sigma} \rho^2 \ket{\sigma}}{\bra{\sigma} \rho \ket{\sigma}^2}} \cdot \sqrt{\bra{\sigma} \rho \ket{\sigma}} \\
    &\geq \fidelity(\rho,\sigma) \cdot |\braket{\tau | v}| \cdot \sqrt{\frac{\bra{\sigma} \rho^2 \ket{\sigma}}{\bra{\sigma} \rho \ket{\sigma}^2}} = \sqrt{\kappa} \cdot  \eps~.
\end{align*}
\end{proof}

The reader may notice that the construction of $(\ket{C},\ket{D})$ in \Cref{lem:kappa-dependence} has Uhlmann fidelity $\fidelity(\rho,\sigma)$ that vanishes as the parameter $\kappa$ increases:
\[
    \frac{\bra{\sigma} \rho^2 \ket{\sigma}}{\bra{\sigma} \rho \ket{\sigma}^2} \leq \frac{\bra{\sigma} \rho \ket{\sigma}}{\bra{\sigma} \rho \ket{\sigma}^2} = \frac{1}{\bra{\sigma} \rho \ket{\sigma}} = \frac{1}{\fidelity(\rho,\sigma)^2}~.
\]
This is not inherent: the following simple modification of the construction yields a pair of states $(\ket{C},\ket{D})$ whose Uhlmann fidelity $\fidelity(\rho,\sigma)$ is at least $1/2$, but the spectral norm parameter $\kappa$ can be arbitrarily large, and furthermore the canonical Uhlmann transformation has rigidity that necessarily depends on $\kappa$.

Let $\ket{\tilde{C}}$ and $\ket{\tilde{D}}$ denote the states constructed in \Cref{lem:kappa-dependence}. Define new states
\[
    \ket{C} = \frac{1}{\sqrt{2}} \ket{\bot} + \frac{1}{\sqrt{2}} \ket{\tilde{C}}~, \qquad \ket{D} = \frac{1}{\sqrt{2}} \ket{\bot} + \frac{1}{\sqrt{2}} \ket{\tilde{D}}
\]
where $\ket{\bot}$ denotes an arbitary vector that is orthogonal to both $\ket{\tilde{C}}, \ket{\tilde{D}}$. Clearly the Uhlmann fidelity $\fidelity(\rho,\sigma)$ of $\ket{C},\ket{D}$ is at least $1/2$. On the other hand, one can calculate that the smallest nonzero eigenvalue of $\rho^{-1} \# \sigma$ is at least $1$, and the spectral norm parameter $\| \rho^{-1/2} P \rho^{1/2} \|^2_\infty$ is at least $1/\fidelity(\tilde{\rho},\tilde{\sigma})^2$ where $\fidelity(\tilde{\rho},\tilde{\sigma})$ is the Uhlmann fidelity of $\ket{\tilde{C}},\ket{\tilde{D}}$, which can be made arbitrarily large. Finally, the robustness of the canonical Uhlmann transformation for $\ket{C},\ket{D}$ obeys a similar dependence on $\kappa$ as in \Cref{lem:kappa-dependence}. 

\paragraph{A rounding lemma for the obliqueness?} A natural question is whether there is a rounding lemma for the obliqueness, similar to the rounding lemma for the spectral gap (\Cref{lem:closeness}). That is, could every pair of states be close to another pair with a controlled obliqueness? One could even ask for a stronger statement: is every pair of states close to a pair where \emph{both} the spectral gap and obliqueness are controlled? We leave these as interesting open questions. 

\section{Applications}
\label{sec:applications}

\subsection{The complexity of the Uhlmann Transformation Problem}
\label{sec:complexity}

Recently, Bostanci, Efron, Metger, Poremba, Qian, and Yuen~\cite{bostanci2023unitary} proposed a framework for studying the computational complexity of unitary transformations. A major focus of their study was on the Uhlmann Transformation Problem, which is the computational task of implementing canonical Uhlmann transformations corresponding to a specified pair of states. In particular, instances of the Uhlmann Transformation problem are pairs $(C,D)$ of circuit descriptions that act on two registers $\reg{AB}$, and the corresponding unitary transformation is the canonical Uhlmann transformation between $\ket{C} = C \ket{0}$ and $\ket{D} = D \ket{0}$. 

Bostanci, et al.~\cite{bostanci2023unitary} showed that the Uhlmann Transformation Problem, when restricted to instances whose reduced states on $\reg{A}$ have fidelity $1$, is complete for the unitary complexity class $\class{avgUnitaryHVPZK}$, the unitary complexity analogue of the decision class $\class{PZK}$ (perfect zero knowledge, i.e., problems with zero-knowledge protocols with a simulator that exactly reproduces the verifier's view) with an honest verifier.
For our discussion here, we need to recall the proof idea for the \emph{containment} of the Uhlmann transformation problem in $\class{avgUnitaryHVPZK}$.

The setup is as follows: a verifier is given classical descriptions of two circuits, $C$ and $D$, and wants to implement the Uhlmann transformation between the states $\ket{C}$ and $\ket{D}$ with the help of an all-powerful but untrusted prover.
The verifier and prover can exchange quantum messages.
To check that the untrusted prover applies the canonical Uhlmann transformation, the verifier prepares $\ket{C}$, sends $\reg{B}$ of $\ket{C}$ to the prover, and checks that the resulting state has high overlap with $\ket{D}$ (e.g.~using a SWAP test).  
If the verifier repeats this experiment many times and all of the experiments pass, by \cite[Proposition 6.3]{bostanci2023unitary} the verifier knows that the prover has implemented (a channel completion of) the canonical Uhlmann transformation.  
Furthermore, if the verifier wants to implement the canonical Uhlmann transformation on an arbitrary input state (instead of the $\reg B$-register of the state $C \ket 0$ prepared by the verifier itself), they can replace the copy of $\ket{C}$ in a random run of the experiment with a given input state.
Provided that the input state has the same distribution as the reduced state of $\ket{C}$ on the $\reg{B}$ register, \cite{bostanci2023unitary} shows that, up to an error that scales as the inverse of the number of experiments the verifier performs, the verifier will have implemented the canonical Uhlmann transformation on their input.
For their proof,~\cite{bostanci2023unitary} used the weak Uhlmann rigidity mentioned in \Cref{thm:weak-rigidity} to argue that the prover correctly mapping $\ket{C}$ to $\ket{D}$ with high fidelity must be approximately applying the canonical Uhlmann transformation.

When the fidelity of the reduced states on $\reg{A}$ is a constant $\gamma < 1$, however, the verifier only expects a $\gamma$ fraction of the trials to pass, even if the prover is honest and implements the canonical Uhlmann transformation.
By repeatedly sending copies of $\ket{C}$ and measuring $\proj{D}$, the verifier can estimate how well the prover is mapping $\ket{C}$ to $\ket{D}$, but it was not previously clear that the condition $\bra{D} \id \otimes R \ket{C} \geq \gamma- \epsilon$ implies that $R$ was close, in some notion, to the canonical Uhlmann transformation.  
Using \Cref{thm:main}, we can show that such a verifier does indeed implement a transformation close to the canonical Uhlmann transformation on their input state.  This yields a $2$-round quantum interactive protocol for synthesizing Uhlmann transformations with fidelity $\gamma$ (formally, solving the unitary synthesis problem $\avgUhlmann_{\gamma}$), similar to the $\class{avgUnitaryHVPZK}$ protocol presented in \cite{bostanci2023unitary}. Previously, the only way to synthesize $\avgUhlmann_{\gamma}$ without assuming a polarization conjecture was to go through that fact that the Uhlmann transformation is in $\class{avgUnitaryPSPACE} = \class{avgUnitaryQIP}$, yielding a $8$-round protocol.  

We note that while we are able to show that (roughly) the same algorithm works to synthesize $\avgUhlmann_{\gamma}$ instances, we are not able to show that $\avgUhlmann_{\gamma}$ is contained in $\class{avgUnitaryHVSZK}$.  In fact, the task of the $\class{avgUnitaryHVSZK}$ simulator for $\avgUhlmann_{\kappa}$ is $\class{QSZK}$-hard in general, as it allows one to estimate the fidelity between the input states $\ket{C}$ and $\ket{D}$.  Thus, without a major complexity theoretic breakthrough, $\avgUhlmann_{\gamma}$ (for constant $\gamma$) is unlikely to be shown to be in $\class{avgUnitaryHVSZK}$.  We also note that our algorithm is only efficient for so-called ``well-conditioned'' instances of $\avgUhlmann_{\gamma}$, namely those whose spectral gap and obliqueness are polynomial in $n$ and $r$.  While we do not formalize the notion of ``well-conditioned'' Uhlmann instances, it is not hard to see how it could be formalized from the protocol below.

\begin{longfbox}[breakable=false, padding=1em, margin-top=1em, margin-bottom=1em]
    \begin{protocol} {\bf $2$-round quantum interactive protocol for $\avgUhlmann_{\gamma}$ } \label{prot:uhlmann_two_round}
    \end{protocol}
    \noindent \textbf{Instance: } A valid $\avgUhlmann_{\gamma}$ instance $x = (1^{n}, C, D)$, and precision $r \in \mathbb{N}$, with spectral gap $\eta$ and obliqueness $\kappa$\\
    \noindent \textbf{Input: } An $n$ qubit quantum register $\reg{B}_0$.  
    \begin{enumerate}
        \item Let $m = 8n(\kappa r / \eta)^2$.  Sample $i^* \in [m]$ uniformly at random.  Initialize $j \leftarrow 0$.
        \item For $i = 1$ through $m$:
        \begin{enumerate}
            \item If $i \neq i^*$:
            \begin{enumerate}
                \item Prepare the state $\ket{C}_{\reg{A'}\reg{B}'}$ and send $\reg{B}'$ to the prover.
                \item After receiving $\reg{B}'$ from the prover, apply $D^{*}$ to $\reg{A}'\reg{B}'$ and measure all of the qubits in the computational basis.  \label[step]{step:Uhlmann_verifier_measurement}
                \item If the measurement outcome is $0^{n}$, increment $j$ by $1$.
            \end{enumerate}
            \item If $i = i^*$:
            \begin{enumerate}
                \item Send $\reg{B}_0$ to the prover and receive $\reg{B}_0$ back.
            \end{enumerate}
        \end{enumerate}
        \item If $j \geq m \cdot \left(\gamma - \frac{\eta}{4 \kappa r}\right)$, accept and output $\reg{B}_0$, otherwise reject.
    \end{enumerate}
\end{longfbox}

We prove that the protocol satisfies both the completeness and soundness conditions in Definition 4.9 of \cite{bostanci2023unitary}.
Here we will make use of the terminology and notation from~\cite{bostanci2023unitary}.

\begin{lemma}[Completeness]
\label{lem:qip_completeness}
    For all valid $\avgUhlmann_{\gamma}$ instances $(1^{n}, C, D)$ and error parameter $r \in \mathbb{N}$, for sufficiently large $n$ and $r$, the honest prover that applies the canonical Uhlmann transformation between $\ket{C}$ and $\ket{D}$ is accepted with probability at least $1 - 2^{-n}$.
\end{lemma}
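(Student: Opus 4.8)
The plan is to reduce the claim to a Hoeffding-type concentration estimate, the only non-routine ingredient being the Completeness clause of \Cref{thm:main}. The honest prover fixes once and for all a unitary completion $U$ of the canonical Uhlmann transformation $W$ for $(\ket{C},\ket{D})$ --- this exists because $W$ is a partial isometry --- and in \emph{every} round applies $U$ to the register it receives back from the verifier before returning it; this is exactly the prover that ``applies the canonical Uhlmann transformation,'' and in particular its behavior does not depend on the index $i^*$. First I would analyze a single check round (a round with $i \neq i^*$): there the verifier holds $\id \ot U \ket{C}$, applies $D^*$, and increments $j$ precisely when the computational-basis measurement returns the all-zeros string. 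Since $\ket{D} = D\ket{0}$, this is the measurement of the rank-one projector $\proj{D}$ on the state $\id \ot U \ket{C}$, so the round increments $j$ with probability $|\bra{D}\,\id \ot U\,\ket{C}|^2 = \fidelity(\rho,\sigma)^2$, the equality being the Completeness clause of \Cref{thm:main}; and because $(1^n,C,D)$ is a valid $\avgUhlmann_{\gamma}$ instance, this probability is at least $\gamma$. The round $i^*$ involves no measurement and never changes $j$, while the $m-1$ check rounds use independently prepared copies of $\ket{C}$ and independent applications of the fixed unitary $U$, so $j$ is a sum of $m-1$ independent Bernoulli variables with common success probability $p \geq \gamma$.

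It then remains to bound $\Pr\!\big[\,j < m(\gamma - \tfrac{\eta}{4\kappa r})\,\big]$. Since $\E[j] = (m-1)p \geq m\gamma - \gamma$, the mean $\E[j]$ exceeds the rejection threshold $m\gamma - m\tfrac{\eta}{4\kappa r}$ by at least $m\tfrac{\eta}{4\kappa r} - \gamma$. Plugging in $m = 8n(\kappa r/\eta)^2$ gives the identities $m\tfrac{\eta}{4\kappa r} = 2n\kappa r/\eta$ and $m\big(\tfrac{\eta}{4\kappa r}\big)^2 = n/2$, so for sufficiently large $n$ this gap is $(1-o(1))\cdot 2n\kappa r/\eta$, and Hoeffding's inequality for a sum of $m-1$ independent $\{0,1\}$-valued variables yields
\[
    \Pr\!\left[\,j < m\Big(\gamma - \tfrac{\eta}{4\kappa r}\Big)\,\right] \;\leq\; \exp\!\left(-\,\frac{2\big(m\tfrac{\eta}{4\kappa r} - \gamma\big)^2}{m-1}\right) \;\leq\; \exp\!\big(-(1-o(1))\,n\big)\,,
\]
which is at most $2^{-n}$ for $n$ sufficiently large; the constant $8$ in the definition of $m$ is chosen precisely so that the exponent here is $\Theta(n)$ with leading constant exceeding $\ln 2$. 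Hence the honest prover is accepted with probability at least $1 - 2^{-n}$.

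I do not expect a genuine obstacle in this direction: completeness is the easy half, and its only real input is the \emph{exact} Completeness clause of \Cref{thm:main} --- that some unitary completion of $W$ achieves overlap \emph{exactly} $\fidelity(\rho,\sigma)$, not merely approximately --- which has already been established. Everything else is the bookkeeping above, namely checking that $m$ is large enough for Hoeffding to beat $2^{-n}$ and noting that the special round $i^*$ is invisible to the verifier's acceptance predicate. By contrast, the companion soundness lemma is where the Rigidity clause of \Cref{thm:main} does the real work, since there one must rule out a dishonest $R$ that passes the verifier's checks while implementing the wrong transformation on $\reg{B}_0$.
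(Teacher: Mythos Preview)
Your proposal is correct and follows essentially the same Hoeffding concentration argument as the paper's own proof. In fact your version is slightly more careful than the paper's---you explicitly invoke the Completeness clause of \Cref{thm:main} for the per-round success probability, and you account for there being $m-1$ rather than $m$ check rounds---but the underlying strategy is identical.
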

\begin{proof}
    Since the honest prover applies the canonical Uhlmann transformation, every measurement in \Cref{step:Uhlmann_verifier_measurement} accepts with probability $\kappa$ independently.  Let $X_{i}$ be the event that the $i$'th measurement accepts, the probability that the verifier rejects the honest prover is upper bounded by Hoeffding's inequality as
    \begin{align*}
        \Pr\left[\sum_{i} X_i \leq m \cdot \left(\gamma - \frac{\eta}{4\kappa r}\right)\right] &\leq \Pr\left[\left|\sum_{i} X_i - \mathbb{E}\left[\sum_{i} X_i\right]\right| \geq \frac{m}{(2\kappa r / \eta)}\right]\\
        &\leq \exp\left(-2m / (4\kappa r / \eta)^2\right)\\
        &= 2\exp(-n)\\
        &\leq 2^{-n}\,.\qedhere
    \end{align*}
\end{proof}

\begin{lemma}[Soundness]
\label{lem:qip_soundness}
    For all valid $\avgUhlmann_{\kappa}$ instances $(1^{n}, C, D)$, and error parameters $r \in \mathbb{N}$, for sufficiently large $n$ and $r$, for all quantum provers $P$, there exists a channel completion $\Phi_{x}$ of the canonical Uhlmann transformation $U_{x}$ such that
    \begin{equation*}
        \mathrm{if}\quad \Pr[V_{x, r}(\ket{C}) \interact P\ \mathrm{accepts}] \geq \frac{1}{2}\quad \mathrm{then}\quad \td(\sigma, (\Phi_{x}\otimes \id)(\proj{C})) \leq \frac{1}{r}\,,
    \end{equation*}
    where $\sigma$ denotes the output of $V_{x, r}(\ket{C}) \interact P$, conditioned on $V_{x, r}$ accepting.  
\end{lemma}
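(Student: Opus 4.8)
The plan is to reduce, round by round, to the robust rigidity theorem \Cref{thm:main}. First I would put the prover in a canonical form: by Stinespring dilation the (possibly cheating) prover is, without loss of generality, a sequence of fixed unitaries $V_1,\dots,V_m$ acting on the message register together with a private memory register $\reg{P}$ initialized to $\ket{0}$. Crucially, in every round the verifier hands the prover the $\reg{B}$-part of a copy of $\ket{C}$ -- this is true even in round $i^*$, since in the soundness experiment $V_{x,r}(\ket{C})$ the input register $\reg{B}_0$ is exactly the $\reg{B}$-register of a fresh copy of $\ket{C}$ -- so the prover's behavior does not depend on $i^*$. All of the verifier's measurements in \Cref{step:Uhlmann_verifier_measurement} act on disjoint registers and commute with all later prover moves, so I can defer them to the end. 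Then the effect of round $i$ on its register $\reg{A}_i\reg{B}_i$ (with $\reg{A}_i$ kept by the verifier as a reference) is a fixed channel $\Lambda_i$ on $\reg{B}$, namely $\Lambda_i(\omega) = \Tr_{\reg{P}}\!\big(V_i(\omega \ot \tau_i)V_i^*\big)$ where $\tau_i$ is the fixed state of $\reg{P}$ entering round $i$; setting $p_i = \bra{D}(\id \ot \Lambda_i)(\proj{C})\ket{D}$, this $p_i$ is the probability that the check on round $i$ would pass, and $p_i \leq \fidelity(\rho,\sigma)^2$ by Uhlmann's theorem, with equality for the honest prover.

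Next I would extract, from a mere $1/2$ acceptance probability, the fact that almost all rounds are near-optimal. Acceptance forces the empirical fraction of passing checked rounds to be within $O(\eta/(\kappa r))$ of $\fidelity(\rho,\sigma)^2$, the value attained by the honest prover (and the maximum possible by Uhlmann's theorem); since $m = 8n(\kappa r/\eta)^2$ is large enough that $\tfrac1m\sum_i T_i$ concentrates around $\bar p = \tfrac1m\sum_i p_i$ (Hoeffding, as in \Cref{lem:qip_completeness}; here $T_i$ indicates that the check on round $i$ would pass), a Markov bound gives $\bar p \geq \fidelity(\rho,\sigma)^2 - O(\eta/(\kappa r))$, and then $p_i \leq \fidelity(\rho,\sigma)^2$ for every $i$ forces all but an $o(1)$ fraction of rounds to be \emph{good}, meaning $p_i \geq \fidelity(\rho,\sigma)^2 - O(\eta/(\kappa r))$, equivalently $\sqrt{p_i} \geq \fidelity(\rho,\sigma) - \eps_i$ with $\eps_i = O\!\big(\eta/(\kappa r\,\fidelity(\rho,\sigma))\big)$. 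Because $i^*$ is chosen uniformly and independently of which rounds are good, $i^*$ is good except with probability $o(1)$.

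Now I would invoke \Cref{thm:main} on a good round $i^*$, after promoting the channel $\Lambda_{i^*}$ to a unitary. Dilate $\Lambda_{i^*}$ to an isometry $V\colon\reg{B}\to\reg{B}\reg{E}$, and, using Uhlmann's theorem on the purifications $(\id \ot V)\ket{C}$ and $\ket{D}\ot\ket{0}_{\reg{E}}$, rotate the environment so that a unitary completion $\hat V$ of $V$ on $\reg{B}\reg{E}$ satisfies $\bra{D}\!\ot\!\bra{0}_{\reg{E}}\,(\id \ot \hat V)\,\ket{C}\!\ot\!\ket{0}_{\reg{E}} = \sqrt{p_{i^*}} \geq \fidelity(\rho,\sigma) - \eps_{i^*}$. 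The enlarged states $\ket{C^+} = \ket{C}\ot\ket{0}_{\reg{E}}$ and $\ket{D^+} = \ket{D}\ot\ket{0}_{\reg{E}}$ have the same reduced states $\rho,\sigma$ on $\reg{A}$, hence the same $\fidelity(\rho,\sigma)$, spectral gap $\eta$, obliqueness $\kappa$, and canonical Uhlmann transformation $W^+ = W \ot \proj{0}_{\reg{E}}$. Applying \Cref{thm:main} to $\hat V$ yields $\| \id \ot (W^+ - \hat V)\,(W^+)^*W^+\,\ket{C^+}\|^2 \leq (2\kappa/\eta)\,\eps_{i^*} = O(1/(r\,\fidelity(\rho,\sigma)))$; tracing out $\reg{E}$, the channel $\Lambda_{i^*}$ applied to $(\id\ot P)\proj{C}(\id\ot P)$, with $P = W^*W$, produces a state that is $O\big(1/\sqrt{r\,\fidelity(\rho,\sigma)}\big)$-close in trace distance to $(\id \ot W)\proj{C}(\id\ot W^*)$.

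Finally I would convert this into the claimed bound on the conditioned output $\sigma$. The point is that $W$ (and $W^+$) is a \emph{partial isometry}, so it has a trivial environment on its domain; consequently, for a good $i^*$ the global state after round $i^*$ is $O\big(1/\sqrt{r\,\fidelity}\big)$-close to one in which $\reg{A}_{i^*}\reg{B}_{i^*}$ sits in a fixed state and in tensor product with the prover's memory and all remaining registers. In that nearby product state the acceptance event -- a measurement on the prover's memory and the other rounds -- does not touch $\reg{A}_{i^*}\reg{B}_{i^*}$ at all, so conditioning on it leaves the reduced state there unchanged; passing back to the true state costs another $O\big(1/\sqrt{r\,\fidelity}\big)$ because $\Pr[\text{accept}]\geq\tfrac12$. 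Taking $\Phi_x$ to be the channel completion of $W$ that agrees with $\Lambda_{i^*}$ on the orthogonal complement of $\Image(P)$ then gives $\td\big(\sigma, (\Phi_x \ot \id)(\proj{C})\big) = O\big(1/\sqrt{r\,\fidelity}\big)$, which is below $1/r$ once the polynomial slack in $m$ is chosen appropriately. The main obstacle is exactly this last step: a priori the acceptance event can be strongly correlated with $\reg{B}_{i^*}$ through the prover's unbounded memory, and one must turn the approximate rigidity statement ``$\Lambda_{i^*}$ is close to the isometry $W$'' into an approximate \emph{decoupling} of $\reg{A}_{i^*}\reg{B}_{i^*}$ from the rest of the system. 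A secondary subtlety is the weight $1 - \Tr(P\rho) \leq 1 - \fidelity(\rho,\sigma)^2$ of $\ket{C}$ lying, on the $\reg{B}$ register, outside $\Image(W)$: this piece is invisible when the instance is well-conditioned (in particular when $\operatorname{supp}\rho \subseteq \operatorname{supp}\sigma$), but in general it must be absorbed into the choice of $\Phi_x$ and tracked through the conditioning.
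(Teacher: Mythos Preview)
Your proof has a genuine gap at the concentration step. You define per-round success probabilities $p_i = \bra{D}(\id \ot \Lambda_i)(\proj{C})\ket{D}$ and invoke Hoeffding to argue that $\tfrac{1}{m}\sum_i T_i$ concentrates around $\bar p$, from which you deduce $\bar p \geq \fidelity(\rho,\sigma)^2 - O(\eta/(\kappa r))$. But the random variables $T_i$ are \emph{not} independent: the prover carries an unbounded private memory across rounds, so the checks can be arbitrarily correlated. Concretely, the prover can flip a private coin before round~$1$ and, depending on the outcome, either play optimally in every round or play badly in every round; the verifier still accepts with probability close to $\tfrac12$, yet $\bar p$ is bounded away from $\fidelity(\rho,\sigma)^2$ by a constant. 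So neither Hoeffding nor a Markov argument on the unconditioned $p_i$'s suffices, and the ``most rounds are good'' conclusion --- hence your subsequent application of \Cref{thm:main} to the unconditioned channel $\Lambda_{i^*}$ --- does not follow.

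The paper sidesteps this by working directly with the state \emph{conditioned on acceptance}. Using that the verifier's messages are identical in every round (so the joint post-interaction state on the $\reg{A}_i\reg{B}_i$ registers can be taken permutation-invariant), it bounds $\Pr[X_{i^*}=0 \mid \text{accept}]$ by a short combinatorial argument over subsets of passing rounds, obtaining that the \emph{conditional} state on $\reg{A}_{i^*}\reg{B}_{i^*}$ has overlap at least $\gamma - \eta/(2\kappa r)$ with $\ket{D}$; only then is this conditional state purified and \Cref{thm:main} applied. This order of operations also dissolves what you flag as the ``main obstacle'': rather than first proving that $\Lambda_{i^*}$ is close to $W$ and then arguing that conditioning on acceptance is harmless, the paper conditions first and applies rigidity to the resulting state, so no separate decoupling argument is needed.
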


\begin{proof}
    Similar to \cite{bostanci2023unitary}, we can write the state of the verifiers register, before they do the measurements in \Cref{step:Uhlmann_verifier_measurement} as 
    \begin{equation*}
        \rho = (\Lambda \otimes \id)(\proj{C}^{\otimes m})\,,
    \end{equation*}
    where $\Lambda$ denotes the channel that the prover applies to registers $\reg{B}_1\ldots \reg{B}_m$, and importantly the state $\rho$ is permutation-invariant.  We can imagine the process of sampling a $m$-bit string $X$ by measuring each register $\reg{A}_i\reg{B}_i$ using the POVM $\{\proj{D}, \id - \proj{D}\}$ and writing down the answer as $X_{i}$.  For a subset $S \subseteq [m]$ of size $\left(\kappa - \frac{1}{r}\right)m$, let the event $E_{S}$ be the event that all of the bits $X_i$ with $i \in S$ are $1$, then we have the following
    \begin{align*}
        &\Pr\left[X_{i^*} = 0\ \Big|\ \sum_{i \neq i^*} X_{i} \geq \left(\gamma - \frac{\eta}{4\kappa r}\right)m\right]\\
        &\hspace{10mm}= \frac{1}{m} \sum_{i^*} \mathop{\mathbb{E}}_{|S| = \left(\gamma - \frac{1}{r}\right)m} \left[\Pr\left[X_{i^*} = 0 | X_i = 1\ \forall i \in S\right]\right]\\
        &\hspace{10mm}= \left(1 - \gamma + \frac{\eta}{4\kappa r}\right) + \frac{1}{m} \sum_{i^*} \mathop{\mathbb{E}}_{\substack{|S| = \left(\gamma - \frac{1}{r}\right)m\\ i^* \not\in S}} \left[\Pr\left[X_{i^*} = 0 | X_i = 1 \ \forall i \in S\right]\right]\\
        &\hspace{10mm}\leq \left(1 - \gamma + \frac{\eta}{4\kappa r}\right) + \frac{2}{m} \sum_{i^*}\mathop{\mathbb{E}}_{\substack{|S| = \left(\gamma - \frac{\eta}{4\kappa r}\right)m\\ i^* \not\in S}} \left[\Pr\left[X_{i^*} = 0 \land X_i = 1 \ \forall i \in S\right]\right]\\
        &\hspace{10mm}= \left(1 - \gamma + \frac{\eta}{4\kappa r}\right) + \frac{2}{m} \mathop{\mathbb{E}}_{|S| = \left(\gamma - \frac{1}{r}\right)m} \left[\sum_{i^* \not\in S} \Pr\left[X_{i^*} = 0 \land X_i = 1 \ \forall i \neq i^*\right]\right]\\
        &\hspace{10mm}\leq 1 - \gamma + \frac{\eta}{4\kappa r} + \frac{2}{m}\\
        &\hspace{10mm}\leq 1 - \gamma + \frac{\eta}{2 \kappa r}\,.
    \end{align*}
    Here we use the fact that $\rho$ is permutation-invariant to average over all permutations of the registers $\reg{A}_i\reg{B}_i$, then we use the fact that the probability of the verifier measuring a greater than $\gamma - \frac{\eta}{4\kappa r}$ fraction of accepts is at least $1/2$, and finally we use the fact that when restricting to the variables in a fixed $S$, all of the events $X_{i^*} = 0 \land X_i = 1\ \forall i \neq i^*$ are mutually exclusive, so their probabilities sum to $1$.  Finally we use the definition of $m$ to upper bound $2/m$.  

    Thus, we have a bound on the probability of measuring the $i^*$'th register in the state $\proj{D}$.  Put another way, there exist a purification of the provers channel, $R$, and states $\ket{\psi}$ and $\ket{\phi}$ such that
    \begin{equation*}
        \left|\bra{D}\bra{\phi} (\id \otimes R) \ket{C}\ket{\psi}\right|^2 \geq \gamma - \frac{\eta}{2\kappa r}\,.
    \end{equation*}
    Thus, by \Cref{thm:main}, we have that
    \begin{equation*}
        \|\id \otimes (W - R) \left(W^*W \otimes \proj{\psi}\right)\ket{C}\ket{\psi}\|\leq \frac{1}{r}\,.
    \end{equation*}
    Tracing out the registers used to purify the prover, we see that the verifier satisfies the soundness condition of a quantum interactive protocol for a distributional unitary synthesis problem.
\end{proof}

\begin{theorem}[$2$-round quantum interactive protocol for $\avgUhlmann$]
    For all $\gamma$, $\eta = 1/\poly(n)$ and $\kappa = \poly(r)$, there is a $2$-round $\class{avgUnitaryQIP}$ protocol with completeness error $2^{-n}$ and soundness $\frac{1}{2}$ for synthesizing instances of $\avgUhlmann_{\gamma}$ with spectral gap $\eta$ and obliqueness $\kappa$. 
\end{theorem}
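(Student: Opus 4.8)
I would prove the theorem by assembling the pieces already in place: \Cref{prot:uhlmann_two_round} together with its completeness guarantee (\Cref{lem:qip_completeness}) and soundness guarantee (\Cref{lem:qip_soundness}). Concretely, the plan is to (i) observe that the protocol can be cast as a genuine two-message protocol, (ii) check that the verifier is efficient under the stated hypotheses on $\eta$ and $\kappa$, and (iii) read off completeness and soundness from the two lemmas and match them against the formal $\class{avgUnitaryQIP}$ definition of~\cite{bostanci2023unitary}.

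For (i): although \Cref{prot:uhlmann_two_round} is written as a sequential loop over $i=1,\dots,m$, each iteration is a single round trip and the iterations are independent on the verifier's side, so the verifier may prepare all $m$ registers up front, send them to the prover in one message, allow the prover to apply an arbitrary joint channel $\Lambda$ to $\reg{B}_1\cdots\reg{B}_m$, receive them back in a second message, and then perform all measurements locally. This is exactly the picture already used in the proof of \Cref{lem:qip_soundness}, where the post-interaction state is written $(\Lambda\otimes\id)(\proj{C}^{\otimes m})$ and its permutation invariance is exploited; in particular both lemmas already concern this parallelized form. Hence the protocol is $2$-round (two messages). Completeness is unaffected because the honest prover simply applies the canonical Uhlmann transformation to each register (a product channel), and soundness is unaffected because an arbitrary joint $\Lambda$ is at least as powerful as any sequential strategy.

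For (ii): the only size parameter is $m = 8n(\kappa r/\eta)^2$. Under $\eta = 1/\poly(n)$ and $\kappa = \poly(r)$, we have $m = \poly(n,r)$, so the verifier --- which prepares $m$ copies of $\ket{C}$ via the circuit $C$, applies $D^*$ to $m-1$ of them, performs computational-basis measurements, and compares the acceptance count to the threshold $m(\gamma - \eta/4\kappa r)$ --- runs in time $\poly(n,r)$ and is thus an admissible $\class{avgUnitaryQIP}$ verifier. For (iii): \Cref{lem:qip_completeness} gives acceptance probability at least $1 - 2^{-n}$ for the honest prover, i.e.\ completeness error $2^{-n}$; \Cref{lem:qip_soundness} gives that any prover accepted with probability at least $\tfrac12$ forces the verifier's conditional output state to lie within trace distance $1/r$ of $(\Phi_x\otimes\id)(\proj{C})$ for some channel completion $\Phi_x$ of the canonical Uhlmann transformation $U_x$, which is precisely the soundness requirement (Definition~4.9 of~\cite{bostanci2023unitary}) for the distributional unitary synthesis problem $\avgUhlmann_\gamma$ with error $1/r$. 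Combining these with (i) and (ii) yields the stated $2$-round $\class{avgUnitaryQIP}$ protocol.

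The substantive work --- the averaging over register permutations and the invocation of robust rigidity (\Cref{thm:main}) --- already lives inside \Cref{lem:qip_soundness}; for the theorem itself the only delicate point is the bookkeeping in steps (i) and (ii), namely confirming that the parallelized protocol fits the formal $\class{avgUnitaryQIP}$ template and that every parameter stays polynomial, which is where I would concentrate the remaining care.
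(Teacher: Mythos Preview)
Your proposal is correct and matches the paper's approach: the paper's proof is a one-line invocation of \Cref{lem:qip_completeness}, \Cref{lem:qip_soundness}, and \Cref{prot:uhlmann_two_round}. Your additional bookkeeping in (i) and (ii)---parallelizing the $m$ sequential rounds into a single batch exchange and checking that $m=\poly(n,r)$ under the hypotheses---makes explicit what the paper leaves implicit (indeed, the soundness lemma already analyzes the parallelized state $(\Lambda\otimes\id)(\proj{C}^{\otimes m})$), so you have filled a small gap rather than taken a different route.
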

\begin{proof}
    Follows immediately from \Cref{lem:qip_completeness}, \Cref{lem:qip_soundness}, and the definition of \Cref{prot:uhlmann_two_round}.
\end{proof}

\subsection{Stability of approximate representations}
\label{sec:gowers-hatami}

To demonstrate the flexibility and power of the robust rigidity theorem for Uhlmann transformations, we show how a stability theorem for approximate representations of groups arises from the robust rigidity of Uhlmann transformations.  The proof draws inspiration from the proof of Gowers-Hatami theorem with non-uniform measures from \cite{metger2024succinct}, and similarly achieves a worse dimension blowup than the result of \cite{gowers2015inverse}. However, we believe that presenting the proof this way gives a framework for proving similar stability theorems, and highlights how the robust rigidity of Uhlmann transformations is a very general kind of stability theorem.  

As a matter of notation, for a probability measure $\mu$ over a finite set $S$, we use the notation $g \sim \mu$ to describe sampling $g$ from the distribution $\mu$, and $h \sim S$ to denote $h$ sampled uniformly at random from $S$.

\begin{definition}[$\epsilon$-approximate representation]
\label{def:approximate_representation}
    Let $G$ be a finite group, $\mu$ be a measure over $G$, and $\rho$ be a quantum state on register $\reg{B}$. A collection of unitaries $\{U_{g}\}_{g \in G}$ acting on $\reg{B}$ is an $\epsilon$-approximate representation of $G$ on $\ket{\psi}$ with measure $\mu$ if 
    \begin{equation*}
        \mathop{\mathbb{E}}_{\substack{g\sim \mu\\ h \sim G}}\left[\|U_{h}U_{g} - U_{hg}\|^2_{\rho}\right] \leq \epsilon\,,
    \end{equation*}
    where $\|A\|_{\rho} = \sqrt{\Tr(A^* A \rho)}$.
\end{definition}

\begin{remark}
\label{rem:epsilon_approximate_inner_product}
    The condition that $\{U_{g}\}_{g \in G}$ is an $\epsilon$-approximate representation on the state $\rho$ implies that for all purifications $\ket{\psi}$ of $\rho$, the following holds:
    \begin{equation*}
        \mathop{\mathbb{E}}_{\substack{g\sim \mu\\ h \sim G}} \left[\mathrm{Re}\left(\bra{\psi} \left(\id \otimes U_{hg}^{*} U_{h}U_{g}\right)\ket{\psi}\right)\right] \geq 1 - \epsilon / 2\,.
    \end{equation*}
\end{remark}

Our goal will be to prove that $\epsilon$-approximate representations (in the sense of \Cref{def:approximate_representation}) are close to exact representations, up to an isometry.  
Formally, we will prove the following theorem.
\begin{theorem}[Stability of approximate representations]
\label{thm:stability_approximate_representations}
    Let $G$ be a finite group, $\mu$ be a measure over $G$, and $\rho$ be a bipartite state on registers $\reg{B}$.  Let $\{U_{g}\}_{g \in G}$ be an $\epsilon$-approximate representation of $G$ on $\rho$ over $\mu$.
    Then there exists an exact representation of $G$, $\mathcal{R}$, and isometry $V$ such that
    \begin{equation*}
        \mathop{\mathbb{E}}_{g \sim \mu}\left[\|U_{g} - V^{*} \mathcal{R}(g) V\|_{\rho}^2\right] \leq \epsilon\,.
    \end{equation*}
\end{theorem}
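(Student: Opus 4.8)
The plan is to follow the ``mechanical template'' advertised in the introduction: build a pair of bipartite pure states $(\ket C,\ket D)$ whose canonical Uhlmann transformation encodes an honest representation of $G$, exhibit the given approximate representation as a near-optimal Uhlmann transformation for that pair, and then read the conclusion off \Cref{thm:main}. Concretely: fix a purification $\ket\psi_{\reg A\reg B}$ of $\rho$, and define the Gowers--Hatami isometry $V\colon\reg B\to\reg B\ot\reg H$ (with $\reg H\cong\C[G]$) by $V=\frac1{\sqrt{|G|}}\sum_{h\in G}U_h\ot\ket h$, so that $V^*V=\id$ using only that each $U_h$ is unitary. Let $\mathcal R(g)=\id_{\reg B}\ot R(g)$ on $\reg B\ot\reg H$, where $R(g)\ket h=\ket{hg^{-1}}$ is the right regular representation; this $\mathcal R$ is an exact representation and will be the ``ideal object.'' Introduce a classical register $\reg G$ indexed by $G$ together with a correlated copy $\reg G'$, and set, on $\reg A'=\reg A\reg G$ versus $\reg B'=\reg G'\reg B\reg H$,
\[
  \ket C=\sum_{g}\sqrt{\mu(g)}\,\ket g_{\reg G}\ket g_{\reg G'}\ot(\id_{\reg A}\ot V)\ket\psi,\qquad
  \ket D=\sum_{g}\sqrt{\mu(g)}\,\ket g_{\reg G}\ket g_{\reg G'}\ot(\id_{\reg A}\ot\mathcal R(g)V)\ket\psi .
\]
Since $V$ and each $\mathcal R(g)V$ are isometries on $\reg B$, both states have the \emph{same} reduced density matrix on $\reg A'$, so $\fidelity=1$, hence $\kappa=\eta=1$ in \Cref{thm:main} and $\delta(\epsilon)=2\epsilon$.

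The first substantive step is to identify the canonical Uhlmann transformation. The correlated copy in $\reg G$ kills off-diagonal terms when tracing out $\reg A'$, so $\Tr_{\reg A'}\ketbra DC=\sum_g\mu(g)\,\proj g_{\reg G'}\ot\mathcal R(g)V\rho V^*$; taking $\sgn$ block by block (each block is the product of the isometry $\mathcal R(g)V$ with $\rho V^*$) gives $W=\sum_g\proj g_{\reg G'}\ot\mathcal R(g)V\Pi_\rho V^*$, where $\Pi_\rho$ is the support projection of $\rho$ (simply $VV^*$ when $\rho$ is invertible). Two facts then follow by direct computation, using $(\id_{\reg A}\ot\Pi_\rho)\ket\psi=\ket\psi$: (i) $\id\ot W\ket C=\ket D$, and (ii) $W^*W=\sum_g\proj g\ot V\Pi_\rho V^*$ acts as the identity on $\ket C$. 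In other words, $W$ is exactly the representation $\mathcal R$ ``viewed through'' the isometry $V$.

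Next I would exhibit the approximate object. Let $R=\sum_g\proj g_{\reg G'}\ot R^{(g)}$ where each $R^{(g)}$ is a unitary on $\reg B\ot\reg H$ extending the partial isometry $VU_g\Pi_\rho V^*$, so $R^{(g)}V=VU_g$ on $\mathrm{supp}\,\rho$; this is a valid unitary on $\reg B'$. A one-line computation with the regular representation gives the identity $V^*\mathcal R(g)^*V=\E_{h\sim G}U_h^*U_{hg^{-1}}$, so that $\bra D\,\id\ot R\,\ket C=\E_{g\sim\mu}\bra\psi\,\id\ot\bigl(V^*\mathcal R(g)^*V\bigr)U_g\,\ket\psi=\E_{g\sim\mu,\,h\sim G}\bra\psi\,\id\ot U_{hg}^*U_hU_g\,\ket\psi$ after reparametrizing $h\mapsto hg$; by \Cref{rem:epsilon_approximate_inner_product} its real part is $\ge 1-\epsilon/2$. (Although \Cref{thm:main} is phrased for $R$ with real overlap, its proof only uses the real part of the overlap in the SDP relaxation, so it applies under a bound on $\mathrm{Re}\,\bra D\,\id\ot R\,\ket C$; alternatively one absorbs a phase at the cost of a constant.) Applying \Cref{thm:main} with $\epsilon/2$ in place of $\epsilon$ gives $\|\id\ot(W-R)W^*W\ket C\|^2\le\epsilon$; by (i) and (ii) the left-hand side equals $\|\ket D-\id\ot R\,\ket C\|^2=\E_{g\sim\mu}\|\mathcal R(g)V-VU_g\|_\rho^2$, and since $U_g-V^*\mathcal R(g)V=V^*(VU_g-\mathcal R(g)V)$ with $\|V^*\|_\infty\le1$, we conclude $\E_{g\sim\mu}\|U_g-V^*\mathcal R(g)V\|_\rho^2\le\epsilon$, which is the claim.

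The hard part will be the bookkeeping in the two middle steps: choosing the register structure so that the canonical Uhlmann transformation factors \emph{exactly} through the honest representation $\mathcal R$ rather than through the $\{U_g\}$ themselves, and pinning down the algebraic identity $V^*\mathcal R(g)^*V=\E_h U_h^*U_{hg^{-1}}$, which is the precise point at which the approximate-representation hypothesis is converted into near-optimality of an Uhlmann transformation. Everything else --- the reduction to $\fidelity=1$, the treatment of a non-invertible $\rho$ via $\Pi_\rho$ (harmless because $(\id\ot\Pi_\rho)\ket\psi=\ket\psi$), and the phase of the overlap --- is routine, and the worse dimension blowup than \cite{gowers2015inverse} comes simply from $V$ mapping into $\reg B\ot\C[G]$.
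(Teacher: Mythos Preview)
Your proof is correct and follows the same template as the paper---construct a pair of bipartite states from the approximate representation, identify an approximate Uhlmann transformation, and read off the conclusion from \Cref{thm:main} in the $\fidelity=1$ regime where $\eta=\kappa=1$---but the two constructions assign opposite roles to the ``ideal'' and ``approximate'' objects, and this is worth pointing out.

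In the paper, the states are
\[
\ket{C}=\tfrac{1}{\sqrt{|G|}}\sum_{g,h}\sqrt{\mu(g)}\,(\id\ot U_g)\ket\psi\ket{g}\ket{h},\qquad
\ket{D}=\tfrac{1}{\sqrt{|G|}}\sum_{g,h}\sqrt{\mu(g)}\,(\id\ot U_{hg})\ket\psi\ket{g}\ket{h},
\]
so the optimal Uhlmann transformation $\widetilde{W}=\sum_{g,h}(U_{hg}U_g^*)\ot\proj{g,h}$ is itself built from the \emph{approximate} representation, while the ``approximate'' Uhlmann unitary is the simple controlled operator $\sum_h U_h\ot\proj h$; the exact representation $\mathcal R$ and isometry $V$ only emerge in the final algebraic identity (\Cref{lem:rigidity_implies_stability}). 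By contrast, you bake $V$ into the states from the start and arrange that the \emph{canonical} Uhlmann transformation is literally $\sum_g\proj g\ot\mathcal R(g)V\Pi_\rho V^*$, i.e.\ the exact representation viewed through $V$; the approximate object is then a unitary extension of $VU_g\Pi_\rho V^*$. Your version matches more directly the ``canonical $=$ ideal object'' template advertised in \Cref{sec:intro}, and it avoids the paper's extra step of invoking exact rigidity to identify $\widetilde{W}$ with the canonical $W$ on the relevant subspace. The paper's version, on the other hand, keeps the states and the optimal transformation elementary (no $V$, no $\mathcal R$ until the very end) and obtains an \emph{equality} $\|\id\ot(U-\widetilde W)\ket C\|^2=\E_{g\sim\mu}\|U_g-V^*\mathcal R(g)V\|_\rho^2$ rather than your contraction step through $\|V^*\|_\infty\le1$. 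Both routes give the same final bound. Two small bookkeeping remarks on your write-up: for $g$ with $\mu(g)=0$ the corresponding block of $\Tr_{\reg A'}\ketbra DC$ vanishes, so strictly speaking $W$ has zero there rather than $\mathcal R(g)V\Pi_\rho V^*$ (harmless, since those $g$ do not occur in $\ket C$); and your appeal to the real-part relaxation in the SDP is exactly how the paper handles the overlap (see \Cref{cl:std_sdp}).
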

To prove the theorem, we will exhibit a pair of states $\ket{C}$ and $\ket{D}$ such that the rigidity of the Uhlmann transformation between them implies the stability theorem.  Let $\{U_{g}\}_{g \in G}$ be an approximate representation of a group $G$ in the sense of \Cref{def:approximate_representation}, and let $\ket{\psi}$ be a purification of $\rho$ (for example $\id \otimes \sqrt{\rho} \ket{\Omega}$, although any purification will suffice).  Then consider the following pair of states:
\begin{align*}
    \ket{C} &= \frac{1}{\sqrt{|G|}}\sum_{g, h \in G} \sqrt{\mu(g)}\left(\id \otimes U_{g}\right)\ket{\psi}_{\reg{AB}_1}\ket{g}_{\reg{B}_2}\ket{h}_{\reg{B}_3}\,,\\
    \ket{D} &= \frac{1}{\sqrt{|G|}}\sum_{g, h \in G} \sqrt{\mu(g)} \left(\id \otimes U_{hg}\right)\ket{\psi}_{\reg{AB}_1}\ket{g}_{\reg{B}_2}\ket{h}_{\reg{B}_3}\,.
\end{align*}
To prove the stability theorem, we prove the following lemmas.

\begin{lemma}[Uhlmann transformation for $\ket{C}$ and $\ket{D}$]
\label{lem:canonical_uhlmann_representation}
    The following transformation between $\ket{C}$ and $\ket{D}$ achieves the Uhlmann fidelity. 
    \begin{align*}
        \widetilde{W} = \sum_{g, h} \left(U_{hg} U_{g}^{*}\right)_{\reg{B}_1} \otimes \ket{g, h}\!\!\bra{g, h}_{\reg{B}_2\reg{B}_3}\,.
    \end{align*}
\end{lemma}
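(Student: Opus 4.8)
The plan is to verify by direct computation that $\widetilde W$ is a unitary acting only on $\reg{B}_1\reg{B}_2\reg{B}_3$ which maps the $\reg B$-side of $\ket C$ exactly onto $\ket D$; since an overlap of $1$ is the largest value any unitary can produce, $\widetilde W$ automatically attains the Uhlmann fidelity.

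First I would check that $\widetilde W$ is unitary. Each block $U_{hg}U_g^*$ is unitary, being a product of the unitaries $U_{hg}$ and $U_g^*$, and the operators $\ket{g,h}\!\bra{g,h}_{\reg{B}_2\reg{B}_3}$ are mutually orthogonal projectors that sum to $\id_{\reg{B}_2\reg{B}_3}$ (the vectors $\{\ket g\}_{g\in G}$ form an orthonormal basis of $\reg{B}_2\cong\C^{|G|}$, and likewise for $\reg{B}_3$). Hence $\widetilde W$ is block-diagonal with unitary blocks, so $\widetilde W^*\widetilde W = \id$.

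Next comes the main computation. Writing $\ket\psi = \sum_k \ket{a_k}_{\reg A}\ket{b_k}_{\reg{B}_1}$, the component of $\ket C$ indexed by $(g,h)$ is $\tfrac{1}{\sqrt{|G|}}\sqrt{\mu(g)}\sum_k \ket{a_k}_{\reg A}\otimes (U_g\ket{b_k})_{\reg{B}_1}\otimes\ket g_{\reg{B}_2}\otimes\ket h_{\reg{B}_3}$. On this component the $\reg{B}_2\reg{B}_3$ registers carry $\ket{g,h}$, so $\id_{\reg A}\otimes\widetilde W$ applies $(U_{hg}U_g^*)_{\reg{B}_1}$, turning $U_g\ket{b_k}$ into $U_{hg}U_g^*U_g\ket{b_k} = U_{hg}\ket{b_k}$, where I used $U_g^*U_g = \id$. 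Summing the transformed components over $g,h$ gives precisely $\ket D$, i.e. $\id_{\reg A}\otimes\widetilde W\,\ket C = \ket D$.

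Finally, $\bra D\,\id\otimes\widetilde W\,\ket C = \braket{D|D} = 1$. Since $\ket C,\ket D$ are unit vectors, $|\bra D\,\id\otimes U\,\ket C|\le 1$ for every unitary $U$; combined with Uhlmann's theorem (which identifies this maximum with $\fidelity(\rho,\sigma)$) it follows that $\fidelity(\rho,\sigma)=1$ and that $\widetilde W$ achieves it. Equivalently, a local unitary on $\reg B$ does not change the reduced state on $\reg A$, so $\ket C$ and $\ket D$ share the same reduction on $\reg A$ and hence have Uhlmann fidelity $1$. There is no real obstacle here; the only thing requiring care is the register bookkeeping — tracking which tensor factor $\widetilde W$, $U_g$, and $U_{hg}$ act on — and noticing that $U_g^*U_g$ collapses the middle of the expression.
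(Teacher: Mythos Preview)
Your proposal is correct and follows essentially the same approach as the paper: both simply compute $(\id\otimes\widetilde W)\ket C$ block by block, use $U_g^*U_g=\id$ to collapse the middle, and obtain $\ket D$. The paper omits the explicit check that $\widetilde W$ is unitary and the remark that overlap $1$ saturates Uhlmann's theorem, but these are the obvious observations you spelled out.
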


Note that from Uhlmann's theorem, the reduced states on register $\reg A$ of $\ket{C}$ and $\ket{D}$ are identical.  Furthermore, since the reduced states are the same, they commute with each other, and $\rho^{-1} \# \sigma$ is the projector onto their positive eigenspace, and all of its non-zero eigenvalues are $1$.  Therefore, both $\eta$ and $\kappa$ from \Cref{thm:main} are $1$.  Further, if we let $W$ be the \emph{canonical} Uhlmann transformation\footnote{
In general, $\widetilde{W}$ will not be the canonical Uhlmann transformation, since $\mu$ might not assign non-zero probability to all $g$.  However in the case when $\mu$ is the uniform distribution, $\widetilde{W}$ is the canonical Uhlmann transformation between $\ket{C}$ and $\ket{D}$.  
}, it is clear that $W^* W \ket{C} = \ket{C}$, because $W^* W$ is a projector and $W$ maps $\ket{C}$ to $\ket{D}$, implying $\ket{C}$ is in the image if $W$.  

Next, we show that the approximate representation corresponds to an approximate Uhlmann transformation between $\ket{C}$ and $\ket{D}$.  

\begin{lemma}
\label{lem:approximate_representation_is_uhlmann}
    Let $\{U_{g}\}$ is an $\epsilon$-approximate representation of $G$ under $\ket{\psi}$.  Consider the following unitary
    \begin{equation*}
        U = \sum_{h} \left(U_{h}\right)_{\reg{B}_1} \otimes \proj{h}_{\reg{B}_3}\,.
    \end{equation*}
    Then $U$ achieves the following.
    \begin{equation*}
        |\bra{D} \id \otimes U \ket{C}| \geq 1 - \frac{\epsilon}{2}\,.
    \end{equation*}
\end{lemma}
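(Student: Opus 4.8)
The plan is a direct computation: expand $\bra{D}\,\id \otimes U\,\ket{C}$ explicitly, collapse the sums using orthonormality of the basis states on registers $\reg{B}_2$ and $\reg{B}_3$, recognize the resulting expression as the average appearing in \Cref{rem:epsilon_approximate_inner_product}, and finish with the elementary inequality $|z| \geq \mathrm{Re}(z)$.

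First I would apply $\id \otimes U$ to $\ket{C}$. Since $U = \sum_h (U_h)_{\reg{B}_1} \otimes \proj{h}_{\reg{B}_3}$ acts as the controlled unitary $U_h$ on $\reg{B}_1$ conditioned on the $\reg{B}_3$-register being $\ket{h}$, we get
\[
    (\id \otimes U)\ket{C} = \frac{1}{\sqrt{|G|}} \sum_{g,h \in G} \sqrt{\mu(g)}\, (\id \otimes U_h U_g)\ket{\psi}_{\reg{AB}_1}\ket{g}_{\reg{B}_2}\ket{h}_{\reg{B}_3}~.
\]
Taking the overlap with $\ket{D}$ and using that $\braket{g'|g}_{\reg{B}_2} = \delta_{g,g'}$ and $\braket{h'|h}_{\reg{B}_3} = \delta_{h,h'}$ kills all cross terms, leaving
\[
    \bra{D}\,\id \otimes U\,\ket{C} = \frac{1}{|G|} \sum_{g,h \in G} \mu(g)\, \bra{\psi}\big(\id \otimes U_{hg}^* U_h U_g\big)\ket{\psi} = \mathop{\mathbb{E}}_{\substack{g \sim \mu \\ h \sim G}}\left[\bra{\psi}\big(\id \otimes U_{hg}^* U_h U_g\big)\ket{\psi}\right]~.
\]

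Finally, since $|z| \geq \mathrm{Re}(z)$ for every complex number $z$, and since the real part of the right-hand side is at least $1 - \epsilon/2$ by \Cref{rem:epsilon_approximate_inner_product}, we conclude $|\bra{D}\,\id \otimes U\,\ket{C}| \geq 1 - \epsilon/2$, as claimed. I expect no real obstacle here; the only points requiring a little care are correctly bookkeeping which subsystem each operator acts on (in particular that $U$ acts on $\reg{B}_1\reg{B}_3$ while leaving $\reg{A}$, $\reg{B}_2$ untouched) and verifying that $U$ is genuinely unitary, which follows because each $U_h$ is unitary and $\{\proj{h}\}_h$ is a complete set of orthogonal projectors.
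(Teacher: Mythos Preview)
Your proposal is correct and follows essentially the same approach as the paper: expand $\bra{D}\,\id\otimes U\,\ket{C}$ into the expectation $\mathop{\mathbb{E}}_{g\sim\mu,\,h\sim G}\bra{\psi}(\id\otimes U_{hg}^* U_h U_g)\ket{\psi}$, invoke \Cref{rem:epsilon_approximate_inner_product} to lower-bound the real part by $1-\epsilon/2$, and conclude via $|z|\geq\mathrm{Re}(z)$. Your write-up is slightly more explicit about the intermediate step of applying $U$ to $\ket{C}$ and collapsing the sums, but the argument is the same.
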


Finally, we show that the implication of the rigid robustness theorem corresponds exactly to being close to a specific exact representation of $G$.  

\begin{lemma}
\label{lem:rigidity_implies_stability}
    Define the following representation $\mathcal{R}$ of $G$ and an isometry $V$:
    \begin{equation*}
        \mathcal{R}(g) = \sum_{h} \ket{h}\!\!\bra{hg} \quad\text{and}\quad V = \frac{1}{\sqrt{|G|}}\sum_{h} \left(U_{h}\right)_{\reg{B}_1} \otimes \ket{h}_{\reg{B}_3}\,.
    \end{equation*}
    Then the following holds: 
    \begin{equation*}
        \|\id \otimes (U - \widetilde{W})\ket{C}\|^2 = \mathop{\mathbb{E}}_{g \sim \mu} \left[\|U_g - V^{*} \mathcal{R}(g) V\|_{\rho}^2\right]\,.
    \end{equation*}
\end{lemma}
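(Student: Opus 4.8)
The plan is to reduce both sides to the same scalar by expanding the squared norms, and then to account for the fact that the lemma's right-hand side is phrased through the compressed operator $V^*\mathcal R(g)V$ rather than through the "uncompressed'' difference $VU_g-\mathcal R(g)V$.

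\emph{Left side.} First I would compute $\id\ot U\ket C$ and $\id\ot\widetilde W\ket C$ directly from the definitions in \Cref{lem:canonical_uhlmann_representation} and \Cref{lem:approximate_representation_is_uhlmann}. Since $U$ acts as $U_h$ on $\reg{B}_1$ conditioned on $\reg{B}_3=\ket h$, and $\ket C$ already carries $U_g$ on $\reg{B}_1$, one gets $\id\ot U\ket C=\frac{1}{\sqrt{|G|}}\sum_{g,h}\sqrt{\mu(g)}\,(\id\ot U_hU_g)\ket\psi\ket g\ket h$, while $\widetilde W$ applies $U_{hg}U_g^*$ in the $(g,h)$ block, so (using $U_g^*U_g=\id$) $\id\ot\widetilde W\ket C=\ket D$ — reproving the relevant part of \Cref{lem:canonical_uhlmann_representation}. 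Subtracting, and using that the vectors $\ket g_{\reg{B}_2}\ket h_{\reg{B}_3}$ are orthonormal over distinct $(g,h)$ while $\Tr_{\reg A}\proj\psi=\rho$, I obtain $\|\id\ot(U-\widetilde W)\ket C\|^2=\E_{g\sim\mu,\,h\sim G}\big[\|U_hU_g-U_{hg}\|_\rho^2\big]$, which after expanding the square (each of $U_h,U_g,U_{hg}$ unitary) equals $2-2\,\E_{g\sim\mu,\,h\sim G}\big[\mathrm{Re}\,\Tr(U_{hg}^*U_hU_g\,\rho)\big]$.

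\emph{Right side.} Next I would identify $V^*\mathcal R(g)V$: from $\mathcal R(g)\ket h=\ket{hg^{-1}}$ one has $(\id\ot\mathcal R(g))V=\frac{1}{\sqrt{|G|}}\sum_k(U_{kg})_{\reg{B}_1}\ot\ket k_{\reg{B}_3}$, and composing with $V^*$ collapses the $\reg{B}_3$-sum to $V^*\mathcal R(g)V=\frac1{|G|}\sum_k U_k^*U_{kg}$. The cleanest route now is not to expand $\|U_g-V^*\mathcal R(g)V\|_\rho^2$ directly, but to note that because $V$ is an isometry and $\mathcal R(g)$ is unitary the operators $VU_g$ and $\mathcal R(g)V$ both have unit $\rho$-norm, so expanding $\|VU_g-\mathcal R(g)V\|_\rho^2$ gives $2-2\,\mathrm{Re}\,\Tr\big(U_g^*\,V^*\mathcal R(g)V\,\rho\big)$; using that $\rho$ is Hermitian (so $\mathrm{Re}\,\Tr(U_g^*U_k^*U_{kg}\rho)=\mathrm{Re}\,\Tr(U_{kg}^*U_kU_g\rho)$) this averages to $\E_{g\sim\mu}\big[\|VU_g-\mathcal R(g)V\|_\rho^2\big]=2-2\,\E_{g\sim\mu,\,k\sim G}\big[\mathrm{Re}\,\Tr(U_{kg}^*U_kU_g\rho)\big]$, which matches the left side exactly. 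Thus the core identity to establish is $\|\id\ot(U-\widetilde W)\ket C\|^2=\E_{g\sim\mu}\big[\|VU_g-\mathcal R(g)V\|_\rho^2\big]$.

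\emph{From $VU_g-\mathcal R(g)V$ to $U_g-V^*\mathcal R(g)V$.} The last step — which I expect to be the delicate point — is to pass from $\|VU_g-\mathcal R(g)V\|_\rho$ to the form $\|U_g-V^*\mathcal R(g)V\|_\rho$ appearing in the lemma. Decomposing $VU_g-\mathcal R(g)V=V\big(U_g-V^*\mathcal R(g)V\big)+(\id-VV^*)\mathcal R(g)V$ splits the operator into the image of $V$ and its orthogonal complement; the cross term vanishes since $V^*V=\id$ and $V^*(\id-VV^*)=0$, giving $\|VU_g-\mathcal R(g)V\|_\rho^2=\|U_g-V^*\mathcal R(g)V\|_\rho^2+\|(\id-VV^*)\mathcal R(g)V\|_\rho^2$. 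The stated equality therefore reduces to showing the correction term $\E_{g\sim\mu}\big[\|(\id-VV^*)\mathcal R(g)V\|_\rho^2\big]=\E_{g\sim\mu}\big[1-\|V^*\mathcal R(g)V\|_\rho^2\big]$ vanishes, i.e.\ that $\mathcal R(g)$ carries $V\sqrt\rho$ into the image of $V$ for $\mu$-almost every $g$; establishing (or precisely locating the hypothesis behind) this is the main obstacle. In any event, the same decomposition immediately yields the one-sided bound $\E_{g\sim\mu}\big[\|U_g-V^*\mathcal R(g)V\|_\rho^2\big]\le\|\id\ot(U-\widetilde W)\ket C\|^2$, which — combined with \Cref{lem:approximate_representation_is_uhlmann} and the rigidity bound of \Cref{thm:main} specialized to $\eta=\kappa=1$ (using $W^*W\ket C=\ket C$ and $\id\ot W\ket C=\id\ot\widetilde W\ket C=\ket D$) — is exactly what is needed to deduce \Cref{thm:stability_approximate_representations}.
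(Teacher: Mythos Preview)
Your route is the paper's route: compute $V^*\mathcal{R}(g)V=\frac{1}{|G|}\sum_h U_h^*U_{hg}$, expand $\|\id\ot(U-\widetilde W)\ket C\|^2$ to $2-2\,\E_{g\sim\mu,\,h\sim G}\big[\mathrm{Re}\,\bra\psi U_g^*U_h^*U_{hg}\ket\psi\big]$, and recognise the inner average over $h$ as $\bra\psi U_g^*\,(V^*\mathcal R(g)V)\,\ket\psi$. The paper then simply rewrites this as $\E_{g\sim\mu}\big[\|U_g-V^*\mathcal R(g)V\|_\rho^2\big]$ in a single line.

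You are right to flag that last step as the delicate point. Expanding $\|U_g-V^*\mathcal R(g)V\|_\rho^2$ gives $1-2\,\mathrm{Re}\,\Tr\big(U_g^*\,V^*\mathcal R(g)V\,\rho\big)+\|V^*\mathcal R(g)V\|_\rho^2$, so the paper's identification tacitly uses $\|V^*\mathcal R(g)V\|_\rho^2=1$; but $V^*\mathcal R(g)V$ is an average of unitaries and has $\|\cdot\|_\rho\le 1$, with strict inequality whenever the $U_h^*U_{hg}$ are not all equal on the support of $\rho$ --- i.e.\ precisely in the genuinely approximate case. Your orthogonal decomposition $VU_g-\mathcal R(g)V=V\big(U_g-V^*\mathcal R(g)V\big)-(\id-VV^*)\mathcal R(g)V$ isolates the discrepancy as the nonnegative term $1-\|V^*\mathcal R(g)V\|_\rho^2$ and yields the one-sided bound
\[
\E_{g\sim\mu}\big[\|U_g-V^*\mathcal R(g)V\|_\rho^2\big]\ \le\ \|\id\ot(U-\widetilde W)\ket C\|^2,
\]
which is all that the proof of \Cref{thm:stability_approximate_representations} actually consumes. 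So your write-up both reproduces the paper's computation and patches its final step; the exact equality $\|\id\ot(U-\widetilde W)\ket C\|^2=\E_{g\sim\mu}\big[\|VU_g-\mathcal R(g)V\|_\rho^2\big]$ that you establish along the way is the cleaner statement.
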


Combining all of these, we have the following simple proof of \Cref{thm:stability_approximate_representations}.
\begin{proof}[Proof of \Cref{thm:stability_approximate_representations}]
    From \Cref{lem:approximate_representation_is_uhlmann}, \Cref{lem:canonical_uhlmann_representation}, and \Cref{thm:main}, together with the fact that $\eta = \kappa = 1$ in \Cref{thm:main} for the states $\ket{C}$ and $\ket{D}$, we have that 
    \begin{equation*}
        \|\id \otimes (U - \widetilde{W})\ket{C}\|^2 = \|\id \otimes (U - W) W^* W \ket{C}\|^2 \leq \epsilon\,.
    \end{equation*}
    Here, $W$ is the canonical Uhlmann transformation, and we use the fact that, from the rigidity theorem, $\widetilde{W}$ is equal to $W$ when restricted to the image of $W$.  Then from \Cref{lem:rigidity_implies_stability}, we have that
    \begin{equation*}
        \mathop{\mathbb{E}}_{g \sim \mu} \left[\|U_g - V^{*} \mathcal{R}(g) V\|_{\rho}^2\right] \leq \epsilon\,,
    \end{equation*}
    as desired.
\end{proof}

We now fill in the missing proofs.

\begin{proof}[Proof of \Cref{lem:canonical_uhlmann_representation}]
    To prove the lemma, we apply the definition of $\widetilde{W}$. 
    \begin{align*}
        \widetilde{W}\ket{C} &= \widetilde{W} \left(\frac{1}{\sqrt{|G|}} \sum_{g, h \in G} \sqrt{\mu(g)}(\id \otimes U_{g}) \ket{\psi}_{\reg{AB}_1}\ket{g}_{\reg{B}_2}\ket{h}_{\reg{B}_3}\right)\\
        &= \frac{1}{\sqrt{|G|}} \sum_{g, h \in G} \sqrt{\mu(g)} (\id \otimes U_{hg} U_{g}^{*}U_{g}) \ket{\psi}_{\reg{AB}_1}\ket{g}_{\reg{B}_2}\ket{h}_{\reg{B}_3}\\
        &= \frac{1}{\sqrt{|G|}} \sum_{g, h \in G} \sqrt{\mu(g)} (\id \otimes U_{hg}) \ket{\psi}_{\reg{AB}_1}\ket{g}_{\reg{B}_2}\ket{h}_{\reg{B}_3}\\
        &= \ket{D}\,.\qedhere
    \end{align*}  
\end{proof}

\begin{proof}[Proof of \Cref{lem:approximate_representation_is_uhlmann}]
    We can expand out the real component of the inner product between $\ket{D}$ and $U$ applied to $\ket{C}$ as follows. 
    \begin{align*}
        \mathrm{Re}\left(\bra{D} \id \otimes U \ket{C}\right) &= \frac{1}{|G|}\sum_{g, h \in G} \mu(g) \mathrm{Re}\left(\bra{\psi} (\id \otimes U_{hg}^{*} U_{h}U_{g})\ket{\psi}\right)\\
        &= \mathop{\mathbb{E}}_{\substack{g \sim \mu\\ h \in G}}\left[\mathrm{Re}\left(\bra{\psi} (\id \otimes U_{hg}^{*} U_{h}U_{g})\ket{\psi}\right)\right]\\
        &\geq 1 - \frac{\epsilon}{2}\,.
    \end{align*}
    Here in the final line we use \Cref{rem:epsilon_approximate_inner_product} and the fact that $\{U_{g}\}_{g \in G}$ is an $\epsilon$-approximate representation under $\ket{\psi}$.  Since $|\bra{D} \id \otimes U \ket{C}| \geq \mathrm{Re}(\bra{D} \id \otimes U \ket{C})$, this completes the proof.
\end{proof}

\begin{proof}[Proof of \Cref{lem:rigidity_implies_stability}]
    We first note that
    \begin{align*}
        V^{*} \mathcal{R}(g) V &= \frac{1}{|G|}\sum_{h, h'} \left(U^{*}_{h} \otimes \bra{h}\right) \mathcal{R}(g) \left(U_{h'} \otimes \ket{h'}\right)\\
        &= \frac{1}{|G|} \sum_{h, h'} \left(U^{*}_{h} \otimes \bra{hg}\right) \left(U_{h'} \otimes \ket{h'}\right)\\
        &= \frac{1}{|G|} \sum_{h} U^{*}_{h} \cdot U_{hg}\,.
    \end{align*}
    We can interpret this as the convolution of the function $f(g) = U_{g}$ with itself.  Here in the second to last line, we re-index the sum over $h'$ by shifting every element by $g$.  Note that if $U_{g}$ was an exact representation of $G$, then we would have $U^{*}_{h}U_{hg} = U_{h}^{*} U_{h} U_{g} = U_{g}$.  
    
    Now to prove the lemma, we expand the definition of the $2$-norm to get the following
    \begin{align*}
        &\|\id \otimes (U - \widetilde{W}) \ket{C}\|^{2} \\
        &\hspace{10mm}=\bra{C} \left(\id \otimes \left((U - \widetilde{W})^{*} (U - \widetilde{W}) \right)\right) \ket{C}\\
        &\hspace{10mm}= 2 - 2 \mathrm{Re}\left(\bra{C} \left(\id \otimes U^{*} \widetilde{W}\right) \ket{C}\right)\\
        &\hspace{10mm}= 2 - \frac{2}{|G|} \sum_{g, g', h, h'} \sqrt{\mu(g)\mu(g')}\mathrm{Re}\left(\bra{\psi} \bra{g, h} \left(U_{g}^{*} \cdot U^{*} \cdot \widetilde{W} \cdot U_{g'} \right) \ket{\psi} \ket{g', h'}\right)\\
        &\hspace{10mm}= 2 - \frac{2}{|G|} \sum_{g, g', h, h'} \sqrt{\mu(g)\mu(g')}\mathrm{Re}\left(\bra{\psi} \bra{g, h} \left(U_{g}^{*} \cdot U_{h}^{*} \cdot U_{h'g'} \right) \ket{\psi} \ket{g', h'}\right)\\
        &\hspace{10mm}= 2 - \frac{2}{|G|}\sum_{g, h}\mu(g)\mathrm{Re}\left(\bra{\psi} U^{*}_{g} \cdot U_{h}^{*} \cdot U_{hg} \ket{\psi}\right)\\
        &\hspace{10mm}= 2 - 2 \sum_{g}\mu(g)\mathrm{Re}\left(\bra{\psi} U_{g}^{*} \cdot \left(\frac{1}{|G|} \sum_{h}U^{*}_{h} \cdot U_{hg}\right) \ket{\psi}\right)\\
        &\hspace{10mm}= 2 - 2 \sum_{g} \mu(g)\mathrm{Re}\left(\bra{\psi} U_{g}^{*} \cdot \left(V^{*} \mathcal{R}(g) V\right) \ket{\psi}\right)\\
        &\hspace{10mm}= \mathbb{E}_{g \sim \mu} \left[\|U_{g} - V^{*} \mathcal{R}(g) V\|_\rho^2\right]\,.
    \end{align*}
    Here we apply the definition on $\widetilde{W}$ and using the fact that $\mathrm{Re}$ is linear to move the sum over $h$ inside, and using the definition to replacing $\sum_{h} U^{*}_{h} U_{hg}$ with $V^{*} \mathcal{R}(g) V$.  
\end{proof}

\section*{Acknowledgments} 

JB and HY are supported by AFOSR award FA9550-23-1-0363, NSF CAREER award CCF-2144219, and the Sloan Foundation. TM acknowledges support from SNSF Grant No. 20QU-1\_225171 and NCCR SwissMAP.

\printbibliography
\end{document}